\pgfplotsset{compat=newest}
\newcommand{\expect}[1]{{\mathbb{E}}\left[#1\right]}
\newcommand{\expcnd}[2]{{\mathbb{E}}\left[ #1 \;\middle|\; #2\right]}
\newcommand{\expcndn}[3]{{\mathbb{E}^{#1}}\left[ #2 \;\middle|\; #3\right]}
\newcommand{\lr}[1]{\left( #1 \right)}
\newcommand{\rmd}{\mathrm{d}}
\newcommand{\bbE}{\mathbb{E}}\newcommand{\rme}{\mathrm{e}}
\newcommand{\bbN}{\mathbb{N}}
\newcommand{\bbP}{\mathbb{P}}
\newcommand{\bbR}{\mathbb{R}}
\newcommand{\sfA}{\mathsf{A}}
\newcommand{\sfB}{\mathsf{B}}
\newcommand{\bfe}{\mathbf{e}}
\newcommand{\sfN}{\mathsf{N}}
\newcommand{\bfp}{\mathbf{p}}
\newcommand{\cA}{\mathcal{A}}
\newcommand{\cB}{\mathcal{B}}
\newcommand{\cD}{\mathcal{D}}
\newcommand{\cP}{\mathcal{P}}
\newcommand{\cX}{\mathcal{X}}
\newcommand{\del}{\partial}
\newcommand{\D}{D}
\newcommand{\kl}[2]{{\D}\left(\left.#1 \, \right\| #2 \right)}
\newcommand{\supp}{{\mathsf{supp}}}
\theoremstyle{mystyle}
\newtheorem{theorem}{Theorem}
\theoremstyle{mystyle}
\newtheorem{lemma}{Lemma}
\theoremstyle{mystyle}
\newtheorem{prop}{Proposition}
\theoremstyle{mystyle}
\newtheorem{corollary}{Corollary}
\theoremstyle{mystyle}
\theoremstyle{remark}
\theoremstyle{mystyle}
\theoremstyle{mystyle}
\theoremstyle{mystyle}
\theoremstyle{discussion}
\theoremstyle{mystyle}
\theoremstyle{mystyle}
\begin{document}

\title{Binomial Channel: On the Capacity-Achieving Distribution and Bounds on the Capacity   }

\author{
\IEEEauthorblockN{Ian Zieder$^{*}$, Antonino Favano$^{\dagger}$, Luca Barletta$^{\dagger}$, and Alex Dytso$^{**}$}
$^{*}$ New Jersey Institute of Technology, Newark, NJ 07102, USA. Email: ihz3@njit.edu \\
$^{\dagger}$ Politecnico di Milano, Milano, 20133, Italy. Email: $\{$antonino.favano, luca.barletta$\}$@polimi.it \\
$^{**}$ Qualcomm Flarion Technologies, Bridgewater,  NJ 08807, USA.
Email: odytso2@gmail.com 
}

\maketitle

\begin{abstract}
This work considers a binomial noise channel. The paper can be roughly divided into two parts. The first part is concerned with the properties of the capacity-achieving distribution. In particular, for the binomial channel, it is not known if the capacity-achieving distribution is unique since the output space is finite (i.e., supported on integers $0, \ldots, n)$ and the input space is infinite (i.e., supported on the interval $[0,1]$), and there are multiple distributions that induce the same output distribution. This paper shows that the capacity-achieving distribution is unique by appealing to the total positivity property of the binomial kernel.   In addition, we provide upper and lower bounds on the cardinality of the support of the capacity-achieving distribution. Specifically, an upper bound of order   $ \frac{n}{2}$ is shown, which improves on the previous upper bound of order $n$ due to Witsenhausen. Moreover, a lower bound of order $\sqrt{n}$ is shown. 
Finally, additional information about the locations and probability values of the support points is established. 

The second part of the paper focuses on deriving upper and lower bounds on capacity. In particular, firm bounds are established for all $n$ that show that the capacity scales as $\frac{1}{2} \log(n)$.
 \end{abstract}

\section{Introduction}

We consider a  channel for which the relationship between the input $X \in [0,1]$ and the output $Y \in  \{0,\ldots, n \}$ is described by the binomial distribution:
\begin{equation}
    P_{Y|X}(y|x) = {n \choose y} x^y (1-x)^{n-y}. 
\end{equation}

In this work, we are interested in studying the capacity of this channel as a function of the number of trials $n$, that is
\begin{equation}\label{eq:capacity_opt_problem}
    C(n) = \max_{ P_X: \,  X \in [0,1] } I(X;Y). 
\end{equation}
In addition to studying capacity, we are also interested in studying properties of an optimal capacity-achieving distribution distribution denoted by $P_{X^\star}$. 

\subsection{Literature Review}
\label{sec:Lit}
The binomial channel naturally arises in molecular communications and the interested reader is referred to \cite{farsad2020capacities,einolghozati2013design,farsad2017capacity,jamali2019channel} and references therein. The channel is also useful in the study of the deletion channel \cite{levenshtein1966binary,cheraghchi2019capacity}.  

The capacity of the binomial channel was first considered in \cite{komninakis2001capacity} where the authors used minimax redundancy theorem in \cite{xie1997minimax} to argue that asymptotically the capacity scales as $\frac{1}{2} \log n $. 
The exact capacity for the $n=1$ case was computed in \cite{farsad2020capacities} where binary distribution with support on $\{0,1 \}$ was shown to be capacity-achieving.  To the best of our knowledge, there are no firm  bounds on the capacity of the binomial channel.

Properties of the capacity-achieving distribution have also been looked at. For example, the authors of \cite{farsad2020capacities} have designed an algorithm for computing capacity and a capacity-achieving distribution by using a dual representation of the maximization problem. It is also known that, by using the Witsenhausen technique \cite{WitsenhausenBOund}, there exists a capacity-achieving distribution with at most $n+1$ mass points. We note, however, that the Witsenhausen technique does not guarantee that the optimal input distribution is unique. In fact, for the binomial channel, uniqueness has not been shown; note that uniqueness is important not just for theoretical purposes, but also for algorithmic purposes.  A conventional way to show that the capacity-achieving distribution is unique is by establishing that the mutual information is a \emph{strictly} concave function of the input distribution. However, as will be shown by an example, for the binomial channel, the mutual information is not strictly concave. 
Other properties, such as location of the support points, are also not well understood.  The main goal of this work is to close some of these gaps. 

In this work, we also rely on estimation theoretic quantities such as the conditional expectation. For the estimation theoretic treatments of the binomial channel, the interested reader is referred to \cite{taborda2014information,BD_Low}. Recently, deterministic identification capacity for the binomial channel has been studied in \cite{salariseddigh2023deterministic}. 

\subsection{Outline and Contributions}
The paper outline and contributions are as follows. The remaining part of Sec.~I is dedicated to notation.  Sec.~\ref{sec:Preliminaries} presents the required preliminary and ancillary results. In particular, Sec.~\ref{sec:KKT} presents the KKT conditions and some small yet important consequences of these conditions, and Sec.~\ref{sec:EstPrelim} establishes properties of estimation theoretic quantities, such as the conditional mean, that will be needed in our analysis.  
Sec.~\ref{sec:main_I} constitutes the first part of our main results and focuses on properties of capacity-achieving distributions.  In particular, in Sec.~\ref{sec:symmetry} it is shown that all capacity-achieving distributions are symmetric around $\frac{1}{2}$; in Sec.~\ref{sec:discretness}  it is shown that all capacity-achieving distributions are discrete.  In Sec.~\ref{sec:uniqness}, the discreetness is used to argue strong concavity of the mutual information, which implies uniqueness of $P_{X^\star}$.   Sec.~\ref{sec:location} provides additional information about the location of the support points.  Sec.~\ref{sec:probabilities} provides bounds on the probability values. Sec.~\ref{sec:card_bounds} improves the upper bound $n+1$ on cardinality of the support, due to Witsenhausen, to the bound of order $\frac{n}{2}$, and  provides a lower bound of order~$\sqrt{n}$. Sec.~\ref{sec:cap_bounds} constitutes the second part of our main results and focuses on bounds on the capacity.  In particular, firm lower and upper bounds of order $\frac{1}{2} \log(n)$ are derived.

\subsection{Notation}

All logarithms are to the base $\rme$. Deterministic scalar quantities are denoted by lower-case letters and random variables are denoted by uppercase letters.  
For a random variable $X$ and every measurable subset $\cA \subseteq \bbR$ the probability distribution is written as $P_{X}(\cA) = \bbP[X \in \cA]$. The support set of $P_X$ is
\begin{equation}
\supp(P_{X})=\{x:  \text{ $P_{X}( \mathcal{D})>0$ for every open set $ \mathcal{D} \ni x $ } \}. 
\end{equation} 
When $X$ is discrete, we write $P_X(x)$ for $P_X(\{x\})$, i.e., $P_X$ is a probability mass function (pmf). The relative entropy of the distributions $P$ and $Q$ is $\kl{P}{Q}$. 

Given a function $f: \bbR \to \bbR$ and a set $\cB \subseteq \bbR$, the number of zeros of $f$ in $\cB$ is given by
\begin{equation}
\sfN\left(\cB; f\right) = \left|     \{x : f(x) =0 \} \cap \cB \right |,  
\end{equation}
where $| \cdot |$ denotes the cardinality. 

The set of the first $n$ positive integers is denoted by $[n]$. The entry in position $(i,j)$ of matrix $\sfA$ is denoted by $[\sfA]_{i,j}$.

\section {Preliminaries}
\label{sec:Preliminaries}
We now presents some of the  tools needed in our analysis. 

\subsection{KKT Conditions }
\label{sec:KKT}

The key that allows one to study properties of the support  of an optimal input distribution is the following lemma which contains the KKT conditions for our optimization problem~\cite{CISS_2018}.

\begin{lemma}\label{lem:KKT}
     $P_{X^\star}$ is a capacity-achieving input distribution if and only if the following conditions hold:  
\begin{align}
    i(x;P_{Y^\star}) &\le C(n), \qquad x \in [0,1], \\
    i(x;P_{Y^\star}) &= C(n), \qquad x \in \supp(P_{X^\star}) \label{eq:KKT_equality}
\end{align}
where   $P_{X^\star} \to P_{Y|X} \to P_{Y^\star}$ (i.e., the optimal output distribution) and 
\begin{equation}
    i(x;P_{Y^\star})=\kl{P_{Y|X}(\cdot|x)}{P_{Y^\star}}. 
  \end{equation}
  
\end{lemma}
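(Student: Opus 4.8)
The plan is to treat the mutual information as a concave functional $Q \mapsto I(Q)$ of the input law $Q$ over the convex set $\mathcal{P}([0,1])$ of Borel probability measures on $[0,1]$, and to read the two displayed conditions as the first-order (supporting-hyperplane) optimality conditions of this concave program. The engine of the whole argument is a single identity: for any input law $Q$ with induced output $Q_Y$ (i.e.\ $Q \to P_{Y|X} \to Q_Y$),
\begin{equation}
\int_0^1 i(x;P_{Y^\star})\, \mathrm{d}Q(x) = I(Q) + \kl{Q_Y}{P_{Y^\star}}.
\end{equation}
This follows by writing $i(x;P_{Y^\star}) = \kl{P_{Y|X}(\cdot|x)}{Q_Y} + \sum_y P_{Y|X}(y|x)\log\frac{Q_Y(y)}{P_{Y^\star}(y)}$ and integrating against $Q$. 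Specializing to $Q = P_{X^\star}$, where $Q_Y = P_{Y^\star}$ so the relative-entropy term vanishes, yields the basic identity $\int i(x;P_{Y^\star})\,\mathrm{d}P_{X^\star}(x) = I(P_{X^\star})$.

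For sufficiency I would assume both conditions and argue directly from this identity. Nonnegativity of relative entropy gives $I(Q) \le \int i(x;P_{Y^\star})\,\mathrm{d}Q(x)$ for every $Q$; combining this with the pointwise bound $i(x;P_{Y^\star}) \le C(n)$ shows $I(Q) \le C(n)$ for all input laws $Q$, so no distribution exceeds the value $C(n)$. On the other hand, integrating the equality $i(x;P_{Y^\star}) = C(n)$ over $\supp(P_{X^\star})$ against $P_{X^\star}$ and invoking the basic identity gives $I(P_{X^\star}) = C(n)$. Hence $P_{X^\star}$ attains the maximum and is capacity-achieving.

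For necessity I would use a standard perturbation. Fix $x_0 \in [0,1]$, set $P_\lambda = (1-\lambda)P_{X^\star} + \lambda\,\delta_{x_0}$, and compute the one-sided directional derivative
\begin{equation}
\left.\frac{\mathrm{d}}{\mathrm{d}\lambda} I(P_\lambda)\right|_{\lambda=0^+} = i(x_0;P_{Y^\star}) - I(P_{X^\star}).
\end{equation}
Optimality of $P_{X^\star}$ forces this derivative to be nonpositive for every $x_0$, which (using $I(P_{X^\star}) = C(n)$) is precisely the first condition $i(x_0;P_{Y^\star}) \le C(n)$. The second condition then follows from the first: since $i(x;P_{Y^\star}) \le C(n)$ everywhere while $\int i(x;P_{Y^\star})\,\mathrm{d}P_{X^\star}(x) = I(P_{X^\star}) = C(n)$, the nonnegative integrand $C(n) - i(x;P_{Y^\star})$ must vanish $P_{X^\star}$-almost everywhere, hence on $\supp(P_{X^\star})$.

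The main obstacle is the rigorous justification of the variational step over the infinite-dimensional set $\mathcal{P}([0,1])$: establishing existence of a maximizer via weak-$*$ compactness of $\mathcal{P}([0,1])$ together with upper semicontinuity of $Q \mapsto I(Q)$, and differentiating $I(P_\lambda)$ under the (finite) $y$-sum to obtain the directional-derivative formula. Because the output alphabet $\{0,\ldots,n\}$ is finite, the relevant output pmfs are bounded and the maps $p \mapsto p\log p$ are smooth on the range in play, so the interchange of limit and finite sum is routine and the derivative reduces to elementary calculus. The only point requiring care is the pair of endpoints $x \in \{0,1\}$, where $P_{Y|X}(\cdot|x)$ degenerates to a point mass; I would dispose of this by a direct continuity argument, noting that $i(x;P_{Y^\star})$ remains well-defined and finite there as long as $P_{Y^\star}$ charges $y = 0$ and $y = n$, which it does.
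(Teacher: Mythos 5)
Your proposal is correct. Note that the paper does not actually prove this lemma --- it states it as a known result and cites a reference for it --- so there is no in-paper argument to compare against; what you have written is the standard derivation (the key identity $\int i(x;P_{Y^\star})\,\mathrm{d}Q(x) = I(Q) + \kl{Q_Y}{P_{Y^\star}}$, sufficiency by nonnegativity of relative entropy, necessity by the one-sided derivative of $I\bigl((1-\lambda)P_{X^\star}+\lambda\delta_{x_0}\bigr)$), and it is sound. Two small points deserve to be made explicit rather than left implicit. First, the step ``vanishes $P_{X^\star}$-almost everywhere, hence on $\supp(P_{X^\star})$'' requires continuity of $x\mapsto i(x;P_{Y^\star})$: a nonnegative function that vanishes a.e.\ need not vanish at every support point otherwise. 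Continuity holds here because the output alphabet is finite and $P_{Y^\star}(y)>0$ for every $y\in\{0,\dots,n\}$, so each term $P_{Y|X}(y|x)\log\frac{P_{Y|X}(y|x)}{P_{Y^\star}(y)}$ is continuous in $x$ (with the convention $0\log 0=0$). Second, the positivity of $P_{Y^\star}$ at \emph{all} output symbols --- not just $y=0$ and $y=n$, which is all you mention --- needs a one-line justification; it follows, for instance, from your own directional-derivative formula, since $P_{Y^\star}(y)=0$ with $P_{Y|X}(y|x_0)>0$ for some $x_0$ would make the right-hand derivative $+\infty$ and contradict optimality (or, more simply, $P_{Y^\star}(y)=0$ for an interior $y$ forces $\supp(P_{X^\star})\subseteq\{0,1\}$, which is suboptimal for $n\ge 2$). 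With these two remarks added, the argument is complete.
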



We also define the following set, which would be useful in our study of the uniqueness of $P_{X^\star}$:
\begin{equation}\label{eq:zeros_KKT}
\cA_n =\{x \in [0,1]:\:  i(x;P_{Y^\star}) -C(n)=0 \} .
\end{equation}
The importance of  $\cA_n$ is demonstrated in the following lemma. 
\begin{lemma}\label{lem:Uniquness_A_n} For a given $n$
    \begin{itemize}
    \item $\cA_n$ is unique; and 
    \item  $\supp(P_{X^\star} )  \subseteq \cA_n$ for every $P_{X^\star}$. 
    \end{itemize}
\end{lemma}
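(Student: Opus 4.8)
The plan is to treat the two bullets separately, placing essentially all of the work in the first one. For the second bullet I would argue directly from the KKT conditions: fix any capacity-achieving $P_{X^\star}$ and let $P_{Y^\star}$ be the output distribution it induces. By the equality condition \eqref{eq:KKT_equality} of Lemma~\ref{lem:KKT}, every $x \in \supp(P_{X^\star})$ satisfies $i(x;P_{Y^\star}) = C(n)$, i.e., $i(x;P_{Y^\star}) - C(n) = 0$, so $x \in \cA_n$ by the definition \eqref{eq:zeros_KKT}. This yields $\supp(P_{X^\star}) \subseteq \cA_n$. The one delicate point is that $\cA_n$ is built from \emph{the} optimal output distribution, so for the inclusion to hold simultaneously for every capacity achiever one needs the output distribution to be the same regardless of which $P_{X^\star}$ is used --- which is exactly the content of the first bullet.

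For the first bullet, I would observe that $\cA_n$ depends on the choice of $P_{X^\star}$ only through the induced output distribution $P_{Y^\star}$ and through the scalar $C(n)$; the latter is unique as the optimal value of \eqref{eq:capacity_opt_problem}, so it suffices to prove that the capacity-achieving \emph{output} distribution is unique. Here I would use the decomposition $I(X;Y) = H(Y) - H(Y\mid X)$. Since the output alphabet $\{0,\ldots,n\}$ is finite, the output entropy $H(Y)$ is a strictly concave functional of the output pmf, whereas $H(Y\mid X) = \int_0^1 H(Y\mid X=x)\,\mathrm{d}P_X(x)$ is linear in $P_X$, and the map $P_X \mapsto P_Y$ is itself a linear (channel) map. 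Thus, if two optimal inputs $P_{X_1^\star}, P_{X_2^\star}$ induced \emph{distinct} outputs $Q_1 \neq Q_2$, then the admissible input $\tfrac12(P_{X_1^\star}+P_{X_2^\star})$ would induce the output $\tfrac12(Q_1+Q_2)$ and, by strict concavity of $H$ on the simplex together with linearity of the conditional-entropy term, achieve $I > \tfrac12 I_1 + \tfrac12 I_2 = C(n)$, contradicting optimality. Hence $P_{Y^\star}$ is unique, and with it $\cA_n$.

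The main obstacle is making this strict-concavity step airtight in the mixed continuous/finite setting of the channel: the input ranges over the continuum $[0,1]$ while the output is finite, so I would be careful to note that all the relevant functionals are finite (the output entropy is bounded because $|\{0,\ldots,n\}| = n+1$, and $x \mapsto H(Y\mid X=x)$ is bounded on $[0,1]$), that convex combinations of input distributions are again valid inputs, and that equality in the concavity bound forces $Q_1 = Q_2$. It is worth emphasizing that this part of the argument uses no structure of the binomial kernel beyond finiteness of the output alphabet; the total-positivity properties invoked elsewhere in the paper are not needed to pin down $\cA_n$.
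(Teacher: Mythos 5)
Your proof is correct, and the second bullet is handled exactly as in the paper (the KKT equality condition forces every support point into $\cA_n$). The only divergence is in the first bullet: the paper does not prove uniqueness of $P_{Y^\star}$ and $C(n)$ at all --- it simply cites \cite{kemperman1974shannon} for that fact and observes that $\cA_n$ depends only on these two unique objects. You instead supply the standard self-contained argument: decompose $I(X;Y)=H(Y)-H(Y\mid X)$, note that $P_X\mapsto P_Y$ and $P_X\mapsto H(Y\mid X)$ are affine while $H(\cdot)$ is strictly concave on the finite simplex $\cP_{\{0,\dots,n\}}$, and derive a contradiction from two optimal inputs inducing distinct outputs (using $I_1=I_2=C(n)$ so that the mixture would exceed capacity). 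This is exactly the proof behind the cited result, so the mathematics is the same; what your version buys is self-containment and the explicit observation that only finiteness of the output alphabet is used --- which dovetails nicely with the paper's later point that strict concavity fails in $P_X$ but is restored at the level of $P_Y$. One presentational remark: it is worth stating explicitly that the uniqueness of $C(n)$ is immediate (it is the optimal value of a single optimization problem), so that the entire burden really does sit on the uniqueness of $P_{Y^\star}$, as you note.
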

\begin{proof}
    Note that, for a given $n$, both $P_{Y^\star}$ and $C(n)$ are unique (even if $P_{X^\star}$ is not unique) \cite{kemperman1974shannon} and, since $\cA_n$ only depends on these quantities, the uniqueness follows. 

    The second part follows from the KKT conditions in Lemma~\ref{lem:KKT}, because $x \in \supp(P_{X^\star} )$ implies $x \in \cA_n$. 
\end{proof}


\subsection{Estimation Theoretic Preliminaries}
\label{sec:EstPrelim}
Estimation theoretic quantities will play an important role in our analysis. In what follows, the quantity $\expcndn{n-1}{f(Y)}{X=x}$ denotes expectation with respect to a binomial distribution with $n-1$ trials and success probability $x$ per trial, and 
\begin{equation}
    \ell_b(x,\hat{x}) = x \log\left(\frac{x (1-\hat{x})}{(1-x)\hat{x}}\right)-\frac{x-\hat{x}}{1-\hat{x}}, \quad (x,\hat{x}) \in (0,1)^2
\end{equation}
represents the Bregman divergence for the binomial channel.

We now summarize some of these preliminary results. 

\begin{prop}\label{prop:derivatives_info_density} For $n\ge 2$ and $x \in (0,1)$, we have
\begin{align}
    i'(x;P_Y)  &= \frac{n}{x} \expcndn{n-1}{\ell_b(x,\expcndn{n-1}{X}{Y})}{X=x} \nonumber\\
    &\quad +\frac{n}{x} \expcndn{n-1}{\frac{x-\expcndn{n-1}{X}{Y}}{1-\expcndn{n-1}{X}{Y}}}{X=x}\label{eq:inf_den_derivative_with_bregman_1}
\end{align}
and
\begin{equation}
i''(x;P_Y) = \frac{n}{x(1-x)}+\frac{1}{(1-x)^2}  G(x) \label{eq:i_second_der} 
\end{equation}
where $G(x)$ is defined in \eqref{eq:G_Definition} (at the top of next page).
\begin{figure*}
\begin{equation}
G(x) = \expcnd{(n-Y)(n-Y-1)\log\frac{\expcnd{X}{Y=Y}}{\expcnd{1-X}{Y=Y+1}}\frac{\expcnd{1-X}{Y=Y+2}}{\expcnd{X}{Y=Y+1}}}{X=x} \label{eq:G_Definition}
\end{equation}
\vspace{-0.4cm}
\end{figure*}
\end{prop}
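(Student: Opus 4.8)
The plan is to obtain both identities by differentiating the information density $i(x;P_Y)=\sum_{y=0}^n P_{Y|X}(y|x)\log\frac{P_{Y|X}(y|x)}{P_Y(y)}$ directly and then systematically using the two ``lowering'' identities for the binomial kernel. Writing $q_y(x)=\binom{n-1}{y}x^y(1-x)^{n-1-y}$ for the $(n-1)$-trial kernel, these are
\[
(y+1)P_{Y|X}(y+1|x)=nx\,q_y(x),\qquad (n-y)P_{Y|X}(y|x)=n(1-x)\,q_y(x),
\]
and I will also need the finite-difference form $\frac{d}{dx}\binom{m}{y}x^y(1-x)^{m-y}=m\big[\binom{m-1}{y-1}x^{y-1}(1-x)^{m-y}-\binom{m-1}{y}x^{y}(1-x)^{m-1-y}\big]$. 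For the first derivative, the product rule together with $\sum_y \frac{d}{dx}P_{Y|X}(y|x)=0$ kills the self-referential term and leaves $i'(x;P_Y)=\sum_y \frac{d}{dx}P_{Y|X}(y|x)\,\log\frac{P_{Y|X}(y|x)}{P_Y(y)}$. Substituting $\frac{d}{dx}P_{Y|X}(y|x)=\frac{y-nx}{x(1-x)}P_{Y|X}(y|x)$ and splitting $y-nx=y(1-x)-(n-y)x$, I would apply the two lowering identities to convert the two resulting sums into a single $(n-1)$-trial expectation after reindexing $y\mapsto y+1$ in one of them.

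This yields $i'(x;P_Y)=n\,\expcndn{n-1}{\log\frac{P_{Y|X}(Y+1|x)\,P_Y(Y)}{P_{Y|X}(Y|x)\,P_Y(Y+1)}}{X=x}$. The pmf ratio simplifies to $\frac{P_{Y|X}(y+1|x)}{P_{Y|X}(y|x)}=\frac{(n-y)x}{(y+1)(1-x)}$, which isolates the input-dependent factor $\log\frac{x}{1-x}$. For the remaining, input-independent factor I would invoke the estimation identity $\frac{(y+1)P_Y(y+1)}{(n-y)P_Y(y)}=\frac{\hat x_y}{1-\hat x_y}$ with $\hat x_y:=\expcndn{n-1}{X}{Y=y}$ — which is the two lowering identities integrated against $P_X$ — to rewrite the summand as $\log\frac{x(1-\hat x_Y)}{(1-x)\hat x_Y}$. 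Recognizing the definition of $\ell_b$ and adding and subtracting $\frac{x-\hat x_Y}{1-\hat x_Y}$ then reproduces exactly \eqref{eq:inf_den_derivative_with_bregman_1}.

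For the second derivative it is cleanest to keep the half-simplified form $i'(x;P_Y)=n\log\frac{x}{1-x}+n\,\expcndn{n-1}{c_Y}{X=x}$, where $c_y:=\log\frac{(n-y)P_Y(y)}{(y+1)P_Y(y+1)}$ is input-independent. Differentiating the first term gives $\frac{n}{x(1-x)}$, the first summand of \eqref{eq:i_second_der}. For the second term I would differentiate the $(n-1)$-trial kernel using the finite-difference identity, which after reindexing collapses the sum to a single difference and produces $n(n-1)\,\expcndn{n-2}{c_{Y+1}-c_Y}{X=x}$. Converting this $(n-2)$-trial expectation back to an $n$-trial one through $\binom{n-2}{y}x^y(1-x)^{n-2-y}=\frac{(n-y)(n-1-y)}{n(n-1)(1-x)^2}P_{Y|X}(y|x)$ then supplies both the prefactor $\frac{1}{(1-x)^2}$ and the weights $(n-Y)(n-1-Y)$.

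It then remains to identify $c_{Y+1}-c_Y$ with the logarithm inside \eqref{eq:G_Definition}. Concretely, I must verify that the four-fold ratio $\frac{\expcnd{X}{Y=y}\,\expcnd{1-X}{Y=y+2}}{\expcnd{1-X}{Y=y+1}\,\expcnd{X}{Y=y+1}}$ equals $e^{c_{y+1}-c_y}=\frac{(n-1-y)(y+1)}{(n-y)(y+2)}\cdot\frac{P_Y(y+1)^2}{P_Y(y)P_Y(y+2)}$. Writing each $n$-trial posterior mean through the lowering identities introduces a factor of the $(n+1)$-trial output distribution, and the crux is that these $(n+1)$-trial factors cancel across the four terms, leaving precisely the combinatorial prefactor and the $P_Y$ ratio above. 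I expect this final cancellation — keeping the shifted indices $y,y+1,y+2$ and the binomial coefficients aligned — to be the only delicate bookkeeping; everything else is a mechanical application of the lowering and finite-difference identities.
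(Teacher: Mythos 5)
Your proposal is correct and takes essentially the same route as the paper: your ``lowering'' identities are the paper's Lemma~\ref{lem:channel_transformations}, your finite-difference derivative formula is identity \eqref{eq:der_cond_exp_n-1} of Lemma~\ref{lem:derivative_exp_cnd}, and the posterior-mean/output-pmf ratio you invoke is \eqref{eq:ratio_cond_exp_n}, so the first-derivative computation is essentially identical. The only cosmetic difference is in the second derivative, where you descend to an $(n-2)$-trial expectation and lift back to $n$ trials via a change of measure, whereas the paper differentiates the $(n-Y)$-weighted $n$-trial form directly using \eqref{eq:der_cond_exp_2}; both land on the same $G(x)$, and your final four-fold cancellation is exactly \eqref{eq:ratio_cond_exp_n} applied at $y$ and $y+1$.
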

\begin{IEEEproof}
See Appendix~\ref{sec:Derivatives_info_density}. 
\end{IEEEproof}

The Bregman divergence in \eqref{eq:inf_den_derivative_with_bregman_1} appeared previously in a different but related result, specifically in 
\cite[Prop.~2]{taborda2014information} it was shown that  for $a \in (0,1)$
\begin{equation}
    \frac{\partial}{ \partial a}  I(  X; \cB_n(a X) ) = \frac{n}{a} \bbE \left[ \ell_b \left(aX, \bbE[aX|\cB_{n-1}(a X')  ] \right) \right] \label{eq:derivative:idenity}
\end{equation}
where $Y=\cB_n(aX)$ denotes the transformation of input $aX$ through a binomial channel with $n$ trials.  

Finally, we also need to show the monotonicity of the conditional mean. 

\begin{lemma}\label{lem:cond_exp_non_increasing}
    The function $y \mapsto \expcnd{X}{Y=y}$ is non-decreasing.
\end{lemma}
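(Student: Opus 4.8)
The plan is to reduce the claim to a single correlation inequality by exploiting the monotone-likelihood-ratio (total positivity) structure of the binomial kernel, the same property the paper invokes elsewhere for uniqueness. First I would write the conditional mean explicitly as a ratio of integrals against the prior $P_X$, in which the factor $\binom{n}{y}$ cancels:
\begin{equation}
\expcnd{X}{Y=y} = \frac{\int_0^1 x\, x^y(1-x)^{n-y}\, dP_X(x)}{\int_0^1 x^y(1-x)^{n-y}\, dP_X(x)}.
\end{equation}
It is convenient to define, for each $y$ with $P_Y(y)>0$, the tilted probability measure $Q_y$ whose density with respect to $P_X$ is proportional to $w_y(x) := x^y(1-x)^{n-y}$, so that $\expcnd{X}{Y=y} = \mathbb{E}_{Q_y}[X]$.

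Next I would observe that consecutive tilted measures differ by an increasing reweighting. Since $w_{y+1}(x)/w_y(x) = x/(1-x) =: r(x)$, we have $dQ_{y+1} = r(x)\, dQ_y / \mathbb{E}_{Q_y}[r(X)]$, and therefore
\begin{equation}
\expcnd{X}{Y=y+1} = \frac{\mathbb{E}_{Q_y}[X\, r(X)]}{\mathbb{E}_{Q_y}[r(X)]}.
\end{equation}
Comparing this with $\expcnd{X}{Y=y} = \mathbb{E}_{Q_y}[X]$, the target inequality $\expcnd{X}{Y=y+1} \ge \expcnd{X}{Y=y}$ is seen to be equivalent to $\mathbb{E}_{Q_y}[X\, r(X)] \ge \mathbb{E}_{Q_y}[X]\,\mathbb{E}_{Q_y}[r(X)]$, that is, $\operatorname{Cov}_{Q_y}(X, r(X)) \ge 0$.

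I would then close the argument with the Chebyshev correlation inequality: both $x \mapsto x$ and $x \mapsto r(x)=x/(1-x)$ are non-decreasing on $[0,1)$, so their covariance under any common law is nonnegative. Hence $\operatorname{Cov}_{Q_y}(X, r(X)) \ge 0$, which gives $\expcnd{X}{Y=y+1}\ge \expcnd{X}{Y=y}$ for every $y$, and since the domain $\{0,\dots,n\}$ is discrete this is exactly the asserted monotonicity. Conceptually, this is just the statement that the posterior $P_{X\mid Y=y}$ is increasing in the likelihood-ratio order, hence in the usual stochastic order, so its mean increases.

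The only point requiring care is the behavior at the endpoints, where $r$ blows up at $x=1$. For $0\le y\le n-1$ the weight $w_y$ vanishes at $x=1$, so $Q_y$ places no mass there and $r(X)$ is finite $Q_y$-almost surely, which is all the covariance step needs; the degenerate cases $P_Y(y)=0$ (which for $0<y<n$ would force $P_X$ to be supported on $\{0,1\}$) are excluded, the conditional mean being undefined there. I expect this endpoint bookkeeping, rather than the correlation inequality itself, to be the only real obstacle.
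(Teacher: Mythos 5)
Your strategy is, at bottom, the same as the paper's: both proofs reduce the claim to the increasing-likelihood-ratio (TP$_2$) structure of the weights $w_y(x)=x^y(1-x)^{n-y}$, and if you unpack your Chebyshev covariance and the paper's basic-composition-formula determinant, they are the same positive double integral with integrand $x_1^yx_2^y(1-x_1)^{n-y-1}(1-x_2)^{n-y-1}(x_2-x_1)^2$. However, your implementation has a genuine gap at the top step $y=n-1$. The tilting identity
\begin{equation}
\expcnd{X}{Y=y+1}=\frac{\mathbb{E}_{Q_y}\!\left[X\,r(X)\right]}{\mathbb{E}_{Q_y}\!\left[r(X)\right]}
\end{equation}
requires $w_{y+1}=r\,w_y$ to hold $P_X$-almost everywhere. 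For $y\le n-2$ both sides vanish at $x=1$ and all is well; but for $y=n-1$ one has $w_n(1)=1$, whereas the measure $r\,\rmd Q_{n-1}$ assigns zero mass to $\{1\}$ because $Q_{n-1}(\{1\})=0$. In other words, $Q_n$ charges the atom of $P_X$ at $x=1$, but your tilted representation of it cannot see that atom. Concretely, take $n=1$ and $P_X=\tfrac12\delta_{1/2}+\tfrac12\delta_{1}$: then $Q_0=\delta_{1/2}$ and your formula returns $\mathbb{E}_{Q_0}[Xr(X)]/\mathbb{E}_{Q_0}[r(X)]=\tfrac12$, while in fact $\expcnd{X}{Y=1}=\tfrac56$. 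Your endpoint paragraph misdiagnoses the obstacle: the issue is not whether $r(X)$ is finite $Q_y$-a.s.\ (it is), but that the identity above is false when $P_X(\{1\})>0$. This is not an exotic case here --- by Proposition~\ref{prop:0_is_optimal} the capacity-achieving input has atoms at both $0$ and $1$, so the lemma is invoked exactly in the regime where your step fails. (There is no symmetric problem at $x=0$, since $r(0)=0$ is finite.)

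The gap is localized and fixable. For $y=n-1$ with $c:=P_X(\{1\})>0$, set $a=\int_{[0,1)}x^{n+1}\,\rmd P_X$ and $b=\int_{[0,1)}x^{n}\,\rmd P_X$; your covariance argument still yields $a/b\ge\expcnd{X}{Y=n-1}$, and since $a\le b$ and $c\ge 0$,
\begin{equation}
\expcnd{X}{Y=n}=\frac{a+c}{b+c}\ \ge\ \frac{a}{b}\ \ge\ \expcnd{X}{Y=n-1},
\end{equation}
which closes the case. Alternatively, do what the paper does: never divide by the weights at all. The paper works with the four bounded functions $f_1,f_2,g_1,g_2$ and the non-negativity of the $2\times 2$ determinants (via Karlin's basic composition formula, equivalently your symmetrization over two independent copies), which is valid pointwise on all of $[0,1]$ including the endpoints; the increasing ratios serve only as a certificate for the determinant signs, not as densities of one measure with respect to another.
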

\begin{proof}
    First of all, note that
    \begin{equation} \label{eq:Tweedy}
        \expcnd{X}{Y=y} = \frac{\expect{X^{y+1}(1-X)^{n-y}}}{\expect{X^{y}(1-X)^{n-y}}}.
    \end{equation}
    Let us now introduce the functions $f_1, f_2, g_1, g_2$ as follows:
    \begin{align}
        &f_1(x) = x^{y}, \quad f_2(x) = x^{y+1}, \\
        &g_1(x) = (1-x)^{n-y}, \quad g_2(x) = x(1-x)^{n-y-1},
    \end{align}
    and note that the functions 
    \begin{equation}
        \frac{f_2(x)}{f_1(x)} = x, \qquad \frac{g_2(x)}{g_1(x)} = \frac{x}{1-x}
    \end{equation}
    are both increasing and non-negative for $x \in [0,1]$. As a consequence, the entries and the determinant of the matrices 
    \begin{equation}
        \left[\begin{array}{cc}
            f_1(x_1) & f_1(x_2) \\
            f_2(x_1) & f_2(x_2)
        \end{array}\right], \qquad \left[\begin{array}{cc}
            g_1(x_1) & g_1(x_2) \\
            g_2(x_1) & g_2(x_2)
        \end{array}\right],
    \end{equation}
    are non-negative for any choice of $0\le x_1 < x_2 \le 1$. By using the \emph{basic composition formula} of~\cite[Ch.~3.1]{karlin1968total}, we can also say that the entries and the determinant of the matrix
    \begin{equation}
        \left[\begin{array}{cc}
            \expect{f_1(X)g_1(X)} & \expect{f_1(X)g_2(X)} \\
            \expect{f_2(X)g_1(X)} & \expect{f_2(X)g_2(X)}
        \end{array}\right]
    \end{equation}
    are non-negative. Therefore, we have
    \begin{equation}
        \frac{\expect{f_2(X)g_2(X)}}{\expect{f_1(X)g_2(X)}} \ge \frac{\expect{f_2(X)g_1(X)}}{\expect{f_1(X)g_1(X)}}
    \end{equation}
    or
    \begin{equation}
        \frac{\expect{X^{y+2} (1-X)^{n-y-1}}}{\expect{X^{y+1}(1-X)^{n-y-1}}} \ge \frac{\expect{X^{y+1}(1-X)^{n-y}}}{\expect{X^y (1-X)^{n-y}}},
    \end{equation}
    which, by using \eqref{eq:Tweedy}, is the same as 
    \begin{equation}
        \expcnd{X}{Y=y+1} \ge \expcnd{X}{Y=y}.
    \end{equation}
    This concludes the proof. 
\end{proof}

\section{Properties of the Capacity-Achieving Distributions}
\label{sec:main_I}
In this section we study properties of capacity-achieving distributions. 

\subsection{Symmetry} 
\label{sec:symmetry}

The binomial channel exhibits the following symmetry
\begin{equation} \label{eq:channel_symmetry}
    P_{Y|X}(y|x) = P_{Y|X}(n-y|1-x), \quad x \in [0,1], \, y \in \{0\}\cup [n]. 
\end{equation}
which  immediately leads to to the following result.  

\begin{prop}\label{prop:input_symmetry}
If $X^\star$ is capacity-achieving, then  $X^\star \stackrel{d}{=}1-X^\star$.\footnote{Here $\stackrel{d}{=}$ denotes equality in distribution.} 
\end{prop}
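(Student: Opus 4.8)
The plan is to leverage the channel symmetry \eqref{eq:channel_symmetry} together with the uniqueness of the optimal output distribution from Lemma~\ref{lem:Uniquness_A_n}. First I would make the reflection operation on input distributions precise: for an input $X$ with law $P_X$, let $\bar X = 1-X$. If $P_X \to P_{Y|X} \to P_Y$, then \eqref{eq:channel_symmetry} shows that $\bar X$ induces the output law $\bar P_Y(y) = P_Y(n-y)$, i.e. the output is merely relabelled by the involution $y \mapsto n-y$. Since relabelling the finite output alphabet by a bijection leaves mutual information unchanged, $I(\bar X;\bar Y) = I(X;Y)$. Hence $P_X$ is capacity-achieving if and only if its reflection is, and in particular the reflection of any optimizer is again an optimizer.

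Second, I would use this to pin down the optimal output. Given an optimizer $P_{X^\star}$ with output $P_{Y^\star}$, the reflected optimizer $P_{1-X^\star}$ is also capacity-achieving and induces the output $y \mapsto P_{Y^\star}(n-y)$. By Lemma~\ref{lem:Uniquness_A_n} the optimal output distribution is unique, so these two outputs coincide: $P_{Y^\star}(y) = P_{Y^\star}(n-y)$ for all $y$, i.e. $P_{Y^\star}$ is symmetric about $n/2$. Feeding the symmetry of both the kernel \eqref{eq:channel_symmetry} and of $P_{Y^\star}$ into $i(x;P_{Y^\star}) = \kl{P_{Y|X}(\cdot|x)}{P_{Y^\star}}$ and substituting $y \mapsto n-y$ in the defining sum yields $i(1-x;P_{Y^\star}) = i(x;P_{Y^\star})$; consequently the set $\cA_n$ of \eqref{eq:zeros_KKT} is symmetric about $\tfrac12$, and by concavity of $P_X \mapsto I(X;Y)$ the symmetrized input $\tfrac12(P_{X^\star}+P_{1-X^\star})$ is itself optimal, which already exhibits a symmetric optimizer.

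The delicate point is to upgrade the existence of a symmetric optimizer to the stated claim that \emph{every} optimizer is symmetric. Both $P_{X^\star}$ and its reflection $P_{1-X^\star}$ induce the same symmetric output $P_{Y^\star}$ and are supported inside $\cA_n$, so it suffices to show that $P_{Y^\star}$ admits a unique representation as a mixture $\sum_x P(x)\,P_{Y|X}(\cdot|x)$ with $\supp(P)\subseteq\cA_n$; this would force $P_{X^\star}=P_{1-X^\star}$. This is exactly where the total positivity / Chebyshev-system structure of the binomial kernel enters: the functions $x\mapsto \binom{n}{y}x^y(1-x)^{n-y}$ form a Chebyshev system, so any collection of at most $n+1$ of the vectors $(P_{Y|X}(y|x))_{y=0}^{n}$ is linearly independent, making the mixture representation unique once the support is suitably bounded. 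I expect this last step to be the main obstacle, since a priori $\cA_n$ need not have at most $n+1$ points, and one must avoid circularity with the later uniqueness result of Sec.~\ref{sec:uniqness}; the cheapest rigorous route is to observe that reflection permutes the convex, compact set of optimizers and then invoke uniqueness of the optimizer to conclude that this set collapses to the single symmetric point.
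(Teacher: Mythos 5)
Your proposal is correct, and it is in fact more careful than the paper, which offers no proof at all: Proposition~\ref{prop:input_symmetry} is asserted to follow ``immediately'' from the channel symmetry \eqref{eq:channel_symmetry}. What that symmetry immediately gives is only that the set of capacity-achieving distributions is invariant under the reflection $X \mapsto 1-X$ (and, via concavity, that \emph{a} symmetric optimizer exists); the stated claim that \emph{every} optimizer is symmetric requires exactly the extra ingredient you identify. Your two candidate closings are both sound: the injectivity of $P_X \mapsto P_Y$ on distributions supported in $\cA_n$ is precisely what the paper later supplies via Proposition~\ref{prop:bound_cardinality_cA} ($|\cA_n|\le n+1$) and the Vandermonde argument of Proposition~\ref{prop:full_rank_A}, and your ``cheapest route'' of letting reflection permute the optimizer set and invoking uniqueness is legitimate and non-circular, since the uniqueness chain (Lemma~\ref{lem:Uniquness_A_n} $\to$ Proposition~\ref{prop:bound_cardinality_cA} $\to$ Theorem~\ref{thm:stric:_concavity} $\to$ the Corollary) nowhere uses Proposition~\ref{prop:input_symmetry}; the only cost is that the symmetry statement then logically belongs after Section~\ref{sec:uniqness} rather than before it. Your intermediate observations (reflection is a relabelling of the finite output alphabet, hence preserves mutual information; uniqueness of $P_{Y^\star}$ forces $P_{Y^\star}(y)=P_{Y^\star}(n-y)$ and hence $i(x;P_{Y^\star})=i(1-x;P_{Y^\star})$, so $\cA_n$ is symmetric) are all correct and are implicitly relied upon elsewhere in the paper, e.g.\ in Corollary~\ref{cor:rel_cap_probY} and Proposition~\ref{prop:loc_info}.
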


\subsection{Discreteness}
\label{sec:discretness}
As already mentioned in Section~\ref{sec:Lit}, from the Witsenhausen approach we only know that there exists a discrete distribution with at most $n+1$ mass points.  This, however, does not rule out the existence of other capacity-achieving distributions (\emph{e.g.},  continuous capacity-achieving distributions).  




We now show that all capacity-achieving distributions are discrete and provide  a preliminary bound on the support. 
\begin{prop}\label{prop:bound_cardinality_cA} $| \cA_n| \le n+1$.
\end{prop}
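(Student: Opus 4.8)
The plan is to identify $\cA_n$ with the zero set of the single function $f(x) := i(x;P_{Y^\star}) - C(n)$ on $[0,1]$ and to bound its number of zeros, which by \eqref{eq:zeros_KKT} is exactly $|\cA_n|$. Writing out the information density and using $\sum_y P_{Y|X}(y|x)=1$ and $\sum_y y\,P_{Y|X}(y|x)=nx$, one sees that $f(x)=nx\log x+n(1-x)\log(1-x)+R(x)-C(n)$, where $R$ collects the cross term $-\sum_y P_{Y|X}(y|x)\log P_{Y^\star}(y)$ and the binomial-coefficient constants and is therefore a polynomial of degree at most $n$. In particular $f$ is continuous on $[0,1]$, real-analytic on $(0,1)$, and $f'(x)=n\log\frac{x}{1-x}+R'(x)$ diverges to $-\infty$ as $x\to 0^+$ and to $+\infty$ as $x\to 1^-$.

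The engine of the count is the polynomial structure of the second derivative. From Proposition~\ref{prop:derivatives_info_density} (equivalently, by differentiating the display above) $f''(x)=\frac{n}{x(1-x)}+R''(x)$, so $x(1-x)f''(x)=n+x(1-x)R''(x)$ is a polynomial of degree at most $n$. It is not identically zero, since its value at either endpoint is $n$, so $\sfN((0,1);f'')\le n$. This is the crucial qualitative fact: $f$ has only finitely many zeros, hence $\cA_n$ is finite (which is exactly what the subsequent discreteness argument will rest on), and now one may count.

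To convert ``at most $n$ zeros of $f''$'' into ``at most $n+1$ zeros of $f$'' I would feed in the KKT inequality $f\le 0$ of Lemma~\ref{lem:KKT}: every zero of $f$ is a global maximum. Consequently each interior zero is a critical point of $f$, and between two consecutive interior zeros, where $f<0$ strictly because the zeros are now known to be isolated, there must also lie a strict minimum. Counting these critical points against $\sfN((0,1);f'')\le n$ via Rolle's theorem bounds the number of interior zeros; the boundary behavior $f'(0^+)=-\infty$, $f'(1^-)=+\infty$ forces the two outermost critical points to be minima, which pins down the parity of the critical-point count. The two endpoints are then handled with the symmetry $f(x)=f(1-x)$ coming from Proposition~\ref{prop:input_symmetry}, which makes $0\in\cA_n$ if and only if $1\in\cA_n$, so that they contribute jointly. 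Assembling the interior count with the endpoint contribution yields $|\cA_n|\le n+1$.

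The step I expect to be most delicate is precisely this Rolle bookkeeping near the endpoints. Because $f'$ blows up there, applying Rolle twice in the crudest way only gives $|\cA_n|\le n+2$; shaving off the final unit genuinely requires the sign constraint $f\le 0$ (equivalently, that the extreme critical points are minima, via the boundary behavior of $f'$) together with the symmetry to decide whether $0$ and $1$ themselves belong to $\cA_n$. Establishing finiteness first, from the polynomial bound on $f''$, is what legitimizes the ``strict minimum between consecutive zeros'' argument; I would treat the small case $n=1$ (where $\cA_1=\{0,1\}$) directly to cover the base of the estimate.
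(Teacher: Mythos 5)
Your argument is correct, and the decomposition it rests on is the right one: $i(x;P_{Y^\star})-C(n)=nx\log x+n(1-x)\log(1-x)+R(x)-C(n)$ with $R$ a polynomial of degree at most $n$, so that $x(1-x)f''(x)=n+x(1-x)R''(x)$ is a degree-$\le n$ polynomial equal to $n$ at both endpoints, whence $\sfN((0,1);f'')\le n$. Two remarks. First, for $R$ to be a finite polynomial (and $f$ to be analytic at all) you need $P_{Y^\star}(y)>0$ for every $y$; this is not automatic but follows from the KKT inequality itself, since $P_{Y^\star}(y)=0$ for some $y$ would make $i(x;P_{Y^\star})=\infty$ for every $x\in(0,1)$. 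Second, if you carry your bookkeeping to its conclusion you prove strictly more than asked: with $m$ interior points of $\cA_n$, each a global maximum hence a critical point, the interleaved minima between consecutive ones and the two extra critical points forced near the endpoints by $f'(0^+)=-\infty$, $f'(1^-)=+\infty$ give at least $2m+1$ zeros of $f'$, hence at least $2m$ zeros of $f''$, so $m\le\lfloor n/2\rfloor$ and $|\cA_n|\le 2+\lfloor n/2\rfloor$ — which is exactly the paper's later Theorem~\ref{thm:n/2 bound}, of which $n+1$ is a weak corollary. In other words, your route is not a separate elementary proof of the $n+1$ bound but a rederivation of the paper's sharper cardinality theorem (the paper states Proposition~\ref{prop:bound_cardinality_cA} without proof in this version, and its Theorem~\ref{thm:n/2 bound} uses precisely your ingredients: every point of $\cA_n$ is a local maximum, and $x(1-x)i''(x;P_{Y^\star})$ is a degree-$n$ polynomial). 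The symmetry $f(x)=f(1-x)$ you invoke for the endpoints is true but unnecessary; the two endpoints can simply be added to the interior count. The only steps worth writing out fully are the isolation of zeros (which you correctly reduce to $f''\not\equiv 0$ plus Rolle, or analyticity) and the final arithmetic $|\cA_n|\le m+2\le 2+\lfloor n/2\rfloor\le n+1$ for all $n\ge 1$.
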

This bound will be improved in Section~\ref{sec:card_bounds}. 
\subsection{Uniqueness of the Optimal Input Distribution} 
\label{sec:uniqness}

In this section, we show and discuss uniqueness of the capacity-achieving input distribution. To aid our discussion, it is useful to parameterize the mutual information in terms of distributions instead of random variables, that is 
\begin{equation}
    I(P_X; P_{Y|X} ) = I(X;Y). 
\end{equation}
We also let $\cP_{\cX}$ be the set of all distributions over the set $\cX$. In particular, the optimization in \eqref{eq:capacity_opt_problem} can be written as
\begin{equation}
   \max_{P_X \in \cP_{[0,1]}} I(P_X; P_{Y|X} ).
\end{equation}
A typical way to show that there is a unique maximizer  is to show that the mapping $P_X \mapsto I(P_X; P_{Y|X} ) $ over the set $\cP_{[0,1]}$ is \emph{strictly} concave \cite{smith1971information}.  However, due to the fact that the output space of the binomial channel is finite and the input space is uncountable, the mutual information is not strictly concave over $\cP_{[0,1]}$.  For example, when $n=1$ any  distribution symmetric around $x=\frac{1}{2}$ will induce 
\begin{equation}
    P_{Y}(0) = P_{Y}(1) = \frac{1}{2} 
\end{equation}
which is the capacity-achieving output distribution for $n=1$.  Therefore, to show uniqueness of the capacity-achieving input distribution a new or slightly different argument is needed. 

 We begin by showing the following result. 

\begin{prop}\label{prop:full_rank_A}  Consider an arbitrary sequence $0\le x_1 < \ldots  < x_{n+1} \le 1$ and define the
matrix $\sfA \in \bbR^{n+1 \times n+1} $ as
\begin{equation}
[\sfA]_{i,k} = P_{Y|X}(i-1|x_k), \quad  i \in [n+1], \, k \in [n+1]. 
\end{equation}
Then,  $\sfA$ is full rank. 
\end{prop}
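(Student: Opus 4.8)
The plan is to prove the stronger claim that $\det\sfA\neq 0$ by showing the rows of $\sfA$ are linearly independent, reframing the question as a statement about the roots of a single polynomial. Suppose $a_1,\ldots,a_{n+1}\in\bbR$ give a linear dependence among the rows, so that $\sum_{i=1}^{n+1} a_i\,[\sfA]_{i,k}=0$ for every column index $k\in[n+1]$. Spelling out the binomial entries, this reads
\begin{equation}
\sum_{i=1}^{n+1} a_i \binom{n}{i-1} x_k^{i-1}(1-x_k)^{n-i+1}=0, \qquad k\in[n+1].
\end{equation}
I would then introduce the polynomial
\begin{equation}
p(x)=\sum_{i=1}^{n+1} a_i \binom{n}{i-1} x^{i-1}(1-x)^{n-i+1},
\end{equation}
so that the dependence is exactly the statement that $p$ vanishes at the $n+1$ distinct points $x_1<\ldots<x_{n+1}$.

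The crux is a degree count. Each term $x^{i-1}(1-x)^{n-i+1}$ is a polynomial of degree exactly $n$, so $p$ has degree at most $n$; having $n+1$ distinct roots, it must be the zero polynomial. Since the scaled Bernstein polynomials $\{\binom{n}{i-1}x^{i-1}(1-x)^{n-i+1}\}_{i=1}^{n+1}$ form a basis of the $(n+1)$-dimensional space of polynomials of degree at most $n$, the identity $p\equiv 0$ forces $a_1=\cdots=a_{n+1}=0$. Hence the only row dependence is trivial, the rows are independent, and $\sfA$ is full rank. The basis fact itself is standard and, if preferred, can be verified on the spot by the substitution $u=x/(1-x)$, which turns each Bernstein polynomial into $(1-x)^n\binom{n}{i-1}u^{i-1}$ and thereby reduces linear independence to that of the monomials $1,u,\ldots,u^n$.

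This substitution also points to the alternative, more computational route that makes the connection to total positivity explicit: pulling $\binom{n}{i-1}$ out of each row and $(1-x_k)^n$ out of each column turns $\det\sfA$ into a Vandermonde determinant in the variables $t_k=x_k/(1-x_k)$, which is nonzero because $x\mapsto x/(1-x)$ is strictly increasing on $[0,1)$ and the $x_k$ are distinct; this is precisely the strict total positivity of the binomial kernel in the sense of \cite{karlin1968total}. I expect the only genuine subtlety --- and the reason I favor the polynomial argument --- to be the boundary values $x=0$ and $x=1$, where $t_k$ is undefined and the column scaling breaks down. The polynomial formulation sidesteps this entirely, since $p(0)=a_1$ and $p(1)=a_{n+1}$ remain perfectly well defined and the root-counting argument never references $t_k$.
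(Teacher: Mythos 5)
Your proof is correct, but it takes a genuinely different route from the paper's. You argue on the rows: a linear dependence among the rows of $\sfA$ produces a polynomial $p(x)=\sum_i a_i\binom{n}{i-1}x^{i-1}(1-x)^{n-i+1}$ of degree at most $n$ with $n+1$ distinct roots, hence $p\equiv 0$, and the fact that the Bernstein polynomials form a basis of the degree-$\le n$ polynomials kills all the coefficients. The paper instead works on the columns and computes the determinant directly: it first disposes of the endpoints by noting that if $x_1=0$ and $x_{n+1}=1$ the corresponding columns are $\bfe_1$ and $\bfe_{n+1}$ (so the determinant reduces to an interior submatrix), and then substitutes $x=\rme^t/(1+\rme^t)$ to factor $\det(\tilde\sfA)$ as a positive multiple of a Vandermonde determinant in the $t_k$ --- essentially the second, ``computational'' route you sketch at the end, including the boundary subtlety you correctly flag. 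The trade-off: your polynomial argument is more elementary and treats $x=0,1$ with no case split, since $p(0)=a_1$ and $p(1)=a_{n+1}$ are perfectly meaningful; the paper's factorization buys the slightly stronger conclusion $\det(\sfA)>0$ (the strict total positivity / sign-regularity of the binomial kernel), though only nonsingularity is used downstream in Theorem~\ref{thm:stric:_concavity}. Either proof is acceptable here.
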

\begin{proof}
First of all, we argue that considering $x_1=0$ and $x_{n+1}=1$ comes without loosing generality. In fact, in this case
the first and last columns of $\sfA$ are $\bfe_1$ and $\bfe_{n+1}$, respectively, where $\bfe_i$ is a zero vector with a $1$ in the $i$-th position. As a consequence, we have $\det(\sfA) = \det(\tilde{\sfA})$, where
\begin{equation}
 [\tilde{\sfA}]_{i,k} = [\sfA]_{i+1,k+1}, \quad i \in [n-1], \, k \in [n-1].   
\end{equation}  
     Next, note that we can rewrite the binomial law as 
    \begin{equation}
        P_{Y|X}(y|x) = \binom{n}{y}(1+\rme^{t})^{-n}\rme^{ty} 
    \end{equation}
    where $x = \frac{\rme^t}{1+\rme^t}$. The matrix $\sfB$ with $[\sfB]_{y,k} = \rme^{t_k y}$ and $y \in [n-1]$ is a Vandermonde matrix, which is full rank since the $t_k$'s are all distinct \cite{golub2013matrix}. Thanks to the multilinear property of the determinant, we can write that 
    \begin{equation}
        \det(\tilde{\sfA}) = \det(\sfB) \prod_{y=1}^{n-1} \binom{n}{y} \prod_{k=2}^{n} (1+\rme^{t_k})^{-n} >0
    \end{equation}
    where the last step is due to $\det(\sfB)>0$ and to the positivity of the products. As a consequence, $\sfA$ is a full rank matrix.
\end{proof}

With  the aid of  Proposition~\ref{prop:full_rank_A},  we show the following result. 
\begin{theorem}\label{thm:stric:_concavity}
    Let $\cX \subset [0,1]$ be a discrete set of cardinality $n+1$.  Then, $P_X \mapsto I(P_X; P_{Y|X})$ is \emph{strictly} concave over $\cP_{\cX}$. 
\end{theorem}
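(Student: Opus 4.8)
The plan is to exploit the standard decomposition
\begin{equation}
I(P_X; P_{Y|X}) = H(P_Y) - \sum_{x \in \cX} P_X(x)\, H\!\left(P_{Y|X}(\cdot|x)\right),
\end{equation}
where $H$ denotes the entropy functional and $P_Y$ is the output distribution induced by $P_X$. The second term is \emph{linear} in $P_X$, so adding it back to $H(P_Y)$ neither creates nor destroys strict concavity. Hence it suffices to show that the map $P_X \mapsto H(P_Y)$ is strictly concave over $\cP_{\cX}$.

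First I would view the input-to-output map as the linear map $P_Y = \sfA P_X$, where $\sfA$ is exactly the matrix of Proposition~\ref{prop:full_rank_A} associated with the $n+1$ points of $\cX$, i.e. $[\sfA]_{i,k} = P_{Y|X}(i-1|x_k)$. Since $\cX$ has cardinality $n+1$ and the output alphabet $\{0,\ldots,n\}$ also has $n+1$ elements, $\sfA$ is square, and Proposition~\ref{prop:full_rank_A} guarantees it is full rank, hence \emph{injective}: distinct input distributions are mapped to distinct output distributions.

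Next I would invoke the strict concavity of the entropy functional $H$ on the simplex of distributions over $\{0,\ldots,n\}$: for $Q^{(0)} \neq Q^{(1)}$ and $\lambda \in (0,1)$, one has $H(\lambda Q^{(0)} + (1-\lambda)Q^{(1)}) > \lambda H(Q^{(0)}) + (1-\lambda)H(Q^{(1)})$. Given distinct $P_X^{(0)} \neq P_X^{(1)}$ in $\cP_{\cX}$, injectivity of $\sfA$ forces $\sfA P_X^{(0)} \neq \sfA P_X^{(1)}$; using $\sfA(\lambda P_X^{(0)} + (1-\lambda)P_X^{(1)}) = \lambda \sfA P_X^{(0)} + (1-\lambda)\sfA P_X^{(1)}$ together with strict concavity of $H$ applied to these two distinct output distributions yields strict concavity of $P_X \mapsto H(\sfA P_X)$. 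Combined with the linear-term observation above, this establishes strict concavity of $P_X \mapsto I(P_X; P_{Y|X})$.

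The crux — and precisely the reason Proposition~\ref{prop:full_rank_A} was proved — is the injectivity of $\sfA$. Without it, two distinct input distributions could induce the same output distribution (exactly the $n=1$ degeneracy flagged before the theorem statement), collapsing the strict inequality into an equality. Once injectivity is in hand, no further obstacle remains and the remaining steps are routine.
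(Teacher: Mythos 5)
Your proof is correct and follows essentially the same route as the paper: both arguments hinge on the injectivity of the linear map $P_X \mapsto \sfA P_X$ guaranteed by Proposition~\ref{prop:full_rank_A}, combined with strict concavity of an entropy-type functional on the finite output simplex. The only cosmetic difference is that you use the decomposition $I = H(P_Y) - \sum_x P_X(x)H(P_{Y|X}(\cdot|x))$ and cite strict concavity of $H$ as a black box, whereas the paper writes the concavity gap explicitly as $(1-\epsilon)\kl{P_Y}{P_Y^\epsilon}+\epsilon \kl{Q_Y}{P_Y^\epsilon}$ — which is exactly the same quantity, since the conditional-entropy term is linear.
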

\begin{proof} 
    Let $P_X, Q_X \in \cP_{\cX}$, and let $P_X^\epsilon = (1-\epsilon) P_X +\epsilon Q_X$ for $\epsilon \in (0,1)$, which is also in $\cP_{\cX}$.  Moreover, 
    let $P_X \to P_{Y|X} \to P_Y$,  $Q_X \to P_{Y|X} \to Q_Y$ and $P_X^\epsilon \to P_{Y|X} \to P_Y^\epsilon$. 
    Then, first note that
    \begin{align}
    &I( P_X^\epsilon; P_{Y|X} ) \notag\\
    &- (1-\epsilon)  I(  P_X ; P_{Y|X} ) - \epsilon I(   Q_X; P_{Y|X} )\\
    &=     D (P_{Y|X} \| P_{Y}^\epsilon| P_X^\epsilon)  \notag\\
    &-(1-\epsilon)  D (P_{Y|X} \| P_{Y}| P_X)  - \epsilon  D (P_{Y|X} \| Q_{Y}| Q_X) \\
     &= (1-\epsilon) D(P_{Y}\| P_{Y}^\epsilon) +\epsilon  D(Q_{Y}\| P_{Y}^\epsilon). \label{eq:concavity_decomposition}
    \end{align}

    We now show that every $ P_X \in \cP_{\cX}$ induces a distinct output distribution (i.e., $P_X \to P_{Y|X} \to P_Y$ is an injective mapping), which implies that \eqref{eq:concavity_decomposition} is strictly positive and, therefore, the mutual information is strictly concave.  Define the following:
    \begin{align}
    \bfp_X &= [  P_X(x_1), \ldots, P_X(x_{n+1})],  \quad x_k \in \cX, \\
    \bfp_Y &= [P_Y(0), \ldots, P_Y(n) ].
    \end{align}
    Then, the mapping $P_X \to P_{Y|X} \to P_{Y}$ can be written as the following system of linear equations:
    \begin{equation}
    \sfA \bfp_X = \bfp_Y \label{eq:linear_system}
    \end{equation}
    where the matrix $\sfA \in \bbR^{n+1 \times n+1}$ is such that
\begin{equation}
    [\sfA]_{i,k} = P_{Y|X}(i-1|x_k), \,  i \in [n+1], \, x_k \in \cX. 
\end{equation}
    From Proposition~\ref{prop:full_rank_A}, we know that $\sfA$ is full rank for any $\cX$ of cardinality $n+1$. Therefore, from standard linear algebra result, it follows that the mapping in \eqref{eq:linear_system} is injective (i.e., every $\bfp_X$ induces a distinct $\bfp_Y$).  Therefore, we conclude that \eqref{eq:concavity_decomposition} is positive and mutual information is strictly concave.  
\end{proof}

Note that since  by Proposition~\ref{prop:bound_cardinality_cA}, $\cA_n$ has cardinality of at most $n+1$, from Theorem~\ref{thm:stric:_concavity} we have the following corollary.  

\begin{corollary}
    $P_X \mapsto I(P_X; P_{Y|X})$ is strictly concave over $\cP_{\cA_n}$. Consequently, $P_{X^*}$ is unique.   
\end{corollary}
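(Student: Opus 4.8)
The plan is to deduce both assertions directly from the results already in hand, the only mild technical point being the gap between the hypothesis of Theorem~\ref{thm:stric:_concavity} (which fixes a set of cardinality exactly $n+1$) and the fact that $|\cA_n|$ may be strictly smaller.

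First I would establish strict concavity of $P_X \mapsto I(P_X; P_{Y|X})$ over $\cP_{\cA_n}$. By Proposition~\ref{prop:bound_cardinality_cA} we have $|\cA_n| \le n+1$, so $\cA_n$ is finite (hence discrete) and $\cP_{\cA_n}$ is a compact convex simplex. If $|\cA_n| = n+1$, Theorem~\ref{thm:stric:_concavity} applies verbatim. If $|\cA_n| < n+1$, I would fix an arbitrary discrete superset $\cX \supseteq \cA_n$ with $|\cX| = n+1$, choosing the extra points anywhere in $[0,1]$. Then $\cP_{\cA_n}$ is a convex subset of $\cP_{\cX}$ (a distribution on $\cA_n$ places zero mass on $\cX \setminus \cA_n$, and this property is preserved under convex combinations), and since strict concavity is a pointwise condition on pairs of points, a function strictly concave on $\cP_{\cX}$ is automatically strictly concave on the convex subset $\cP_{\cA_n}$. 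Invoking Theorem~\ref{thm:stric:_concavity} on $\cX$ then gives the claim. Equivalently, one may observe that the injectivity argument inside the proof of Theorem~\ref{thm:stric:_concavity} only requires that the channel matrix restricted to the columns indexed by the support have full column rank, and this holds for any set of at most $n+1$ distinct points since those columns form a subset of the columns of a full-rank $(n+1)\times(n+1)$ matrix via Proposition~\ref{prop:full_rank_A}.

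Second, I would derive uniqueness from this strict concavity. By Lemma~\ref{lem:Uniquness_A_n}, every capacity-achieving distribution satisfies $\supp(P_{X^\star}) \subseteq \cA_n$, so every such $P_{X^\star}$ lies in $\cP_{\cA_n}$. Since $\cP_{\cA_n} \subseteq \cP_{[0,1]}$, the maximum of $I$ over $\cP_{\cA_n}$ cannot exceed $C(n)$; but a capacity achiever lying in $\cP_{\cA_n}$ attains $C(n)$, so $\max_{P_X \in \cP_{\cA_n}} I(P_X; P_{Y|X}) = C(n)$ and every capacity-achieving distribution is a maximizer over the convex set $\cP_{\cA_n}$. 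A strictly concave function on a convex set has at most one maximizer: if distinct $P_X$ and $Q_X$ both attained $C(n)$, then for any $\epsilon \in (0,1)$ the mixture $P_X^\epsilon = (1-\epsilon)P_X + \epsilon Q_X \in \cP_{\cA_n}$ would satisfy $I(P_X^\epsilon; P_{Y|X}) > C(n)$, contradicting the definition of $C(n)$. Hence $P_{X^\star}$ is unique.

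The step I expect to require the most care is the first one, namely bridging from Theorem~\ref{thm:stric:_concavity}, whose hypothesis pins the cardinality at exactly $n+1$, to the genuinely relevant set $\cA_n$, whose cardinality may be smaller. The embedding argument resolves this cleanly, but it is worth spelling out, since strict concavity is not something one should silently restrict to a subset without noting that \emph{the restriction is to a convex subset}, which is precisely what makes the inheritance of strict concavity valid here.
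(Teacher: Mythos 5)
Your proposal is correct and follows essentially the same route as the paper, which derives the corollary by combining Proposition~\ref{prop:bound_cardinality_cA} ($|\cA_n|\le n+1$), Theorem~\ref{thm:stric:_concavity}, and Lemma~\ref{lem:Uniquness_A_n} ($\supp(P_{X^\star})\subseteq\cA_n$). You merely spell out two details the paper leaves implicit --- the embedding of $\cA_n$ into a set of cardinality exactly $n+1$ (equivalently, the full column rank of the restricted channel matrix) and the standard uniqueness-from-strict-concavity argument --- and both are handled correctly.
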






\subsection{On the Location of Support Points}
\label{sec:location}

 Following the same lines of~\cite[Sec.~V]{abou2001capacity} we have that: 
\begin{prop}\label{prop:0_is_optimal}
    Let $P_{X^\star}$ be a capacity-achieving input distribution. Then, $\{0,1\} \in \supp(P_{X^\star})$.
\end{prop}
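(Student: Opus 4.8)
The plan is to use the symmetry already proved to reduce the two-point claim to a single boundary point, and then argue by contradiction. By Proposition~\ref{prop:input_symmetry} the law of $X^\star$ is invariant under $x\mapsto 1-x$, so $\supp(P_{X^\star})$ is symmetric about $\tfrac12$ and $0\in\supp(P_{X^\star})$ if and only if $1\in\supp(P_{X^\star})$. It therefore suffices to prove $0\in\supp(P_{X^\star})$, and I would assume for contradiction that $0\notin\supp(P_{X^\star})$.

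First I would record the consequences of this hypothesis. By the symmetry above $1\notin\supp(P_{X^\star})$ as well, so $\supp(P_{X^\star})\subset(0,1)$; moreover a capacity-achieving input cannot be a single mass point (that would force $I(X;Y)=0$), so $P_{X^\star}$ charges at least one interior point. Hence $P_{Y^\star}(y)>0$ for every $y\in\{0,\dots,n\}$ and every conditional mean $\expcnd{X}{Y=y}$ lies in $(0,1)$. As a result the quantity $G$ in \eqref{eq:G_Definition} stays bounded as $x\to0^+$ (the weights $(n-Y)(n-Y-1)$ concentrate at $n(n-1)$ and the logarithm is a finite ratio of interior means), so \eqref{eq:i_second_der} gives $i''(x;P_{Y^\star})=\tfrac{n}{x(1-x)}+\tfrac{G(x)}{(1-x)^2}\to+\infty$. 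Thus $i(\cdot;P_{Y^\star})$ is strictly convex on some interval $(0,\eta)$.

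Second, I would analyze $i(\cdot;P_{Y^\star})$ at the origin itself. Because $P_{Y|X}(\cdot\mid 0)=\delta_0$ is degenerate, the $y=0$ and $y=1$ terms in $i(x;P_{Y^\star})=\sum_y P_{Y|X}(y\mid x)\log\frac{P_{Y|X}(y\mid x)}{P_{Y^\star}(y)}$ produce a leading $n\,x\log x$ contribution, so that $i(x;P_{Y^\star})<i(0;P_{Y^\star})$ for all small $x>0$ and $i'(0^+;P_{Y^\star})=-\infty$. Equivalently, in \eqref{eq:inf_den_derivative_with_bregman_1} the two $\tfrac{n}{x}$-singularities cancel and leave an $n\log x$ term. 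Hence $x=0$ is always a strict local maximizer of the information density, with infinite incoming slope.

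Finally I would combine these facts with the KKT conditions of Lemma~\ref{lem:KKT}, under which the support points are precisely the global maximizers of $i(\cdot;P_{Y^\star})$ at level $C(n)$. Strict convexity on $(0,\eta)$ forbids any interior local (hence global) maximizer inside that window, so $\cA_n\cap(0,\eta)=\emptyset$ and the smallest point of $\cA_n$ is bounded away from $0$. The step I expect to be the main obstacle is converting this local boundary picture into the global conclusion: ruling out that $0$ is an \emph{uncharged} global maximizer (the borderline case $i(0;P_{Y^\star})=C(n)$ with $0\notin\supp(P_{X^\star})$) and showing that the strict local maximum forced at $0$ must in fact attain the value $C(n)$ and carry positive mass. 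This seems to require a careful zero-counting (Rolle-type) argument for $i'(\cdot;P_{Y^\star})$ on $[0,\tfrac12]$ that feeds the infinite-slope behavior at the endpoint into the convexity window, rather than a soft first-order perturbation; the latter is inconclusive here because shifting mass toward $0$ changes $I$ only at order $i(0;P_{Y^\star})-C(n)\le 0$.
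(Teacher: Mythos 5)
Your proposal does not close the argument, and the obstacle you flag in your last paragraph is exactly where it fails. Everything you derive under the hypothesis $0\notin\supp(P_{X^\star})$ --- strict convexity of $i(\cdot;P_{Y^\star})$ on some $(0,\eta)$, the leading $nx\log x$ behaviour making $x=0$ a strict local maximizer of the information density, and $\cA_n\cap(0,\eta)=\emptyset$ --- is perfectly consistent with $0\notin\supp(P_{X^\star})$. The KKT conditions of Lemma~\ref{lem:KKT} only assert $i(0;P_{Y^\star})\le C(n)$; they never force a point at which $i$ has a local (or even global) maximum to carry mass, and, as you correctly observe, the first-order effect of transferring mass $\epsilon$ onto $x=0$ is $\epsilon\lr{i(0;P_{Y^\star})-C(n)}\le 0$, which is inconclusive. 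So no contradiction is ever produced, and the ``careful zero-counting argument'' you hope will finish the proof is not supplied.

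The missing idea --- and the one the paper uses, following \cite{abou2001capacity} --- is to perturb the \emph{location} of the leftmost support point rather than the amount of mass placed near $0$. Writing $0<x_0<x_1<\cdots<x_N\le 1$ for the (finitely many, by discreteness) support points and differentiating $I(X;Y)$ with respect to $x_0$ while holding the probabilities fixed gives
\[
\frac{\del}{\del x_0}I(X;Y)=\frac{P_X(x_0)}{x_0(1-x_0)}\,\expcnd{(Y-nx_0)\log\frac{P_{Y|X}(Y|x_0)}{P_Y(Y)}}{X=x_0}.
\]
Since $x_0<x_i$ for $i\ge1$, the ratio $P_Y(y)/P_{Y|X}(y|x_0)$ is increasing in $y$, so $y\mapsto\log\frac{P_{Y|X}(y|x_0)}{P_Y(y)}$ is decreasing; combined with $\expcnd{Y}{X=x_0}=nx_0$ and a covariance (Chebyshev-type) inequality, the expectation above is strictly negative. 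Hence sliding $x_0$ toward $0$ strictly increases the mutual information, contradicting optimality unless $x_0=0$. This location perturbation yields a strict first-order gain precisely where the mass perturbation yields only a vanishing one, which is the leverage your local-convexity picture cannot provide.
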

\begin{proof}
    By using symmetry (Prop.~\ref{prop:input_symmetry}), we can just prove the result for the point at $x=0$. Let $0\le x_0 < x_1  < \ldots < x_N \le 1$ be the support points of $P_X$. Suppose that $x_0>0$. Then, we have that
    \begin{align}
        &\frac{\del}{\del x_0}I(X;Y) = P_X(x_0) \frac{\del}{\del x_0} \expcnd{\log\frac{P_{Y|X}(Y|x_0)}{P_Y(Y)}}{X=x_0} \\
        &= \frac{P_X(x_0)}{x_0(1-x_0)} \expcnd{(Y-nx_0)\log\frac{P_{Y|X}(Y|x_0)}{P_Y(Y)}}{X=x_0}. \label{eq:der_mut_inf}
    \end{align}
    Next, we prove that the function $f: y\mapsto \log\frac{P_{Y|X}(y|x_0)}{P_Y(y)}$ is decreasing. Note that
    \begin{align}
        &\frac{P_Y(y)}{P_{Y|X}(y|x_0)} = P_{X}(x_0) + \sum_{i=1}^N P_{X}(x_i) \left(\frac{x_i}{x_0}\right)^y \left(\frac{1-x_i}{1-x_0}\right)^{n-y} \\
        &= P_{X}(x_0) + \sum_{i=1}^N P_{X}(x_i) \left(\frac{\frac{1}{x_0}-1}{\frac{1}{x_i}-1}\right)^y \left(\frac{1-x_i}{1-x_0}\right)^{n}
    \end{align}
    is an increasing function of $y$, since $x_0 < x_i$ for $i\ge1$. As a consequence, the function $f$ is decreasing. By noting that $\expcnd{Y}{X=x_0} = n x_0$ and by applying \cite[Lemma 1]{abou2001capacity} to \eqref{eq:der_mut_inf}, we get that $\frac{\del}{\del x_0}I(X;Y)<0$ for all $0<x_0<x_1$. This implies that $x_0 = 0 \in \supp(P_{X^\star})$. 
\end{proof}


An important consequence of Proposition~\ref{prop:0_is_optimal} is given next. 
\begin{corollary}\label{cor:rel_cap_probY}
The channel capacity is equal to
\begin{equation}
    C(n) = \log\frac{1}{P_{Y^\star}(0)} = \log\frac{1}{P_{Y^\star}(n)}.
\end{equation}
\end{corollary}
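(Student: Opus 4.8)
The plan is to combine the KKT equality condition from Lemma~\ref{lem:KKT} with the fact, established in Proposition~\ref{prop:0_is_optimal}, that the boundary points $0$ and $1$ belong to the support of every capacity-achieving distribution. Since $0 \in \supp(P_{X^\star})$, the equality branch \eqref{eq:KKT_equality} of the KKT conditions immediately gives $i(0;P_{Y^\star}) = C(n)$, and likewise $i(1;P_{Y^\star}) = C(n)$. It therefore suffices to evaluate the information density $i(x;P_{Y^\star}) = \kl{P_{Y|X}(\cdot|x)}{P_{Y^\star}}$ at the two endpoints.

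The key observation is that the binomial channel degenerates into a deterministic map at the boundary. Indeed $P_{Y|X}(y|0) = \binom{n}{y} 0^y (1-0)^{n-y}$ equals $1$ for $y=0$ and $0$ for $y \ge 1$, so $P_{Y|X}(\cdot|0)$ is the point mass at $y=0$. Substituting this into the relative-entropy expression, every term with $y\ge 1$ vanishes and the single surviving $y=0$ term yields $i(0;P_{Y^\star}) = \log\frac{1}{P_{Y^\star}(0)}$. The identical computation at $x=1$, where $P_{Y|X}(\cdot|1)$ is the point mass at $y=n$, gives $i(1;P_{Y^\star}) = \log\frac{1}{P_{Y^\star}(n)}$. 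Equating both expressions to $C(n)$ establishes the two claimed identities.

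There is essentially no serious obstacle here; the only point requiring minor care is the degenerate evaluation of the binomial pmf at the endpoints, where one must invoke the usual conventions $0^0 = 1$ and $0\log 0 = 0$ to justify that $P_{Y|X}(\cdot|0)$ and $P_{Y|X}(\cdot|1)$ are point masses. As an independent consistency check, I would note that the second equality $P_{Y^\star}(0) = P_{Y^\star}(n)$ can also be read off directly from the symmetry of the optimal output distribution implied by Proposition~\ref{prop:input_symmetry}, which provides a useful cross-validation of the result.
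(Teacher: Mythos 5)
Your proposal is correct and follows essentially the same route as the paper: invoke Proposition~\ref{prop:0_is_optimal} to place $0$ (and $1$) in the support, apply the KKT equality condition, and evaluate the degenerate relative entropy at the endpoint to obtain $C(n)=\log\frac{1}{P_{Y^\star}(0)}$. The paper obtains the second identity via the symmetry $P_{Y^\star}(0)=P_{Y^\star}(n)$, which you also note as a cross-check, so the two arguments coincide.
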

\begin{proof}
    Thanks to Proposition \ref{prop:0_is_optimal}, we know that $0 \in \supp(P_{X^\star})$. By using the KKT condition \eqref{eq:KKT_equality}, we can write
    \begin{equation*}
        C(n) = i(0;P_{Y^\star}) = \sum_{y=0}^n {n \choose y} 0^y \log\frac{{n \choose y} 0^y}{P_{Y^\star}(y)} = \log\frac{1}{P_{Y^\star}(0)}.
    \end{equation*}
    By symmetry, we can argue that $P_{Y^\star}(0) = P_{Y^\star}(n)$. 
\end{proof}

We next show that there is at most one support point in the interval $\left(0,\frac{1}{n} \right]$ and, by symmetry, at most one point in $\left[1-\frac{1}{n},1 \right)$.  The proof technique we use was developed in \cite{mceliece1979practical} in the context of Poisson noise channels. 
\begin{prop} \label{prop:loc_info}
    For all $n\ge 1$, we have
    \begin{align}
        \left| \supp(P_{X^\star}) \cap \left(0,\frac{1}{n}\right] \right|&\le 1, \label{eq:limit_supp_1} \\
        \left| \supp(P_{X^\star}) \cap \left[1-\frac{1}{n},1\right) \right|&\le 1.\label{eq:limit_supp_2}
    \end{align}
\end{prop}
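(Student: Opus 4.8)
The plan is to prove \eqref{eq:limit_supp_1}; the bound \eqref{eq:limit_supp_2} then follows from the symmetry of Proposition~\ref{prop:input_symmetry}. The case $n=1$ is immediate, since the capacity-achieving distribution is supported on $\{0,1\}$, so I would focus on $n\ge 2$. I would argue by contradiction: suppose $\supp(P_{X^\star})$ contained two points $x_1<x_2$ in $\left(0,\tfrac1n\right]$. Since $x_2\le\tfrac1n<1$, both points are interior to $(0,1)$, and by the KKT conditions of Lemma~\ref{lem:KKT} we have $i(x_1;P_{Y^\star})=i(x_2;P_{Y^\star})=C(n)$ while $i(x;P_{Y^\star})\le C(n)$ everywhere. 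Hence $x_1,x_2$ are interior maximizers of $x\mapsto i(x;P_{Y^\star})$, so $i'(x_1;P_{Y^\star})=i'(x_2;P_{Y^\star})=0$; moreover $i(\cdot;P_{Y^\star})\not\equiv C(n)$ on $[x_1,x_2]$ (else $\cA_n$ would be infinite, contradicting Proposition~\ref{prop:bound_cardinality_cA}), so it dips below $C(n)$ in between and $i'(\cdot;P_{Y^\star})$ has two distinct downward ($+\to-$) zero crossings on $\left(0,\tfrac1n\right]$.

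The core of the proof is to rule this out by showing that $i(\cdot;P_{Y^\star})$ has at most one local maximum on $\left(0,\tfrac1n\right]$. It is enough to show that $x\mapsto i''(x;P_{Y^\star})$ changes sign at most once, and only from $+$ to $-$, on $\left(0,\tfrac1n\right]$: this makes $i'$ unimodal (first increasing, then decreasing), hence with at most one downward crossing, contradicting the previous paragraph. I would work from \eqref{eq:i_second_der}, $i''(x;P_{Y^\star})=\frac{n}{x(1-x)}+\frac{G(x)}{(1-x)^2}$. The first observation is that $G(x)\le 0$ on $(0,1)$: in \eqref{eq:G_Definition} the prefactor $(n-Y)(n-Y-1)$ is nonnegative, while the logarithm equals $\log\frac{c_Y(1-c_{Y+2})}{c_{Y+1}(1-c_{Y+1})}$ with $c_y=\expcnd{X}{Y=y}$, and since $y\mapsto c_y$ is non-decreasing (Lemma~\ref{lem:cond_exp_non_increasing}) we have $c_Y\le c_{Y+1}$ and $1-c_{Y+2}\le 1-c_{Y+1}$, so the ratio is at most $1$ and every summand is $\le 0$.

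With $G\le 0$ in hand, near $x=0$ the term $\frac{n}{x(1-x)}$ dominates and $i''>0$, so the only question is whether $i''$ can swing back to being positive. Because $\frac{n}{x(1-x)}$ is strictly decreasing on $\left(0,\tfrac12\right]\supseteq\left(0,\tfrac1n\right]$, it suffices to show that $x\mapsto\frac{G(x)}{(1-x)^2}$ is non-increasing on $\left(0,\tfrac1n\right]$, which forces $i''$ to be decreasing and hence to have a single sign change. I expect this monotonicity to be the main obstacle. My approach would be to rewrite the score in finite-difference (Bernstein) form as $i'(x;P_{Y^\star})=n\,\expcndn{n-1}{\log\frac{x/(1-x)}{r(K)}}{X=x}$ with $r(k)=\expcnd{X/(1-X)}{Y=k}$, from which $\frac{G(x)}{(1-x)^2}=-n(n-1)\,\expcndn{n-2}{\log\frac{r(K+1)}{r(K)}}{X=x}$. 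By the monotone-likelihood-ratio (total positivity) property of the binomial kernel, $r$ is increasing and $K$ is stochastically increasing in $x$, so the required monotonicity reduces to a log-convexity-type statement for $r$ in $k$, equivalently to monotonicity in $k$ of the Hankel ratios $\frac{M_{k,n-k}\,M_{k+2,n-k-2}}{M_{k+1,n-k-1}^2}$ of the moments $M_{a,b}=\expect{X^a(1-X)^b}$.

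Establishing this ratio control is the delicate step, and it is exactly where the restriction to $\left(0,\tfrac1n\right]$ should be exploited: on this interval the weight $Y-nx$ is negative only at $Y=0$, which localizes $K$ near $0$ and makes the relevant comparison tractable. Once the sign-change bound on $i''$ is in place, the unimodality of $i'$ yields the contradiction, and the symmetry of Proposition~\ref{prop:input_symmetry} then gives \eqref{eq:limit_supp_2}, completing the proof.
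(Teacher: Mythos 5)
Your overall strategy is the same as the paper's: start from $i''(x;P_{Y^\star})=\frac{n}{x(1-x)}+\frac{G(x)}{(1-x)^2}$, show $G\le 0$ via the monotonicity of $y\mapsto\expcnd{X^\star}{Y=y}$ (your argument for this is correct and matches the paper's), and then argue that $i''$ changes sign at most once on $\left(0,\frac{1}{n}\right]$, so that $i(\cdot;P_{Y^\star})$ has at most one interior local maximum there. The problem is that the decisive step --- your claim that $x\mapsto\frac{G(x)}{(1-x)^2}$ is non-increasing on $\left(0,\frac{1}{n}\right]$ --- is not proved; you explicitly defer it to an unestablished ``log-convexity / Hankel-ratio'' control of posterior moment ratios. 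Worse, the reduction itself is doubtful: writing $\frac{G(x)}{(1-x)^2}=\sum_{y=0}^{n-2}\binom{n}{y}x^{y}(1-x)^{n-y-2}(n-y)(n-y-1)\log(\cdots)$, the $y=0$ term is a non-positive constant times $(1-x)^{n-2}$, which is a \emph{decreasing} positive function of $x$, so that term is non-decreasing on the whole interval and works directly against the monotonicity you need. So as written the proof has a genuine gap at its core, and the route you sketch for filling it is harder than necessary and possibly a dead end.

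The paper closes exactly this gap with a different and elementary normalization: it considers $g(x)=x(1-x)\,i''(x;P_{Y^\star})=n+\sum_{y=0}^{n-2}\binom{n}{y}x^{y+1}(1-x)^{n-y-1}(n-y)(n-y-1)\log(\cdots)$. Each logarithmic coefficient is non-positive (same argument as yours), and each basis function $x^{y+1}(1-x)^{n-y-1}$ has derivative proportional to $(y+1)-nx$, hence is increasing on $\left(0,\frac{1}{n}\right]$ for every $y\ge 0$ --- this is precisely where the threshold $\frac{1}{n}$ enters. Therefore $g$ is non-increasing on $\left(0,\frac{1}{n}\right]$, and since $g(0)=n>0$ it has at most one sign change there; multiplying by the positive factor $x(1-x)$ does not affect signs, so $i''$ changes sign at most once and the conclusion follows without any moment-ratio machinery. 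If you replace your target quantity $\frac{G(x)}{(1-x)^2}$ by $\frac{x}{1-x}G(x)$, the rest of your argument (the contradiction via two downward zero crossings of $i'$, and the symmetry step for \eqref{eq:limit_supp_2}) goes through.
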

\begin{proof}
    For $n=1$, the claim follows from the fact that there are only two mass points at $\{0,1\}$. Next, we consider the case $n\ge 2$. 
    
    From expression~\eqref{eq:i_doubleprime} of Proposition~\ref{prop:der_info_den}, we have that $x(1-x)i''(x;P_{Y^\star})$ is given by  the expression in \eqref{eq:sec_der_inf_den_limited_supp}, where in the last step we have exploited the channel symmetry~\eqref{eq:channel_symmetry}.
    \begin{figure*}
    \begin{align}
        &x(1-x)i''(x;P_{Y^\star}) \\
        &= n+\frac{x}{1-x}  \expcnd{(n-Y)(n-Y-1)\log\frac{\expcnd{X^\star}{Y=Y}}{\expcnd{1-X^\star}{Y=Y+1}}\frac{\expcnd{1-X^\star}{Y=Y+2}}{\expcnd{X^\star}{Y=Y+1}}}{X=x} \\
        &= n+\sum_{y=0}^{n-2} \binom{n}{y} x^{y+1}(1-x)^{n-y-1} (n-y)(n-y-1)\log\frac{\expcnd{X^\star}{Y=y}}{\expcnd{X^\star}{Y=y+1}}\frac{\expcnd{X^\star}{Y=n-y-2}}{\expcnd{X^\star}{Y=n-y-1}} \label{eq:sec_der_inf_den_limited_supp}
    \end{align} 
    \end{figure*} 
    
    Since $y\mapsto \expcnd{X^\star}{Y=y}$ is a non-decreasing function (see Lemma~\ref{lem:cond_exp_non_increasing}), all the terms in the summation of~\eqref{eq:sec_der_inf_den_limited_supp} are non-positive. Moreover, the functions $x\mapsto x^{y+1}(1-x)^{n-y-1}$ for $y=0, \ldots, n-2$ are increasing for $x \le \frac{1}{n}$. As a consequence, the function $x\mapsto g(x)=x(1-x)i''(x;P_{Y^\star})$ is non-increasing for $x \in \left(0,\frac{1}{n}\right]$. Since $g(0)=n\ge 2$, the function $g$ has at most one zero in the interval $\left(0,\frac{1}{n}\right]$. Then, $i''$ has at most one zero in the interval $\left(0,\frac{1}{n}\right]$, hence $i(x;P_{Y^\star})-C(n)$ has at most one zero crossing in the interval $\left(0,\frac{1}{n}\right]$. This proves~\eqref{eq:limit_supp_1}. To prove~\eqref{eq:limit_supp_2}, we can use symmetry $X^\star \stackrel{d}{=} 1-X^\star$ from Proposition~\ref{prop:input_symmetry}.
\end{proof}

\subsection{Bounds on the Probabilities}
\label{sec:probabilities}

We begin by recalling that for $P_X \to P_{Y|X} \to P_Y$ and $Q_X  \to P_{Y|X} \to Q_{Y}$, we have that
\begin{equation}
    \kl{P_X}{Q_X} = \kl {P_Y}{Q_Y} + D (P_{X|Y} \| Q_{X|Y} |P_Y)  ,\label{eq:KL_identity}
\end{equation}
where the conditional relative entropy is defined as
\begin{equation}
    D (P_{X|Y} \| Q_{X|Y} |P_Y) = \sum_{y=0}^n P_Y(y) \kl{P_{X|Y}(\cdot|y)}{Q_{X|Y}(\cdot|y)}
\end{equation}

The key to finding bounds on the probabilities is the following lemma. 

\begin{lemma}\label{lem:expression_for_prob} For $x^\star \in \supp(P_{X^\star})$
    \begin{equation}
    P_{X^\star}(x^\star) = \rme^{ -C(n) -  \cD(x^\star) }, 
\end{equation}
where 
\begin{equation}
  \cD(x^\star) =  D  \left(\delta_{x^\star} \| P_{X^\star|Y} |P_{Y|X }(\cdot|x^\star) \right). 
\end{equation}
\end{lemma}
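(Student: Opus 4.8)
The plan is to apply the chain-rule decomposition \eqref{eq:KL_identity} to the degenerate choice $P_X = \delta_{x^\star}$ and $Q_X = P_{X^\star}$. Since the input $\delta_{x^\star}$ is a point mass, it induces the output distribution $P_Y = P_{Y|X}(\cdot|x^\star)$, while $Q_X = P_{X^\star}$ induces the optimal output $Q_Y = P_{Y^\star}$. The posterior associated with the deterministic input is again degenerate, $P_{X|Y}(\cdot|y) = \delta_{x^\star}$ for every $y$, whereas the posterior associated with $P_{X^\star}$ is $P_{X^\star|Y}$. Substituting these into \eqref{eq:KL_identity} yields
\begin{equation}
\kl{\delta_{x^\star}}{P_{X^\star}} = \kl{P_{Y|X}(\cdot|x^\star)}{P_{Y^\star}} + \cD(x^\star),
\end{equation}
where the final term is exactly $D\!\left(\delta_{x^\star} \| P_{X^\star|Y} \,|\, P_{Y|X}(\cdot|x^\star)\right)$ by the definition of the conditional relative entropy.

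Next I would evaluate each of the three terms. The left-hand side is a relative entropy from a Dirac mass to $P_{X^\star}$; because $x^\star \in \supp(P_{X^\star})$ and $P_{X^\star}$ is discrete (combine Proposition~\ref{prop:bound_cardinality_cA} with Lemma~\ref{lem:Uniquness_A_n}), it carries a genuine atom of size $P_{X^\star}(x^\star) > 0$ at $x^\star$, so that $\kl{\delta_{x^\star}}{P_{X^\star}} = -\log P_{X^\star}(x^\star)$. The first term on the right is precisely the information density, $\kl{P_{Y|X}(\cdot|x^\star)}{P_{Y^\star}} = i(x^\star; P_{Y^\star})$, and since $x^\star$ lies in the support of a capacity-achieving distribution, the KKT equality \eqref{eq:KKT_equality} gives $i(x^\star; P_{Y^\star}) = C(n)$. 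Rearranging the resulting relation $-\log P_{X^\star}(x^\star) = C(n) + \cD(x^\star)$ and exponentiating produces the claimed identity.

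The only point requiring genuine care is the measure-theoretic meaning of the Dirac-mass relative entropies and, in particular, their finiteness. Finiteness of $\kl{\delta_{x^\star}}{P_{X^\star}}$ is exactly where discreteness of $P_{X^\star}$ and the membership $x^\star \in \supp(P_{X^\star})$ enter, guaranteeing $P_{X^\star}(x^\star)>0$; finiteness of $\cD(x^\star)$ then follows, since Bayes' rule gives $P_{X^\star|Y}(x^\star|y)>0$ on the support of $P_{Y|X}(\cdot|x^\star)$. A fully self-contained alternative that sidesteps \eqref{eq:KL_identity} is to write $\cD(x^\star) = -\sum_{y=0}^{n} P_{Y|X}(y|x^\star)\log P_{X^\star|Y}(x^\star|y)$, expand $P_{X^\star|Y}(x^\star|y)$ by Bayes' rule, and then recognize the information density $i(x^\star;P_{Y^\star})$ together with the normalization $\sum_y P_{Y|X}(y|x^\star)=1$; invoking \eqref{eq:KKT_equality} once more reproduces the same relation and makes the finiteness transparent.
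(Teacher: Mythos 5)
Your proposal is correct and follows essentially the same route as the paper: both apply the chain-rule identity \eqref{eq:KL_identity} with $P_X=\delta_{x^\star}$ and $Q_X=P_{X^\star}$, evaluate $\kl{\delta_{x^\star}}{P_{X^\star}}=-\log P_{X^\star}(x^\star)$, and invoke the KKT equality \eqref{eq:KKT_equality} to replace $\kl{P_{Y|X}(\cdot|x^\star)}{P_{Y^\star}}$ by $C(n)$. Your added remarks on finiteness and the direct Bayes-rule expansion are sound but not a different method.
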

\begin{proof}
     Using the  equality condition  in the KKT \eqref{eq:KKT_equality}, we have that for $x^{\star} \in \supp(P_{X^{\star}})$
\begin{align}
   C(n) &= \kl{P_{Y|X}(\cdot|x^\star)}{P_{Y^\star}}\\
    &=  \kl{ P_{Y_{x^\star} }  }{P_{Y^\star}} \label{eq:Def_output_P_Y}\\
    &= \kl{\delta_{x^\star} }{  P_{X^\star} } -    D (\delta_{x^\star} \| P_{X^\star|Y} |P_{Y_{x^\star} }) \label{eq:apply_KL_identity}\\
    &= \log \frac{1}{P_{X^\star} (x^\star)} -    D (\delta_{x^\star} \| P_{X^\star|Y} |P_{Y_{x^\star} }), \label{eq:last_step_P(x)}
\end{align}
where  \eqref{eq:Def_output_P_Y} follows by defining $\delta_{x^\star} \to P_{Y|X} \to P_{Y_{x^*}}$;  and \eqref{eq:apply_KL_identity} follows by using \eqref{eq:KL_identity}.

By rearranging~\eqref{eq:last_step_P(x)}, and recognizing that $P_{Y_{x^*}}(\cdot) = P_{Y|X}(\cdot |x ^\star)$,  we arrive at: for $x^{\star} \in \supp(P_{X^{\star}})$ 
\begin{equation}
    P_{X^\star}(x^\star) = \rme^{ -C(n) -  D  \left(\delta_{x^\star} \| P_{X^\star|Y} |P_{Y|X }(\cdot|x^\star) \right) }. 
\end{equation}
\end{proof}

The term $\cD(x^\star)$ measures how on average the $P_{X^\star|Y}$ is close to a point measure. We refer to $\cD(x^\star)$ as the \emph{crest-factor}.\footnote{In signal processing, the crest factor  measures how peaky the waveform is. Specifically, it compares the peak amplitude of a waveform relative to its average value.}  

From Lemma~\ref{lem:expression_for_prob}, by  using $\cD(x^\star) \ge 0$, which follows from the non-negativity of the  relative-entropy, we immediately arrive at the following bound: 
\begin{equation}
    P_{X^\star}(x^\star) \le  \rme^{ -C(n) }, \qquad  x^\star \in \supp(P_{X^\star}). \label{eq:simple_prob_bound}
\end{equation}
The bound in \eqref{eq:simple_prob_bound} might appear ineffective due to the fact that the capacity is unknown.  However, note that for any $\tilde{X}$, from the definition of the capacity we have that 
\begin{equation}
    P_{X^\star}(x^\star) \le  \rme^{ - I(\tilde{X}; \tilde{Y}) },  \qquad  x^\star \in \supp(P_{X^\star}),
\end{equation}
which implies that any good guess results in an upper bound.

The next result improves upon the bound in \eqref{eq:simple_prob_bound}.

\begin{prop}
 \label{prop:crest_factor_bounds}
\text{ }
\begin{itemize}
    \item First Bound: for $x \in \supp(P_{X^\star}) \setminus \{0,1\} $
    \begin{align}\label{eq:second_bound_factor_crest}
      \cD(x) \ge  \frac{(1-x)^n \log(1-x)^n+x^n \log(x^n)}{(1-x)^n+x^n-1}
    \end{align}
    \item Second Bound: for $x\in \supp(P_{X^\star}) \setminus\{\frac{1}{2}\}$
    \begin{align}\label{eq:first_bound_factor_crest}
     \cD(x) \ge    \log  \left( 1 + \left( \frac{x}{1-x} \right)^{n(1-2x)} \right)
    \end{align}
\end{itemize}
    
\end{prop}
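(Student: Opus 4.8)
The plan is to start both bounds from the pointwise expansion of the crest factor. Since $\delta_{x}$ is a point mass, $D(\delta_x \| P_{X^\star|Y}(\cdot|y)) = \log\frac{1}{P_{X^\star|Y}(x|y)}$, so
\begin{equation}
\cD(x) = \sum_{y=0}^n P_{Y|X}(y|x)\,\log\frac{1}{P_{X^\star|Y}(x|y)}.
\end{equation}
The observation I would use repeatedly is that every summand is nonnegative (because $P_{X^\star|Y}(x|y)\le 1$), so I am free to discard terms to get a valid lower bound, and free to replace any individual posterior mass by an upper bound. The only other inputs are Bayes' rule, $P_{X^\star|Y}(x|y)=P_{Y|X}(y|x)P_{X^\star}(x)/P_{Y^\star}(y)$, together with Corollary~\ref{cor:rel_cap_probY} and Lemma~\ref{lem:expression_for_prob}.

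For the First Bound \eqref{eq:second_bound_factor_crest} I would retain only the $y=0$ and $y=n$ terms. Using $P_{Y^\star}(0)=P_{Y^\star}(n)=\rme^{-C(n)}$ (Corollary~\ref{cor:rel_cap_probY}) and $P_{X^\star}(x)=\rme^{-C(n)-\cD(x)}$ (Lemma~\ref{lem:expression_for_prob}), the two posterior masses collapse to $P_{X^\star|Y}(x|0)=(1-x)^n\rme^{-\cD(x)}$ and $P_{X^\star|Y}(x|n)=x^n\rme^{-\cD(x)}$, with the unknowns $C(n)$ and $P_{X^\star}(x)$ cancelling. Substituting these together with $P_{Y|X}(0|x)=(1-x)^n$ and $P_{Y|X}(n|x)=x^n$ produces the self-referential inequality $\cD(x)\ge -(1-x)^n\log(1-x)^n - x^n\log x^n + \big((1-x)^n+x^n\big)\cD(x)$. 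Since $x^n<x$ and $(1-x)^n<1-x$ strictly for $x\in(0,1)$ and $n\ge 2$, the coefficient $1-(1-x)^n-x^n$ is positive, so I can solve the inequality for $\cD(x)$ and, after multiplying numerator and denominator by $-1$, recover \eqref{eq:second_bound_factor_crest}. The only delicate point here is confirming the sign of $1-(1-x)^n-x^n$ so the division preserves the direction of the inequality.

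For the Second Bound \eqref{eq:first_bound_factor_crest} I would instead exploit the symmetry $P_{X^\star}(x)=P_{X^\star}(1-x)$ (Proposition~\ref{prop:input_symmetry}). Because $x\ne\frac12$, the points $x$ and $1-x$ are \emph{distinct} support points, so $P_{X^\star|Y}(x|y)+P_{X^\star|Y}(1-x|y)\le 1$ for every $y$. Forming the posterior odds and cancelling the equal priors gives $P_{X^\star|Y}(1-x|y)/P_{X^\star|Y}(x|y)=\big(\tfrac{x}{1-x}\big)^{n-2y}$, whence $P_{X^\star|Y}(x|y)\le\big(1+(\tfrac{x}{1-x})^{n-2y}\big)^{-1}$. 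Plugging this into the expansion yields $\cD(x)\ge\sum_y P_{Y|X}(y|x)\,\phi(y)$ with $\phi(y)=\log\!\big(1+(\tfrac{x}{1-x})^{n-2y}\big)$.

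The final step is a Jensen argument, and this is where I expect the only genuine work. Writing $\phi(y)=\log(1+\rme^{an-2ay})$ with $a=\log\frac{x}{1-x}$, I recognize $\phi$ as a softplus of an affine function of $y$, which is convex; the derivative computation $\frac{\rmd^2}{\rmd y^2}\log(1+\rme^{b-2ay})>0$ confirms it. Applying Jensen with $\expcnd{Y}{X=x}=nx$ then gives $\cD(x)\ge\phi(nx)=\log\!\big(1+(\tfrac{x}{1-x})^{n(1-2x)}\big)$, which is \eqref{eq:first_bound_factor_crest}. The exclusion of $x=\frac12$ is precisely the requirement that $x$ and $1-x$ be two separate support points; at $x=\frac12$ the pairing collapses and the summand inequality is no longer available, so the bound is only claimed on $\supp(P_{X^\star})\setminus\{\tfrac12\}$.
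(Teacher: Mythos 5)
Your proposal is correct and follows essentially the same route as the paper: for the first bound you keep only the $y=0$ and $y=n$ terms of the (nonnegative) expansion of $\cD(x)$, use Corollary~\ref{cor:rel_cap_probY} and Lemma~\ref{lem:expression_for_prob} to make the posterior masses self-referential in $\cD(x)$, and solve the resulting inequality after checking the sign of $1-(1-x)^n-x^n$; for the second bound you use the symmetry $P_{X^\star}(x)=P_{X^\star}(1-x)$ to bound the posterior by the two-point odds ratio and then apply Jensen (your convexity of the softplus $\phi$ is exactly the paper's concavity of $-\phi$). The only cosmetic difference is that you work with $\cD(x)$ directly rather than with $-\cD(x)$ as the paper does.
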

\begin{proof}
We begin by noticing that 
\begin{equation}
-\cD(x) = \expcnd{\log(P_{X^\star|Y}(x|Y))}{X=x} . \label{eq:Minus_D}
\end{equation}
 To show the first bound notice that for $x\in (0,1)$ 
     \begin{align}
     &\expcnd{\log(P_{X^\star|Y}(x|Y))}{X=x}  \notag\\
     &\le  \expcnd{(\mathbbm{1}(Y=0)+\mathbbm{1}(Y=n))\log(P_{X^\star|Y}(x|Y))}{X=x} \\
         &= P_{Y|X}(0|x) \log\frac{P_{Y|X}(0|x) P_{X^\star}(x)}{P_{Y^\star}(0)}  \notag\\
         & \quad+ P_{Y|X}(n|x) \log\frac{P_{Y|X}(n|x) P_{X^\star}(x)}{P_{Y^\star}(n)} \\
         &= (1-x)^n \left(\log((1-x)^n P_{X^\star}(x)) +C(n)\right)  \notag\\
         & \quad + x^n\left( \log(x^n P_{X^\star}(x))+C(n) \right) \label{eq:use_rel_capacity_probY}\\
         &= \left((1-x)^n+x^n\right)\left(\log(P_{X^\star}(x)) +C(n)\right)  \notag\\
         &\quad + (1-x)^n \log(1-x)^n+x^n \log(x^n) \label{eq:bound_exp_log_P_X_given_Y}\\
         &= - \left((1-x)^n+x^n\right) \cD(x)  \notag\\
         &\quad + (1-x)^n \log(1-x)^n+x^n \log(x^n), \label{eq:using_Def_0f_cD}
     \end{align}
     where in~\eqref{eq:use_rel_capacity_probY} we used Corollary~\ref{cor:rel_cap_probY}; and where in \eqref{eq:using_Def_0f_cD} we have used Lemma~\ref{lem:expression_for_prob}. Combing \eqref{eq:using_Def_0f_cD} with \eqref{eq:Minus_D} we arrive at the desired first bound.

    To show the second bound note that by symmetry of the optimal distribution,   we have that, for $x \neq 1/2$,
\begin{align}
&P_{X^*|Y}(x|y) \notag  \\
&= \frac{P_{X^\star}(x) P_{Y|X}(y|x) }{P_{Y^\star}(y)}\\
&\le  \frac{P_{X^\star}(x) P_{Y|X}(y|x) }{P_{X^\star}(x)  P_{Y|X}(y|x) + P_{X^\star}(1-x)  P_{Y|X}(y|1-x) }\\
&=  \frac{P_{Y|X}(y|x) }{  P_{Y|X}(y|x) +  P_{Y|X}(y|1-x) }\\
&= \frac{1}{1 + \left( \frac{x}{1-x} \right)^{n-2y}}.
\end{align}

Next, by recognizing that the function $y\mapsto \log\frac{1}{1+\left( \frac{x}{1-x} \right)^{n-2y}}$ is concave for any $x\in(0,1)$, by applying Jensen's inequality we get:
\begin{align}
 &\expcnd{\log(P_{X^\star|Y}(x|Y))}{X=x} \notag\\
 &\le \bbE \left[ \log \frac{1}{1 + \left( \frac{x}{1-x} \right)^{n-2Y}} |X =x  \right]  \label{eq:Jensen_crest}  \\
 &\le \log \frac{1}{1 + \left( \frac{x}{1-x} \right)^{n-2\expcnd{Y}{X=x}}} \\
 &= \log \frac{1}{1 + \left( \frac{x}{1-x} \right)^{n(1-2x)}}.
\end{align}

This concludes the proof. 
\end{proof}


\begin{figure}
    \centering
    \input{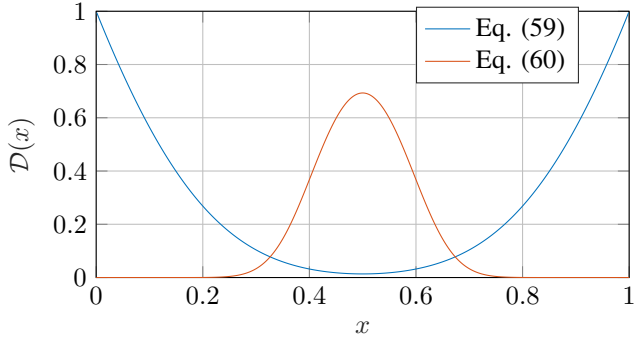}
    \caption{Comparison of the bounds on the crest-factor $\cD(x)$ reported in Proposition~\ref{prop:crest_factor_bounds} for $n=10$.}
    \label{fig:crest_factor_bound}
\end{figure}

In Fig.~\ref{fig:crest_factor_bound}, for the case $n=10$ we show a comparison between the bounds on $\cD(x)$ given in Proposition~\ref{prop:crest_factor_bounds}. The upper bounds on $\cD(x)$ so far have been elusive.

    

\begin{table*}[h!]
    \centering
    \begin{tabular}{|c|c|c|c|c|}
    \Xhline{2pt}
    $n$ & $C(n)$ & ${\cal X}\equiv\supp(P_{X^\star})$   & $\{P_{X^\star}(x),\, x \in {\cal X} \}$ & $\{P_{Y^\star}(y), \, y\in\{0\} \cup [n] \}$ \\
    \Xhline{1pt}
    $1$ & $\log(2)$ & $\left\{0,1 \right\}$  & $\left\{\frac{1}{2},\frac{1}{2}\right\}$     &  $\left\{\frac{1}{2},\frac{1}{2}\right\}$\\
    \hline
    $2$ &  $\log\left(\frac{17}{8}\right)$ & $\left\{0, \frac{1}{2} ,1\right\}$   & $\left\{ \frac{15}{34},  \frac{2}{17},\frac{15}{34}\right\}$  &  $\left\{ \frac{8}{17}, 
 \frac{1}{17},\frac{8}{17} \right\} $ \\
    \hline
    $3$&  $\log\left(\frac{19}{8}\right)$ & $\left\{0, \frac{1}{2} ,1\right\}$ & $\left\{  \frac{15}{38},  \frac{4}{19}, \frac{15}{38}\right\}$ & $\left\{ \frac{8}{19}, 
 \frac{3}{38},\frac{3}{38},\frac{8}{19} \right\}$ \\
   
    \Xhline{1pt}
\end{tabular}
    \caption{Capacity and capacity-achieving distributions.  }
    \label{tab:CapacityExpression}
    
\end{table*}

\subsection{Bounds on the Cardinality} \label{sec:card_bounds}
We now provide upper and lower bounds on the cardinality of the support of $P_{X^\star}$.  We start with the following exact formula for the number of support points. 
\begin{prop}  \label{prop:ExactNumSup}
For $n \ge 1$
    \begin{equation}
| \supp(P_{X^\star})| = \frac{\rme^{ C(n) }}{    \bbE \left[ \rme^{  - \cD(U^*)  } 
 \right]},
\end{equation}
where $U^*$ is uniformly distributed on $\supp(P_{X^\star})$.
\end{prop}
\begin{proof}
   Starting with Lemma~\ref{lem:expression_for_prob} and  summing over $x^{\star} \in \supp(P_{X^{\star}})$, we arrive at
\begin{equation}
1 =   \rme^{ -C(n) } \sum_{ x^{\star} \in \supp(P_{X^{\star}}) }  \rme^{  -  \cD(x^\star) }. \label{eq:Intermideiate_step}
\end{equation}
Dividing both sides of  \eqref{eq:Intermideiate_step} by $| \supp(P_{X^\star})|$ and rearranging,  we arrive at the desired result.  
\end{proof}

From Proposition~\ref{prop:ExactNumSup} and  non-negativity of $\cD$, we arrive at 
\begin{equation}
    | \supp(P_{X^\star})| \ge \rme^{C(n)} =\Theta( \sqrt{n}) \label{eq:sqrt(n) lower bound}
\end{equation}
where the order of the lower bound follows from the fact that $C(n)$ scales as $\frac{1}{2} \log(n)$ as will be shown in Theorem~\ref{thm:bounds_on_cap} (Section~\ref{sec:bounds_on_cap}  below). 

We now  move on to showing upper bounds. We already have demonstrated a bound of order $n+1$ in Proposition \ref{prop:bound_cardinality_cA}.  We now improve this upper bound by a factor of two. 
\begin{theorem} \label{thm:n/2 bound}
    For $n \ge 1$
    \begin{align}
        |\supp{(P_{X^\star})}| &\le  2 + \left\lfloor \frac{1}{2}\sfN\left((0,1); \, i''(x;P_{Y^\star}) \right) \right\rfloor \label{eq:i" bound} \\
        & \le 2+\left\lfloor\frac{n}{2} \right\rfloor. \label{eq: n/2 bound}
    \end{align}
\end{theorem}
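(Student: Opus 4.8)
The plan is to bound the two quantities separately, working with $g(x) := i(x;P_{Y^\star}) - C(n)$. By the KKT conditions (Lemma~\ref{lem:KKT}) we have $g(x) \le 0$ on $[0,1]$ with equality exactly on $\supp(P_{X^\star})$, and by Proposition~\ref{prop:0_is_optimal} both endpoints $0,1$ lie in the support. The case $n=1$ is immediate since then $\supp(P_{X^\star}) = \{0,1\}$, so I would assume $n\ge 2$, for which $i(\cdot;P_{Y^\star})$ is real-analytic on $(0,1)$ and continuous up to the endpoints (indeed $g(0)=g(1)=0$ by Corollary~\ref{cor:rel_cap_probY}). Writing $m := |\supp(P_{X^\star}) \cap (0,1)|$, so that $|\supp(P_{X^\star})| = m+2$, the first inequality~\eqref{eq:i" bound} reduces to showing $\sfN\left((0,1); i''\right) \ge 2m$.

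For this I would run Rolle's theorem twice. List the support points as $0 = z_0 < z_1 < \cdots < z_m < z_{m+1} = 1$, where $z_1,\dots,z_m$ are the interior ones; since $g$ vanishes at all $m+2$ of them, Rolle's theorem yields at least $m+1$ zeros of $g'$, one strictly inside each gap $(z_j,z_{j+1})$. In addition, each interior support point $z_i$ is an interior global maximizer of $g$, hence $g'(z_i)=0$; these $m$ points are distinct from the $m+1$ gap-interior zeros. Thus $g'$ has at least $2m+1$ zeros in $(0,1)$, and a second application of Rolle's theorem gives at least $2m$ zeros of $g''$ in $(0,1)$. Since $C(n)$ is constant, $g''=i''$, so $\sfN\left((0,1);i''\right) \ge 2m$, i.e. $m \le \lfloor \tfrac12 \sfN((0,1);i'') \rfloor$, which is exactly~\eqref{eq:i" bound}.

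For the second inequality~\eqref{eq: n/2 bound} I would pass to $h(x) := x(1-x)\,i''(x;P_{Y^\star})$, whose zeros in $(0,1)$ coincide with those of $i''$ because $x(1-x)>0$ there. Using the explicit expression for $h$ established in the proof of Proposition~\ref{prop:loc_info} (equation~\eqref{eq:sec_der_inf_den_limited_supp}),
\begin{equation}
h(x) = n + \sum_{y=0}^{n-2} \gamma_y\, x^{y+1}(1-x)^{n-y-1},
\end{equation}
where the coefficients $\gamma_y = \binom{n}{y}(n-y)(n-y-1)\log\!\big(\cdots\big)$ do not depend on $x$, I observe that each summand $x^{y+1}(1-x)^{n-y-1}$ is a polynomial of degree exactly $n$. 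Hence $h$ is a polynomial of degree at most $n$, and it is not identically zero since $h(0)=n>0$. A nonzero polynomial of degree at most $n$ has at most $n$ real roots, so $\sfN\left((0,1);i''\right) = \sfN\left((0,1);h\right) \le n$, and therefore $\lfloor \tfrac12 \sfN((0,1);i'')\rfloor \le \lfloor n/2\rfloor$. Chaining this with~\eqref{eq:i" bound} gives $|\supp(P_{X^\star})| \le 2 + \lfloor n/2\rfloor$.

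The main obstacle is the bookkeeping in the first step: the factor-of-two improvement over Witsenhausen's $n+1$ bound hinges on genuinely extracting $2m$ (and not $2m-2$) zeros of $i''$, and this gain comes precisely from the two boundary support points $0,1$, which create the two extra gaps $(0,z_1)$ and $(z_m,1)$ in the first Rolle count. I would therefore take care to justify that $g$ is continuous on $[0,1]$ and differentiable on $(0,1)$ so that Rolle's theorem applies up to the boundary, and that the gap-interior zeros of $g'$ are distinct from the critical points at the interior support points. The degree computation in the second step is then routine.
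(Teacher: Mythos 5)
Your proposal is correct and follows the same strategy as the paper: both exploit that the KKT conditions make every support point a global maximizer of $g=i(\cdot;P_{Y^\star})-C(n)$, that $0,1\in\supp(P_{X^\star})$ supplies the two extra intervals responsible for the factor-of-two gain, and that $x(1-x)\,i''(x;P_{Y^\star})$ is a degree-$n$ polynomial for the final step. The only difference is in execution of the zero count: the paper argues via the number of concavity changes bounding the number of local maxima by $2+\lfloor k/2\rfloor$, whereas you interleave the $m+1$ Rolle points of $g'$ with the $m$ interior critical points and apply Rolle a second time to extract $2m$ zeros of $i''$ directly --- an equivalent but, if anything, more airtight rendering of the same counting argument.
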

\begin{proof}

    First of all, note that by Proposition~\ref{prop:0_is_optimal} we know that the function $i(\cdot; P_{Y^\star})$ starts with a local maximum at $x^\star =0$, and from Proposition~\ref{prop:derivatives_info_density} we know that $i''(x; P_{Y^\star})>0$  for $x\to 0$. Now, by continuity of $i(\cdot; P_{Y^\star})$,  if $i(\cdot; P_{Y^\star})$ changes concavity $k$ times, then it has at most $2+\left\lfloor \frac{k}{2} \right\rfloor$ local maxima.
    Moreover, from the KKT conditions we know that all the zeros of $i(\cdot;P_{Y^\star})-C(n)$ are local maxima. 
    
 Then, we can write 
        \begin{align}
        |\supp{(P_{X^\star})}|  
         & \le \sfN\lr{ \, [0,1]; \, i(\cdot;P_{Y^\star})-C(n) \, } \\
        &\le 2 + \left\lfloor \frac{1}{2}\sfN\left((0,1); \, i''(\cdot;P_{Y^\star}) \right) \right\rfloor \\
        &\le 2+\left\lfloor\frac{n}{2} \right\rfloor, \label{eq:bound_zeros_i''} 
    \end{align}
    where 
    \eqref{eq:bound_zeros_i''} follows from the fact that $x\mapsto x(x-1)i''(x;P_{Y^\star})$ is a polynomial of degree $n$ (see Proposition \ref{prop:derivatives_info_density}). 
\end{proof}

A few remarks are now in order:
\begin{itemize}[leftmargin =*]
    \item The proof of Theorem~\ref{thm:n/2 bound} does not rely on the uniqueness of $P_{X^*}$. Therefore, it improves on the Witsenhausen bound by a factor of two. Furthermore, the key part of the proof leading to \eqref{eq:i" bound} is independent of the fact that the channel is binomial: Indeed, this fact is only used in \eqref{eq: n/2 bound}.   Consequently, we posit that this bound may prove more beneficial for channels where it is feasible to establish bounds on the number of zeros in~$i''(x; P_{X^\star})$.
    \item The lower bound in \eqref{eq:sqrt(n) lower bound} and the upper bound in \eqref{eq: n/2 bound} do not match in their respective orders. This lack of alignment is perhaps unsurprising, considering the inherent difficulty in establishing tight bounds on the cardinality of the support. For further exploration of this challenging problem, the interested reader is directed to \cite{dytso2019capacity,dytso2021properties, kashyap2020many}. We suspect that neither the upper nor the lower bounds are tight. 
\end{itemize}

\section{Capacity and Bounds on the Capacity}
\label{sec:cap_bounds}

In this section, we provide exact  values of the capacity for $n \le 3$. For the remaining regime we provide upper and lower bounds on capacity.

\subsection{Exact Capacity for $n \le 3$}

The exact capacity can be computed by first making a guess of the capacity-achieving distribution according to the properties outlined in Section~\ref{sec:main_I}. Then, this guess can be checked against the sufficient and necessary KKT conditions in Lemma~\ref{lem:KKT}. These, somewhat tedious, computations  are performed in Appendix~\ref{sec:computations_of_cap_exact}  and Table~\ref{tab:CapacityExpression} displays the results.

\subsection{Bounds on the Capacity}
\label{sec:bounds_on_cap}

We now provide  bounds on the capacity. Our upper bound relies on the dual representation of the capacity as:
\begin{equation}
   C(n) = \inf_q \max_{x \in [0,1]} \kl{P_{Y|X}(\cdot|x)}{q},
\end{equation}
which, by properly choosing an auxiliary output distribution \(q\), often leads to order-tight bounds. The  reader is referred to \cite{lapidoth2008capacity,mckellips2004simple,boundsAmplt} for applications to other channels. It will also be convenient to work with continuous output, and we will use the following channel output: $\tilde{Y} = Y + U$, where $U \sim \mathcal{U}(0,1)$. Note that because the distance between original $Y$ points is one, such additive noise can be completely filtered out, and we have $I(X; Y) = I(X; Y+ U)$ for all $X$. This trick has been used before in the context of the Poisson channel in \cite{lapidoth2008capacity}.

The lower bound on the capacity will follow from choosing a convenient input distribution. The exact computation, however, will not be possible, and some further bounds on the entropy of the binomial distribution will be needed. Therefore,  in Appendix~\ref{app:bound_binom_entropy}, we also provide a new upper bound on the entropy of a binomial distribution. Bounds   on the entropy of a binomial distribution have been considered before in \cite{knessl1998integral,cheraghchi2018expressions}.

\begin{theorem} \label{thm:bounds_on_cap} For $n\ge 1$, the channel capacity is bounded below by
    \begin{align} \label{eq:capacity_lower_bound}
        C(n) &\ge \max\left\{\log(2), \log(\pi n)-\frac{1}{2}\log\left(2\pi \rme \left( \frac{n}{8}+\frac{1}{12}\right) \right) \right. \nonumber\\
        &\quad\left.+\frac{1}{\sqrt{\pi \left(n+\frac{1}{4}\right)}}\log\left( \frac{1}{16n^2}  \right)   -\log(4)- 1\right\}
    \end{align}
    and bounded above by
    \begin{align} \label{eq:capacity_upper_bound}
       & C(n) \notag\\
        &\le \min\left\{  
 \log \left( 3+\left\lfloor\frac{(n-1)}{2} \right\rfloor \right) , \, \log(\pi(n+1))-\frac{1}{2}\log( n) \right.\nonumber\\
        &\quad\left.+  \frac{3}{2}+\frac{1}{2^{n+1}}\log\left(n\right) +\frac{1}{2}\log\left( \frac{3}{2 }\left( 1+\frac{1}{n}\right)\right)\right\}. 
    \end{align}
\end{theorem}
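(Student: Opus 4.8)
The plan is to treat the two lower bounds and the two upper bounds separately; within each pair one bound is elementary (and dominates for small $n$) while the other is analytic (order-sharp for large $n$), so the $\max$ and $\min$ merely record whichever is stronger.

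For the lower bounds, the $\log 2$ term is immediate: restricting the input to the two-point law placing mass $\tfrac12$ on each of $x=0$ and $x=1$ makes $Y=0$ and $Y=n$ deterministic, so the inputs are perfectly distinguishable and $I(X;Y)=\log 2$. For the analytic lower bound I would evaluate $I(X;Y)$ at a single convenient input, namely the arcsine (Jeffreys) law with density $\propto 1/\sqrt{x(1-x)}$ on $[0,1]$, the natural prior for the binomial family. The key device is the dithered output $\tilde Y=Y+U$, $U\sim\mathcal U(0,1)$: since $I(X;Y)=I(X;\tilde Y)$ and, for integer-valued $Y$, $h(\tilde Y\mid X=x)=H(Y\mid X=x)=H(\mathrm{Bin}(n,x))$, one obtains the clean splitting
\[
I(X;Y)=h(\tilde Y)-\mathbb E\!\left[H(\mathrm{Bin}(n,X))\right].
\]
I would then upper bound $\mathbb E[H(\mathrm{Bin}(n,X))]$ using the new binomial-entropy estimate of the appendix together with Jensen's inequality and concavity of $\log$, exploiting that under the arcsine law $\mathbb E[X(1-X)]=\tfrac18$, which produces the $\tfrac12\log\!\big(2\pi\rme(n/8+1/12)\big)$ term. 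The differential entropy $h(\tilde Y)$ is bounded below by comparison with the arcsine density scaled to $[0,n]$, whose entropy is $\log(\pi n/4)$; this gives the $\log(\pi n)-\log 4$ contribution, while the $\tfrac{1}{\sqrt{\pi(n+1/4)}}\log(1/16n^2)$ term arises from controlling the discrepancy between the law of $\tilde Y$ and this reference near the two endpoints.

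For the upper bounds, the first term follows from a pure cardinality argument: since $P_{X^\star}$ is discrete, $C(n)=I(X^\star;Y)\le H(X^\star)\le\log|\supp(P_{X^\star})|$, and the bound $|\supp(P_{X^\star})|\le 2+\lfloor n/2\rfloor$ of Theorem~\ref{thm:n/2 bound} yields $C(n)\le\log(2+\lfloor n/2\rfloor)\le\log\!\big(3+\lfloor(n-1)/2\rfloor\big)$. The analytic upper bound uses the dual (minimax) representation $C(n)=\inf_q\max_{x\in[0,1]}\kl{P_{\tilde Y|X}(\cdot|x)}{q}$ with the continuous output $\tilde Y$. I would choose $q$ to be the arcsine density scaled to $[0,n+1]$, i.e. $q(t)\propto 1/\sqrt{t(n+1-t)}$; then
\[
\kl{P_{\tilde Y|X}(\cdot|x)}{q}=-H(\mathrm{Bin}(n,x))+\log\pi+\tfrac12\,\mathbb E\!\left[\log\big(\tilde Y(n+1-\tilde Y)\big)\,\middle|\,X=x\right].
\]
Applying Jensen to the last term (concavity of $\log$) and evaluating $\mathbb E[\tilde Y(n+1-\tilde Y)\mid X=x]$, a quadratic in $x$ of order $(n+1)^2/4$, produces the $\log(\pi(n+1))$ and $-\tfrac12\log n$ terms, while a uniform-in-$x$ lower bound on $H(\mathrm{Bin}(n,x))$ controls the $-H$ term and yields the remaining additive constants, with the $2^{-(n+1)}\log n$ piece reflecting the separate treatment of $x$ near $\{0,1\}$, where the binomial mass concentrates.

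The main obstacle throughout is obtaining \emph{explicit, non-asymptotic} control of the binomial entropy $H(\mathrm{Bin}(n,x))$ that is uniform in $x\in[0,1]$ and valid for every $n\ge1$: the lower bound needs an upper estimate on $\mathbb E[H(\mathrm{Bin}(n,X))]$, whereas the upper bound needs a lower estimate on $H(\mathrm{Bin}(n,x))$, and both must degrade gracefully as $x\to\{0,1\}$, where the Gaussian-type approximation $H(\mathrm{Bin}(n,x))\approx\tfrac12\log(2\pi\rme\,nx(1-x))$ fails. This is precisely why a dedicated binomial-entropy bound is proved in the appendix, and matching its correction terms against the contributions of $h(\tilde Y)$ (respectively the reference $q$) near the endpoints accounts for the otherwise opaque constants appearing in the statement.
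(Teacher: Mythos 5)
Your proposal follows the paper's proof in all four pieces: the two-point input at $\{0,1\}$ for the $\log 2$ term; the Beta$(1/2,1/2)$ (arcsine) input with the decomposition $I(X;Y)=H(Y)-H(Y|X)$, the appendix upper bound on the binomial entropy, and Jensen with $\expect{X(1-X)}=\tfrac18$ for the analytic lower bound; the cardinality bound $C(n)\le H(X^\star)$ via Theorem~\ref{thm:n/2 bound} for the first upper bound; and duality with the arcsine reference on $(0,n+1)$, the dither $\tilde Y=Y+U$, Jensen, and a uniform bound on the residual function for the analytic upper bound. The one step where your route, as stated, would fail is the lower bound on the output entropy: "comparing $h(\tilde Y)$ with a reference density" gives an \emph{upper} bound on entropy (entropy is cross-entropy minus a nonnegative divergence), not a lower bound. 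What is actually needed, and what the paper does, is a pointwise upper bound on $P_Y(y)$: under the arcsine prior $P_Y(y)$ is computed in closed form as $\binom{n}{y}\frac{1}{\pi n}\frac{\Gamma(y+\frac12)\Gamma(n-y+\frac12)}{\Gamma(n)}$, Kershaw's inequality $\Gamma(x+s)/\Gamma(x+1)\le (x+s/2)^{s-1}$ bounds the Gamma ratios from above, and then $H(Y)=-\expect{\log P_Y(Y)}$ is bounded below term by term. In that derivation the $-\log 4$ is $\expect{\log X}$ for the arcsine law and the $\frac{1}{\sqrt{\pi(n+1/4)}}\log\frac{1}{16n^2}$ term comes from the $Y=0$ contribution via $\expect{(1-X)^n}=\frac{\Gamma(n+\frac12)}{\sqrt{\pi}\,\Gamma(n+1)}$ (doubled by the $Y\leftrightarrow n-Y$ symmetry), not from the differential entropy $\log(\pi n/4)$ of a scaled arcsine reference; also, no dither is needed on the lower-bound side since one can work with the discrete $H(Y)$ directly. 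With that step repaired along these lines, your outline matches the paper's argument.
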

\begin{proof}

   A first lower bound follows from observing that $P_{Y|X}(0|0)=1$ and $P_{Y|X}(n|1)=1$. Hence the input distribution $P_X(0)=P_X(1)=\frac{1}{2}$ gives a mutual information of $I(X;Y)=\log(2)$ nats for all $n\ge 1$.

    For an alternative capacity lower bound, pick an input pdf as
\begin{equation}\label{eq:Beta_distribution_f_X}
    f_X(x) = \frac{1}{\pi \sqrt{x(1-x)}}, \qquad x \in (0,1),
\end{equation}
which is a Beta distribution with shape parameters $\alpha = \beta = \frac{1}{2}$. 
 Then, the capacity can be lower-bounded as follows:
\begin{align}
    &C(n) = \max_{P_X} I(X; Y) \\
    &\ge I(X;Y) \\
    &= H(Y)-H(Y|X) \\
    &\ge H(Y)-\frac{1}{2}\expect{\log\left(2\pi \rme \left( nX(1-X)+\frac{1}{12}\right) \right)} \label{eq:use_bound_binomial_rv} \\
    &\ge H(Y)-\frac{1}{2}\log\left(2\pi \rme \left( n\expect{X(1-X)}+\frac{1}{12}\right) \right) \label{eq:apply_Jensen} \\
    &= H(Y)-\frac{1}{2}\log\left(2\pi \rme \left( \frac{n}{8}+\frac{1}{12}\right) \right)
\end{align}
where in~\eqref{eq:use_bound_binomial_rv} we have used the upper bound on the entropy of a binomial distribution, which is given in Appendix~\ref{app:bound_binom_entropy}; in~\eqref{eq:apply_Jensen} we have applied Jensen's inequality; and in the last step we have used that $\expect{X(1-X)}=\frac{1}{8}$ from the Beta distribution in~\eqref{eq:Beta_distribution_f_X}.

As for the output entropy, write
\begin{align}
    H(Y) &= -\expect{\log P_Y(Y)} \\
    &= -\expect{\log\binom{n}{Y}}-\expect{\log \bbE_X[X^Y(1-X)^{n-Y}] }.
\end{align}
Now note that 
\begin{align}
    \expect{X^y(1-X)^{n-y}} &= \int_0^1 \frac{1}{\pi \sqrt{x(1-x)}} x^y(1-x)^{n-y} \rmd x \\
    &= \frac{1}{\pi n} \frac{\Gamma\left(y+\frac{1}{2}\right)\Gamma\left(n-y+\frac{1}{2}\right)}{\Gamma\left(n\right)}
\end{align}
which, by expanding the binomial coefficient in terms of gamma functions, gives
\begin{align}
    &H(Y) = 
     \log(\pi n) \hspace{-0.04cm} - \hspace{-0.04cm} \expect{ \log\left( \hspace{-0.04cm}n \frac{\Gamma\left(Y+\frac{1}{2}\right)}{\Gamma\left(Y+1\right)} \frac{\Gamma\left(n-Y+\frac{1}{2}\right)}{\Gamma\left(n-Y+1\right)} \right) } \\
    &\ge \log(\pi n) -\expect{ \log\left( \frac{n}{\left(Y+\frac{1}{4}\right)^{\frac{1}{2}} \left(n-Y+\frac{1}{4}\right)^{\frac{1}{2}}}  \right) } \label{eq:use_Kershav}\\
    &= \log(\pi n) +\frac{1}{2}\expect{ \log\left( \left(\frac{Y}{n}+\frac{1}{4n}\right)\left(1-\frac{Y}{n}+\frac{1}{4n} \right)  \right) } \\
    &= \log(\pi n) +\expect{ \log\left( \frac{Y}{n}+\frac{1}{4n}  \right) } \label{eq:use_symmetry} \\
    &\ge \log(\pi n)+\expect{\mathbbm{1}(Y=0) \log\left( \frac{1}{4n}  \right) } \nonumber\\
    &\quad+\expect{\mathbbm{1}(0<Y\le n) \log\left( \frac{Y}{n}  \right) } \\
    &\ge \log(\pi n)+\expect{(1-X)^n}\log\left( \frac{1}{4n}  \right) \nonumber\\
    &\quad+\expect{(1-(1-X)^n)\log(X)} -1 \label{eq:use_lemma_exp_log_bino}\\
    &\ge \log(\pi n)+\frac{\Gamma\left(n+\frac{1}{2}\right)}{\sqrt{\pi} \Gamma\left(n+1\right)}\log\left( \frac{1}{16n^2}  \right)   +\expect{\log(X)} -1 \label{eq:nth_moment_beta}  \\
    &\ge \log(\pi n)+\frac{1}{\sqrt{\pi \left(n+\frac{1}{4}\right)}}\log\left( \frac{1}{16n^2}  \right)   -\log(4)- 1\label{eq:use_Kershav2}
\end{align}
where in~\eqref{eq:use_Kershav} and in~\eqref{eq:use_Kershav2} we used Kershav's inequality~\cite{kershaw1983some}
\begin{equation}
    \frac{\Gamma(x+s)}{\Gamma(x+1)} \le \frac{1}{\left(x+\frac{s}{2}\right)^{1-s}}
\end{equation}
for $x>0$ and $s \in (0,1)$; in~\eqref{eq:use_symmetry} we have used the symmetry of the output pmf $Y \stackrel{d}{=} (n-Y)$; in~\eqref{eq:use_lemma_exp_log_bino} we have used Lemma~\ref{lem:expect_log_binomial}; in~\eqref{eq:nth_moment_beta} we have  have used $\expect{(1-X)^n}=\frac{\Gamma\left(n+\frac{1}{2}\right)}{\sqrt{\pi} \Gamma\left(n+1\right)}$ and the fact that $\log(X)\le 0$; finally, in the last step we have used $\expect{\log(X)}=-\log(4)$.

To sum up, the capacity lower bound is given by \eqref{eq:capacity_lower_bound}.

A first upper bound on $C(n)$ follows by noting that 
\begin{equation}
C(n) \le H(X^\star) \le \log \left( 3+\left\lfloor\frac{(n-1)}{2} \right\rfloor \right),
\end{equation}
where the last upper bound is due to Theorem~\ref{thm:n/2 bound}. 

For an alternative capacity upper bound, choose the auxiliary output pdf
\begin{equation}
    q(t) = \frac{1}{\pi(n+1)} \left(\frac{t}{n+1}\left(1-\frac{t}{n+1}\right)\right)^{-\frac{1}{2}}, \, t \in (0,n+1), \label{eq:choice_auxiliary_q}
\end{equation}
and, by introducing $U \sim {\cal U}[0,1]$ independent of $X$ and $Y$, write
\begin{align}
    &C(n) = \max_{P_X} I(X;Y) \\
    &= \max_{P_X} I(X; Y+U) \\
    &\le \max_{x \in [0,1]} \kl{P_{Y+U|X}(\cdot|x)}{q} \label{eq:use_duality} \\
    &= \max_{x \in [0,1]} -H(Y|X=x) -\expcnd{\log q(Y+U)}{X=x} \label{eq:continuous_and_discrete_entropy} \\
    &\le \max_{x \in [0,1]} \left\{ \log(\pi(n+1))-(1-(1-x)^n-x^n)\frac{1}{2}\log\left( 2\pi n\right) \right.\nonumber\\
    &\quad-\frac{1}{2}(1-(1-x)^n)\log(x)-\frac{1}{2}(1-x^n)\log(1-x)+1  \nonumber\\
    &\quad \left.+\frac{1}{2}\expcnd{\log\left( \frac{Y+U}{n+1} \left( 1-\frac{Y+U}{n+1}\right)\right)}{X=x} \right\} \label{eq:use_bound_on_binomial_entropy} \\
    &\le \max_{x \in [0,1]} \left\{\log(\pi(n+1)) -(1-(1-x)^n-x^n)\frac{1}{2}\log\left( 2\pi n\right) \right. \nonumber\\
    &\quad -\frac{1}{2}(1-(1-x)^n)\log(x)-\frac{1}{2}(1-x^n)\log(1-x)+1  \nonumber\\
    &\quad \left.+\frac{1}{2}\log\left( \frac{nx+\frac{1}{2}}{n+1} \left( 1-\frac{nx+\frac{1}{2}}{n+1}\right)\right) \right\} \label{eq:capacity_upper_bound_1} \\
    &=\log(\pi(n+1))-\frac{1}{2}\log(2\pi n) +\max_{x \in [0,1]} \left\{ g_n(x)\right\} \label{eq:introduce_gn}
\end{align}
where in~\eqref{eq:use_duality} we have used the dual formulation of capacity; in~\eqref{eq:continuous_and_discrete_entropy} we used $h(Y+U|X=x)=H(Y|X=x)$; in~\eqref{eq:use_bound_on_binomial_entropy} we have used the lower bound on the entropy of a binomial distribution given in Appendix
\ref{app:bound_binom_entropy}; in \eqref{eq:capacity_upper_bound_1} we have used Jensen's inequality; and in  \eqref{eq:introduce_gn} we have introduced the function
\begin{align}
    &g_n(x) = \frac{((1-x)^n+x^n)}{2}\log\left( 2\pi n\right)-\frac{(1-(1-x)^n)}{2}\log(x) \nonumber\\
    &\, \, \,-\frac{(1-x^n)}{2}\log(1-x)+\frac{1}{2}\log\left( \frac{nx+\frac{1}{2}}{n+1} \left( 1-\frac{nx+\frac{1}{2}}{n+1}\right)\right)
\end{align}
for $x \in [0,1]$. In Appendix \ref{app_bound_on_g}, we prove a uniform upper bound on $g_n$. As a result, the capacity upper bound is given in \eqref{eq:capacity_upper_bound}.
\end{proof}

\bibliographystyle{IEEEtran}
\bibliography{refs.bib}

\begin{appendices}

\section{Derivatives of Information Density}
\label{sec:Derivatives_info_density}

\begin{lemma}\label{lem:derivative_exp_cnd}
For any function $f:\bbN_0 \cap [0,n]\to \bbR$, and $x \in (0,1)$, we have
\begin{align}
   & \frac{\rmd}{\rmd x}\expcnd{f(Y)}{X=x} \notag\\
    &= n \expcndn{n-1}{f(Y+1)-f(Y)}{X=x} \label{eq:der_cond_exp_n-1}\\
    &=
    \frac{1}{x}\expcnd{Yf(Y)}{X=x}-\frac{1}{1-x}\expcnd{(n-Y)f(Y)}{X=x} \label{eq:der_cond_exp_0}\\
    &= \frac{1}{x}\expcnd{Y(f(Y)-f(Y-1))}{X=x} \label{eq:der_cond_exp_1} \\
    &= \frac{1}{1-x}\expcnd{(n-Y)(f(Y+1)-f(Y))}{X=x},\label{eq:der_cond_exp_2}
\end{align}
\end{lemma}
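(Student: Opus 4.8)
The plan is to prove \eqref{eq:der_cond_exp_0} first by direct differentiation, and then to obtain the remaining three expressions by elementary algebraic manipulations, each of which reduces to one of two standard binomial identities. First I would differentiate the binomial pmf termwise. Since
\begin{equation}
\frac{\rmd}{\rmd x} P_{Y|X}(y|x) = P_{Y|X}(y|x)\lr{\frac{y}{x} - \frac{n-y}{1-x}},
\end{equation}
multiplying by $f(y)$ and summing over $y$ gives \eqref{eq:der_cond_exp_0} directly; interchanging the finite sum with the derivative requires no justification, as there are finitely many terms.

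The key to the $(n-1)$-trial form \eqref{eq:der_cond_exp_n-1} are the two relations $y\binom{n}{y} = n\binom{n-1}{y-1}$ and $(n-y)\binom{n}{y} = n\binom{n-1}{y}$. Using them together with a shift of the summation index, I would rewrite each weighted expectation in \eqref{eq:der_cond_exp_0} as an $(n-1)$-trial expectation, namely
\begin{equation}
\frac{1}{x}\expcnd{Yf(Y)}{X=x} = n\,\expcndn{n-1}{f(Y+1)}{X=x}
\end{equation}
and
\begin{equation}
\frac{1}{1-x}\expcnd{(n-Y)f(Y)}{X=x} = n\,\expcndn{n-1}{f(Y)}{X=x}.
\end{equation}
Subtracting these two yields \eqref{eq:der_cond_exp_n-1} at once.

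Finally, the telescoped forms \eqref{eq:der_cond_exp_1} and \eqref{eq:der_cond_exp_2} follow from two cross identities, each reducing both of its sides to the same $(n-1)$-trial expectation via the same two binomial relations. For \eqref{eq:der_cond_exp_1} I would establish $\frac{1}{1-x}\expcnd{(n-Y)f(Y)}{X=x} = \frac{1}{x}\expcnd{Yf(Y-1)}{X=x}$ (both sides equal $n\,\expcndn{n-1}{f(Y)}{X=x}$) and substitute it into \eqref{eq:der_cond_exp_0}; for \eqref{eq:der_cond_exp_2} I would establish $\frac{1}{x}\expcnd{Yf(Y)}{X=x} = \frac{1}{1-x}\expcnd{(n-Y)f(Y+1)}{X=x}$ (both sides equal $n\,\expcndn{n-1}{f(Y+1)}{X=x}$) and substitute it into \eqref{eq:der_cond_exp_0}.

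This argument is entirely computational, so there is no real conceptual obstacle; the only point requiring care is the bookkeeping at the boundary. The shifted terms $f(Y-1)$ at $Y=0$ and $f(Y+1)$ at $Y=n$ reference values of $f$ outside its domain $\bbN_0 \cap [0,n]$, but in each case they appear with a vanishing weight ($Y=0$ or $n-Y=0$, respectively), so the corresponding summands are zero and the reindexings are legitimate. Tracking these index shifts and the range of summation after each reindexing is the only thing to watch.
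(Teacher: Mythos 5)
Your proposal is correct and follows essentially the same route as the paper's proof: differentiate the pmf termwise to get \eqref{eq:der_cond_exp_0}, then use the binomial identities $y\binom{n}{y}=n\binom{n-1}{y-1}$ and $(n-y)\binom{n}{y}=n\binom{n-1}{y}$ with index shifts to obtain the $(n-1)$-trial form and the two cross identities yielding \eqref{eq:der_cond_exp_1} and \eqref{eq:der_cond_exp_2}. Your handling of the boundary terms ($f(Y-1)$ at $Y=0$ and $f(Y+1)$ at $Y=n$ carrying vanishing weights) matches the bookkeeping implicit in the paper's reindexings.
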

\begin{proof}
Let us first prove \eqref{eq:der_cond_exp_n-1}. Note that 
\begin{align}
    \frac{\rmd}{\rmd x} P_{Y|X}(y|x) &= {n \choose y} \frac{\rmd}{\rmd x} x^y (1-x)^{n-y} \\
    &=\frac{y-nx}{x(1-x)}{n \choose y}  x^y (1-x)^{n-y}  \\
    &= \frac{y-nx}{x(1-x)}P_{Y|X}(y|x) \\
    &= \left(\frac{y}{x}-\frac{n-y}{1-x}\right)P_{Y|X}(y|x).
\end{align}
Hence, we can write
\begin{align}
    &\frac{\rmd}{\rmd x}\expcnd{f(Y)}{X=x} \notag\\
    &= \sum_{y=0}^n f(y) \frac{\rmd}{\rmd x} P_{Y|X}(y|x) \\
    &= \sum_{y=0}^n f(y) \left(\frac{y}{x}-\frac{n-y}{1-x}\right)P_{Y|X}(y|x) \label{eq:obtain_der_cond_exp_0} \\
    &= \sum_{y=0}^{n-1} f(y+1) \frac{y+1}{x}P_{Y|X}(y+1|x) \notag\\
    & \quad -\sum_{y=0}^{n-1} f(y) \frac{n-y}{1-x}P_{Y|X}(y|x)\\
    &= \sum_{y=0}^{n-1} f(y+1) nP_{Y|X}^{n-1}(y|x)-\sum_{y=0}^{n-1} f(y) nP_{Y|X}^{n-1}(y|x)\\
    &= n \expcndn{n-1}{f(Y+1)-f(Y)}{X=x}.
\end{align}
Result~\eqref{eq:der_cond_exp_0} is obtained in~\eqref{eq:obtain_der_cond_exp_0}. 

To prove \eqref{eq:der_cond_exp_1}, first note that
\begin{align}
    & \frac{\rmd}{\rmd x}\expcnd{f(Y)}{X=x} \notag\\
    &=   \sum_{y=0}^n f(y) {n \choose y}\frac{\rmd}{\rmd x} x^y (1-x)^{n-y} \\
    &= \sum_{y=0}^n f(y) {n \choose y} x^y (1-x)^{n-y} \left(\frac{y}{x}-\frac{n-y}{1-x} \right). \label{eq:diff_expcnd}
\end{align}
Next, consider the term
\begin{align}
    &\sum_{y=0}^n f(y) {n \choose y} x^y (1-x)^{n-y} \frac{n-y}{1-x}  \notag\\
    &= \sum_{y=0}^{n-1} f(y) {n \choose y} x^{y+1} (1-x)^{n-(y+1)} \frac{n-y}{x} \\
    &= \sum_{y=0}^{n-1} f(y) {n \choose y+1} x^{y+1} (1-x)^{n-(y+1)} \frac{y+1}{x} \label{eq:binomchange1} \\
    &= \sum_{y=1}^{n} f(y-1) {n \choose y} x^{y} (1-x)^{n-y} \frac{y}{x} \\
    &= \frac{1}{x}\expcnd{Yf(Y-1)}{X=x} \label{eq:first_expcnd_rel}
\end{align}
where \eqref{eq:binomchange1} follows from ${n \choose y} = {n \choose y+1}\frac{y+1}{n-y}$ for $y \ne n$. Using \eqref{eq:first_expcnd_rel} into \eqref{eq:diff_expcnd} proves \eqref{eq:der_cond_exp_1}.

Now consider the term
\begin{align}
    &\sum_{y=0}^n f(y) {n \choose y} x^y (1-x)^{n-y} \frac{y}{x}  \notag\\
    &= \sum_{y=1}^{n} f(y) {n \choose y} x^{y-1} (1-x)^{n-(y-1)} \frac{y}{1-x} \\
    &= \sum_{y=1}^{n} f(y) {n \choose y-1} x^{y-1} (1-x)^{n-(y-1)} \frac{n-y+1}{1-x} \label{eq:binomchange2} \\
    &= \sum_{y=0}^{n-1} f(y+1) {n \choose y} x^{y} (1-x)^{n-y} \frac{n-y}{1-x} \\
    &= \frac{1}{1-x}\expcnd{(n-Y)f(Y+1)}{X=x} \label{eq:second_expcnd_rel}
\end{align}
where \eqref{eq:binomchange2} follows from ${n \choose y} = {n \choose y-1}\frac{n-y+1}{y}$ for $y \ne 0$. Using \eqref{eq:second_expcnd_rel} into \eqref{eq:diff_expcnd} proves \eqref{eq:der_cond_exp_2}.
\end{proof}

\begin{lemma}\label{lem:channel_transformations}
    For all $y=0,1,\ldots, n-1$, we have
    \begin{align}
        P_{Y|X}(y+1|x) &= \frac{nx}{y+1}P_{Y|X}^{n-1}(y|x) \label{eq:tr_ch_1} \\
        P_{Y|X}(y|x) &= \frac{n(1-x)}{n-y}P_{Y|X}^{n-1}(y|x). \label{eq:tr_ch_2}
    \end{align}
    Moreover, we have
    \begin{align}
        P_Y(y+1) &= \frac{n}{y+1} \expcndn{n-1}{X}{Y=y} P_Y^{n-1}(y), \label{eq:tr_P_Y_1} \\
        P_Y(y) &= \frac{n}{n-y} \expcndn{n-1}{1-X}{Y=y} P_Y^{n-1}(y), \label{eq:tr_P_Y_2} \\
        \frac{n-y}{y+1}\frac{P_Y(y)}{P_Y(y+1)} &= \frac{\expcndn{n-1}{1-X}{Y=y}}{\expcndn{n-1}{X}{Y=y}} \\
        &= \frac{\expcnd{1-X}{Y=y+1}}{\expcnd{X}{Y=y}}. \label{eq:ratio_cond_exp_n}
    \end{align}
\end{lemma}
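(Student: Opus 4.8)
The plan is to prove the two channel transformations \eqref{eq:tr_ch_1}--\eqref{eq:tr_ch_2} first by direct computation, and then to obtain all of the output-distribution identities \eqref{eq:tr_P_Y_1}--\eqref{eq:ratio_cond_exp_n} by averaging these transformations over the input $X$ and reorganizing the resulting expressions. For \eqref{eq:tr_ch_1} and \eqref{eq:tr_ch_2} I would simply insert $P_{Y|X}(y|x) = \binom{n}{y}x^y(1-x)^{n-y}$ and $P_{Y|X}^{n-1}(y|x) = \binom{n-1}{y}x^y(1-x)^{n-1-y}$ and invoke the elementary identities $\binom{n}{y+1} = \frac{n}{y+1}\binom{n-1}{y}$ and $\binom{n}{y} = \frac{n}{n-y}\binom{n-1}{y}$, valid for $y=0,\ldots,n-1$; matching the powers of $x$ and $1-x$ then yields both claims immediately.

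Next, for \eqref{eq:tr_P_Y_1} and \eqref{eq:tr_P_Y_2}, I would take the expectation of each channel transformation over $X$ under the input distribution. Writing $P_Y(y+1) = \expect{P_{Y|X}(y+1|X)}$ and applying \eqref{eq:tr_ch_1} gives $P_Y(y+1) = \frac{n}{y+1}\expect{X\,P_{Y|X}^{n-1}(y|X)}$. The key observation is that $\expect{X\,P_{Y|X}^{n-1}(y|X)} = \expcndn{n-1}{X}{Y=y}\,P_Y^{n-1}(y)$, since by definition $\expcndn{n-1}{X}{Y=y} = \expect{X\,P_{Y|X}^{n-1}(y|X)}/P_Y^{n-1}(y)$; this produces \eqref{eq:tr_P_Y_1}, and the same manipulation applied to \eqref{eq:tr_ch_2} with the factor $(1-X)$ produces \eqref{eq:tr_P_Y_2}. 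Dividing \eqref{eq:tr_P_Y_2} by \eqref{eq:tr_P_Y_1} and cancelling the common factor $P_Y^{n-1}(y)$ then gives the first equality in \eqref{eq:ratio_cond_exp_n}, namely $\frac{n-y}{y+1}\frac{P_Y(y)}{P_Y(y+1)} = \frac{\expcndn{n-1}{1-X}{Y=y}}{\expcndn{n-1}{X}{Y=y}}$.

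The main obstacle is the \emph{final} equality in \eqref{eq:ratio_cond_exp_n}, which relates $(n-1)$-trial conditional expectations to genuinely $n$-trial conditional expectations; this does not follow from the previous steps and requires an independent relation internal to the $n$-trial family. For this I would use the within-$n$ recursion $x\,P_{Y|X}(y|x) = \frac{y+1}{n-y}(1-x)\,P_{Y|X}(y+1|x)$, which follows from $P_{Y|X}(y+1|x)/P_{Y|X}(y|x) = \frac{n-y}{y+1}\frac{x}{1-x}$. Taking the expectation over $X$ and recognizing $\expect{X\,P_{Y|X}(y|X)} = \expcnd{X}{Y=y}\,P_Y(y)$ and $\expect{(1-X)\,P_{Y|X}(y+1|X)} = \expcnd{1-X}{Y=y+1}\,P_Y(y+1)$ yields $\expcnd{X}{Y=y}\,P_Y(y) = \frac{y+1}{n-y}\expcnd{1-X}{Y=y+1}\,P_Y(y+1)$, i.e. $\frac{\expcnd{1-X}{Y=y+1}}{\expcnd{X}{Y=y}} = \frac{n-y}{y+1}\frac{P_Y(y)}{P_Y(y+1)}$. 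Since the right-hand side coincides with the expression already obtained for the $(n-1)$-trial ratio, the two sides of the final equality agree, completing the proof. I would take care with the boundary values of $y$, ensuring that no denominators vanish and that $P_Y(y),P_Y(y+1)>0$ on the relevant range, but no further ideas are needed.
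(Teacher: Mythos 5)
Your proposal is correct and follows essentially the same route as the paper: the channel transformations via the binomial-coefficient identities, the output identities by averaging over $X$ and recognizing the $(n-1)$-trial posterior means, and the final equality in \eqref{eq:ratio_cond_exp_n} via the within-$n$ ratio $P_{Y|X}(y+1|x)/P_{Y|X}(y|x)=\frac{n-y}{y+1}\frac{x}{1-x}$ averaged over $X$, exactly as in the paper's computation of $P_Y(y)\,\expcnd{X}{Y=y}$. No gaps.
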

\begin{proof}
    To show~\eqref{eq:tr_ch_1}, write
    \begin{align}
        P_{Y|X}(y+1|x) &= {n \choose y+1} x^{y+1} (1-x)^{n-y-1} \\
        &= \frac{nx}{y+1}{n-1 \choose y} x^{y} (1-x)^{n-y-1} \\
        &= \frac{nx}{y+1} P_{Y|X}^{n-1}(y|x).
    \end{align}
    To show~\eqref{eq:tr_P_Y_1}, write
    \begin{align}
        P_Y(y+1) &= \sum_x P_X(x) P_{Y|X}(y+1|x) \\
        &= \sum_x P_X(x)\frac{nx}{y+1}P_{Y|X}^{n-1}(y|x) \label{eq:apply_channel_downgrage} \\
        &= \frac{n}{y+1} \expcndn{n-1}{X}{Y=y} P_Y^{n-1}(y) \label{eq:multiply_and_divide_by_p_Y}
    \end{align}
    where in~\eqref{eq:apply_channel_downgrage} we used~\eqref{eq:tr_ch_1}; and in the last step we multiplied and divided by $P_Y^{n-1}(y)$.
    In a similar fashion, to show~\eqref{eq:tr_ch_2}, write
    \begin{align}
        P_{Y|X}(y|x) &= {n \choose n-y} x^{y} (1-x)^{n-y} \\
        &= \frac{n(1-x)}{n-y}{n-1 \choose n-y-1} x^{y} (1-x)^{n-y-1} \\
        &= \frac{n(1-x)}{n-y}P_{Y|X}^{n-1}(y|x).
    \end{align}
    To show~\eqref{eq:tr_P_Y_2}, write
    \begin{align}
        P_Y(y) &= \sum_x P_X(x) P_{Y|X}(y|x) \\
        &= \sum_x P_X(x)\frac{n(1-x)}{n-y}P_{Y|X}^{n-1}(y|x) \label{eq:apply_channel_upgrade} \\
        &= \frac{n}{n-y} \expcndn{n-1}{1-X}{Y=y} P_Y^{n-1}(y) \label{eq:multiply_and_divide_by_p_Y_2}
    \end{align}
    where in~\eqref{eq:apply_channel_upgrade} we used~\eqref{eq:tr_ch_2}; and in the last step we multiplied and divided by $P_Y^{n-1}(y)$. 
    Finally, to show~\eqref{eq:ratio_cond_exp_n},  write
    \begin{align}
       & P_Y(y)\expcnd{X}{Y=y} \notag\\
        &= \sum_x P_X(x) x  {n \choose y} x^{y} (1-x)^{n-y} \\
        &= \sum_x P_X(x) (1-x)  {n \choose y} x^{y+1} (1-x)^{n-(y+1)} \\
        &= \sum_x P_X(x) (1-x)\frac{y+1}{n-y}  {n \choose y+1} x^{y+1} (1-x)^{n-(y+1)} \\
        &=\frac{y+1}{n-y} P_Y(y+1) \left(1-\expcnd{X}{Y=y+1} \right)
    \end{align}
    which is valid for $y\ne n$.
\end{proof}
\begin{prop}\label{prop:der_info_den}
For $n\ge 1$ and $x \in (0,1)$, we have
\begin{align}
    &i'(x;P_Y)  =n\log\left(\frac{x}{1-x}\right) \notag\\
    &+ \frac{1}{1-x} \expcnd{(n-Y)\log\frac{\expcnd{1-X}{Y=Y+1}}{\expcnd{X}{Y=Y}}}{X=x}.
\end{align}
For $n\ge 2$ and $x \in (0,1)$, we have
\begin{align}
   & i'(x;P_Y)  \notag\\
    &= n\log\left(\frac{x}{1-x}\right)+ \hspace{-0.1cm}n \expcndn{n-1}{\log\frac{\expcndn{n-1}{1-X}{Y}}{\expcndn{n-1}{X}{Y}}}{X=x} \label{eq:inf_den_derivative_no_bregman} \\
    &= \frac{n \, \expcndn{n-1}{\ell_b(x,\expcndn{n-1}{X}{Y})+\frac{x-\expcndn{n-1}{X}{Y}}{1-\expcndn{n-1}{X}{Y}}}{X=x}  }{x} \label{eq:inf_den_derivative_with_bregman}
\end{align}
and $i''(x;P_Y)$ is given in \eqref{eq:i_doubleprime}. 
\begin{figure*}
\begin{align}
    i''(x;P_Y) 
    &= \frac{n}{x(1-x)}+\frac{\expcnd{(n-Y)(n-Y-1)\log\frac{\expcnd{X}{Y=Y}}{\expcnd{1-X}{Y=Y+1}}\frac{\expcnd{1-X}{Y=Y+2}}{\expcnd{X}{Y=Y+1}}}{X=x} }{(1-x)^2}  \label{eq:i_doubleprime} \\
    &= \frac{n \left(1+\expcndn{n-1}{Y\log\frac{\expcndn{n-1}{1-X}{Y} }{\expcndn{n-1}{X}{Y}}}{X=x} \right)-\frac{n-1}{1-x} \left(i'(x;P_Y)-n\log\left(\frac{x}{1-x}\right) \right) }{x(1-x)}. \label{eq:i''_versions}
\end{align}
\end{figure*}
For $n\ge 3$ and $x \in (0,1)$, we have $i''(x;P_Y)$ is given in \eqref{eq:i''_versions_n>3}. 
\begin{figure*}
\begin{align}
    i''(x;P_Y) &= \frac{n}{x(1-x)}+n(n-1) \expcndn{n-2}{\log\frac{\expcndn{n-1}{1-X}{Y+1}}{\expcndn{n-1}{X}{Y+1}}\frac{\expcndn{n-1}{X}{Y}}{\expcndn{n-1}{1-X}{Y}}}{X=x} \\
    &=\frac{n}{x(1-x)}\left(1+(n-1)\expcndn{n-2}{ \ell_b(x, \expcndn{n-1}{X}{Y+1}) + \frac{x-\expcndn{n-1}{X}{Y+1}}{1-\expcndn{n-1}{X}{Y+1}} }{X=x} \right)-\frac{n-1}{1-x} i'(x;P_Y). \label{eq:i''_versions_n>3}
\end{align}
\end{figure*}
\end{prop}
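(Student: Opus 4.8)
The plan is to obtain every formula by differentiating the information density $i(x;P_Y)=\expcnd{\log\frac{P_{Y|X}(Y|x)}{P_Y(Y)}}{X=x}$ and reducing the result with the two ancillary lemmas already in hand: Lemma~\ref{lem:derivative_exp_cnd}, which differentiates conditional expectations, and Lemma~\ref{lem:channel_transformations}, which converts between $n$-, $(n-1)$- and $(n-2)$-trial quantities. The whole proposition is thus a chain of differentiations interleaved with index-shifting identities, and no new idea beyond these two lemmas should be needed.

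First I would compute $i'(x;P_Y)$. Writing $i(x;P_Y)=\sum_y P_{Y|X}(y|x)\log\frac{P_{Y|X}(y|x)}{P_Y(y)}$ and differentiating, the contribution from the logarithm collapses because $\sum_y \frac{\rmd}{\rmd x}P_{Y|X}(y|x)=\frac{\rmd}{\rmd x}1=0$. Hence $i'(x;P_Y)=\sum_y \big(\frac{\rmd}{\rmd x}P_{Y|X}(y|x)\big)\,g_x(y)$ with $g_x(y)=\log\frac{P_{Y|X}(y|x)}{P_Y(y)}$, which is precisely the left-hand side of Lemma~\ref{lem:derivative_exp_cnd} evaluated at the \emph{frozen} function $f=g_x$ (the lemma is an algebraic identity in $f$, so freezing $g$ at the point of differentiation is legitimate). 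Applying identity~\eqref{eq:der_cond_exp_2} gives $i'(x;P_Y)=\frac{1}{1-x}\expcnd{(n-Y)(g_x(Y+1)-g_x(Y))}{X=x}$. The forward difference splits into the channel ratio $\log\frac{P_{Y|X}(y+1|x)}{P_{Y|X}(y|x)}=\log\frac{n-y}{y+1}+\log\frac{x}{1-x}$ and the output ratio $\log\frac{P_Y(y+1)}{P_Y(y)}$, which by~\eqref{eq:ratio_cond_exp_n} equals $\log\frac{n-y}{y+1}+\log\frac{\expcnd{X}{Y=y}}{\expcnd{1-X}{Y=y+1}}$. The $\log\frac{n-y}{y+1}$ terms cancel, and using $\expcnd{n-Y}{X=x}=n(1-x)$ on the surviving $\log\frac{x}{1-x}$ piece yields the first ($n\ge1$) formula.

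To reach~\eqref{eq:inf_den_derivative_no_bregman} for $n\ge2$ I would push the remaining expectation down to $n-1$ trials: the downgrade~\eqref{eq:tr_ch_2} turns $\frac{1}{1-x}\expcnd{(n-Y)h(Y)}{X=x}$ into $n\,\expcndn{n-1}{h(Y)}{X=x}$, and~\eqref{eq:ratio_cond_exp_n} rewrites the inner logarithm as an $(n-1)$-trial ratio of conditional means. The Bregman form~\eqref{eq:inf_den_derivative_with_bregman} is then pure algebra: in $\ell_b(x,\hat{x})+\frac{x-\hat{x}}{1-\hat{x}}$ the correction cancels the second term of $\ell_b$, leaving $x\log\frac{x(1-\hat{x})}{(1-x)\hat{x}}$, so that dividing by $x$ recovers~\eqref{eq:inf_den_derivative_no_bregman} with $\hat{x}=\expcndn{n-1}{X}{Y}$. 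For the second derivative I would differentiate the $n\ge1$ formula, now freezing $\phi(y)=\log\frac{\expcnd{1-X}{Y=y+1}}{\expcnd{X}{Y=y}}$: the term $n\log\frac{x}{1-x}$ contributes $\frac{n}{x(1-x)}$, while the product rule on $\frac{1}{1-x}\expcnd{(n-Y)\phi(Y)}{X=x}$ gives one piece from $\frac{\rmd}{\rmd x}\frac{1}{1-x}=\frac{1}{(1-x)^2}$ and one from~\eqref{eq:der_cond_exp_2} applied to $f(y)=(n-y)\phi(y)$. Adding them, the coefficient of $\phi(Y)$ becomes $(n-Y)-(n-Y)^2=-(n-Y)(n-Y-1)$, so the pair collapses into the single second difference $(n-Y)(n-Y-1)(\phi(Y+1)-\phi(Y))$, whose logarithm is exactly that of~\eqref{eq:i_doubleprime}. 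The alternate forms~\eqref{eq:i''_versions} and~\eqref{eq:i''_versions_n>3} would follow by iterating the same downgrade identities of Lemma~\ref{lem:channel_transformations} (to $n-1$ and then $n-2$ trials) and re-expressing the residual first-order pieces through $i'(x;P_Y)$ itself.

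The main obstacle is purely the index bookkeeping: justifying that $g_x$ and $\phi$ may be frozen before invoking Lemma~\ref{lem:derivative_exp_cnd}, tracking the shifts $y\mapsto y+1$ together with the vanishing boundary terms (the $(n-Y)$ and $(n-Y)(n-Y-1)$ prefactors annihilate the top indices, which is what keeps every conditional mean well defined), and cleanly matching the $n$-, $(n-1)$-, and $(n-2)$-trial conditional expectations when passing between the equivalent forms.
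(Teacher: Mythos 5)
Your proposal is correct and follows essentially the same route as the paper: both derivations rest on identity~\eqref{eq:der_cond_exp_2} (and its $(n-1)$-trial counterpart~\eqref{eq:der_cond_exp_n-1}) together with the index-shifting relations of Lemma~\ref{lem:channel_transformations}, and your product-rule computation that collapses the coefficient of $\phi(Y)$ to $-(n-Y)(n-Y-1)$ is exactly the paper's step \eqref{eq:use_iprime_first_ver}--\eqref{eq:sec_der_comp_proof}. The only cosmetic difference is that the paper first splits off $nx\log(x)+n(1-x)\log(1-x)$ so that Lemma~\ref{lem:derivative_exp_cnd} is applied to the genuinely $x$-independent function $y\mapsto\log\frac{\binom{n}{y}}{P_Y(y)}$, whereas you keep the full information density and justify freezing $g_x$ via $\sum_y \frac{\rmd}{\rmd x}P_{Y|X}(y|x)=0$; both are valid and produce the same cancellation of the $\log\frac{n-y}{y+1}$ terms.
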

\begin{proof}
Let us start from the expression
    \begin{align}        i(x;P_{Y})&=\expcnd{\log\frac{{n \choose Y}}{P_Y(Y)} }{X=x}+nx\log(x) \notag\\
    &+ n(1-x)\log(1-x). 
    \end{align}
    A way to compute the first derivative is as follows:
    \begin{align}
        &i'(x;P_{Y}) \notag\\
        &=\frac{\rmd}{\rmd x}\expcnd{\log\frac{{n \choose Y}}{P_Y(Y)}  }{X=x}+n\log\left(\frac{x}{1-x}\right) \\
        &= \frac{\expcnd{(n-Y)\log\frac{\binom{n}{Y+1}}{P_Y(Y+1)}\frac{P_Y(Y)}{\binom{n}{Y}}}{X=x}}{1-x}  \notag\\
        & \quad +n\log\left(\frac{x}{1-x}\right) \label{eq:use_last_identity_der_lemma} \\
        &= \frac{\expcnd{(n-Y)\log\frac{\expcnd{1-X}{Y=Y+1}}{\expcnd{X}{Y=Y}}}{X=x}}{1-x} \notag\\
       & \quad +n\log\left(\frac{x}{1-x}\right), \label{eq:i_prime_first_ver}
    \end{align}
    where~\eqref{eq:use_last_identity_der_lemma} follows from identity~\eqref{eq:der_cond_exp_2} of Lemma~\ref{lem:derivative_exp_cnd}; and the last step follows from identity~\eqref{eq:ratio_cond_exp_n} of Lemma~\ref{lem:channel_transformations}.
    
    An alternative expression can be derived as follows:
    \begin{align}
        &i'(x;P_{Y}) \notag\\
        &=\frac{\rmd}{\rmd x}\expcnd{\log\frac{{n \choose Y}}{P_Y(Y)}  }{X=x}+n\log\left(\frac{x}{1-x}\right) \\
        &= n\expcndn{n-1}{\log\frac{{n \choose Y+1}P_Y(Y)}{{n \choose Y}P_Y(Y+1)}  }{X=x}+n\log\left(\frac{x}{1-x}\right) \label{eq:apply_derivative_downgrade} \\
        &= n\expcndn{n-1}{\log\frac{\expcndn{n-1}{1-X}{Y} }{\expcndn{n-1}{X}{Y}}  }{X=x}+n\log\left(\frac{x}{1-x}\right)
    \end{align}
    where~\eqref{eq:apply_derivative_downgrade} follows from Lemma~\ref{lem:derivative_exp_cnd}, and the last step follows from Lemma~\ref{lem:channel_transformations}.
    To show~\eqref{eq:inf_den_derivative_with_bregman}, just notice that the Bregman divergence for the binomial channel is
    \begin{align} \label{eq:Bregman_divergence_Binomial}
        &\ell_b(x, \expcndn{n-1}{X}{Y}) \notag\\
        &= x \log\left(\frac{x \expcndn{n-1}{1-X}{Y}}{(1-x)\expcndn{n-1}{X}{Y}}\right)-\frac{x-\expcndn{n-1}{X}{Y}}{1-\expcndn{n-1}{X}{Y}}.
    \end{align}
    The computation of the second derivative is performed in \eqref{eq:sec_der_comp_proof}, where in~\eqref{eq:use_iprime_first_ver} we used~\eqref{eq:i_prime_first_ver} and identity~\eqref{eq:der_cond_exp_2} of Lemma~\ref{lem:derivative_exp_cnd}. 
    \begin{figure*}
    \begin{align}
    i''(x;P_Y) &= \frac{n}{x(1-x)}+\frac{1}{(1-x)^2}  \expcnd{(n-Y)\log\frac{\expcnd{1-X}{Y=Y+1}}{\expcnd{X}{Y=Y}}}{X=x} \nonumber\\
    &\quad \frac{1}{(1-x)^2}  \expcnd{(n-Y)\left((n-Y-1)\log\frac{\expcnd{1-X}{Y=Y+2}}{\expcnd{X}{Y=Y+1}}-(n-Y)\log\frac{\expcnd{1-X}{Y=Y+1}}{\expcnd{X}{Y=Y}}\right)}{X=x} \label{eq:use_iprime_first_ver} \\
    &=\frac{n}{x(1-x)}+\frac{1}{(1-x)^2}  \expcnd{(n-Y)(n-Y-1)\log\frac{\expcnd{X}{Y=Y}}{\expcnd{1-X}{Y=Y+1}}\frac{\expcnd{1-X}{Y=Y+2}}{\expcnd{X}{Y=Y+1}}}{X=x}, \label{eq:sec_der_comp_proof}
    \end{align}
    \end{figure*}

    An alternative formulation of the second derivative is as follows:
    \begin{align}
    &i''(x;P_Y) \notag\\
    &= \frac{n}{x(1-x)}+\frac{\rmd}{\rmd x}n\expcndn{n-1}{\log\frac{\expcndn{n-1}{1-X}{Y} }{\expcndn{n-1}{X}{Y}}  }{X=x}\\
    &=\frac{n}{x(1-x)}+n(n-1)\bbE^{n-2} \left[\log\frac{\expcndn{n-1}{1-X}{Y+1} }{\expcndn{n-1}{X}{Y+1}} \right. \\
    & \left. \quad -\log\frac{\expcndn{n-1}{1-X}{Y} }{\expcndn{n-1}{X}{Y}} \mid X =x  \right]
\end{align}
where we applied identity~\eqref{eq:der_cond_exp_n-1} of Lemma~\ref{lem:derivative_exp_cnd}.
Another alternative expression for the second derivative, that is written in terms of the first derivative, is as follows
\begin{align}
    &i''(x;P_Y) \notag\\
    &=\frac{n}{x(1-x)}+ \frac{\rmd}{\rmd x}n\expcndn{n-1}{\log\frac{\expcndn{n-1}{1-X}{Y} }{\expcndn{n-1}{X}{Y}}  }{X=x}\\
    &=\frac{n}{x(1-x)} \notag\\
    &+ n\left(\frac{1}{x}+\frac{1}{1-x}\right)\expcndn{n-1}{Y\log\frac{\expcndn{n-1}{1-X}{Y} }{\expcndn{n-1}{X}{Y}}  }{X=x}\nonumber \\
    &\quad-\frac{n-1}{1-x}n\expcndn{n-1}{\log\frac{\expcndn{n-1}{1-X}{Y} }{\expcndn{n-1}{X}{Y}}  }{X=x} \label{eq:apply_der_exp_cond_2}\\
    &= \frac{n \left(1+\expcndn{n-1}{Y\log\frac{\expcndn{n-1}{1-X}{Y} }{\expcndn{n-1}{X}{Y}}}{X=x} \right)}{x(1-x)}  \notag\\
    & \quad -\frac{n-1}{1-x} \left(i'(x;P_Y)-n\log\left(\frac{x}{1-x}\right) \right) \label{eq:use_first_derivative} \\
    &= \frac{n \left(1+(n-1)x\expcndn{n-2}{\log\frac{\expcndn{n-1}{1-X}{Y+1} }{\expcndn{n-1}{X}{Y+1}}}{X=x} \right)}{x(1-x)} \notag\\
    &\quad -\frac{n-1}{1-x} \left(i'(x;P_Y)-n\log\left(\frac{x}{1-x}\right) \right) \label{eq:use_channel_downgrade} \\
    &= \frac{n \left(1+(n-1)x\expcndn{n-2}{\log\frac{x\expcndn{n-1}{1-X}{Y+1} }{(1-x)\expcndn{n-1}{X}{Y+1}}}{X=x} \right)}{x(1-x)} \notag\\
    &\quad -\frac{n-1}{1-x} i'(x;P_Y)  \\
    &= \frac{n}{x(1-x)} \Bigg(1+(n-1)\bbE^{n-2} \Big [ \ell_b(x, \expcndn{n-1}{X}{Y+1}) \notag\\
    & \quad + \left. \frac{x-\expcndn{n-1}{X}{Y+1}}{1-\expcndn{n-1}{X}{Y+1}} \mid {X=x} \right] \Bigg)-\frac{n-1}{1-x} i'(x;P_Y)
\end{align}
where~\eqref{eq:apply_der_exp_cond_2} follows from identity~\eqref{eq:der_cond_exp_0} of Lemma~\ref{lem:derivative_exp_cnd}; in~\eqref{eq:use_first_derivative} we used result~\eqref{eq:inf_den_derivative_no_bregman}; in~\eqref{eq:use_channel_downgrade} we made a change of measure by using identity~\eqref{eq:tr_ch_1} of Lemma~\ref{lem:channel_transformations}; and in the last step we used the Bregman divergence for the Binomial channel~\eqref{eq:Bregman_divergence_Binomial}.

\end{proof}

\section{Capacity Computation for $n \le 3$}
\label{sec:computations_of_cap_exact}

\subsection{The Case of $n=1$}
Follows immediately from Proposition~\ref{prop:loc_info}.

\subsection{The Case of $n=2$}
From Proposition~\ref{prop:loc_info}, for $n=2$, we infer that $$\supp(P_{X^\star}) \subseteq  \left\{0, \frac{1}{2},1 \right\}.$$

Now let $p= P_{X^\star}(\frac{1}{2})$.  Using Corollary~\ref{cor:rel_cap_probY} and the equations for $P_{Y^\star}$, we have that 
\begin{equation}
    P_{Y^\star}(0) = P_{Y^\star}(2) = \rme^{-C(2)}
\end{equation}
\begin{equation}
    P_{Y^\star}(1) = p 2\frac{1}{2}\left(1-\frac{1}{2}\right) = \frac{p}{2}.
\end{equation}
From $\sum_{y=0}^2 P_{Y^\star}(y)=1$ it follows that:
\begin{equation}
    2 \rme^{-C(2)}+\frac{p}{2} = 1
\end{equation}
or
\begin{equation}
    p = 2(1-2\rme^{-C(2)}).
\end{equation}
From the KKT equality condition in  \eqref{eq:KKT_equality}, we have that
\begin{align}
    C(2) &= i \left(\frac{1}{2};P_{Y^\star} \right) \\
    &= \sum_{y=0}^2 \binom{2}{y} \frac{1}{2^y} \left(1-\frac{1}{2} \right)^{2-y} \log\frac{\binom{2}{y} \frac{1}{2^y} \left(1-\frac{1}{2} \right)^{2-y}}{P_{Y^\star}(y)} \\
    &= \frac{1}{4} \left(\log\frac{1}{4}+C(2) \right) + \frac{1}{2} \log\frac{1}{2(1-2\rme^{-C(2)})}  \notag\\
    & \quad + \frac{1}{4} \left(\log\frac{1}{4}+C(2) \right)
\end{align}
that can be rewritten as
\begin{equation}
    C(2) = \log\frac{1}{4}+\log\frac{1}{2(1-2 \rme^{-C(2)})}
\end{equation}
whose solution is $C(2) = \log\frac{17}{8}$. We also have $p = \frac{2}{17}$ and $P_{X^\star}(0) = P_{X^\star}(1) = \frac{15}{34}$.

\subsection{The Case of $n=3$}

From Proposition~\ref{prop:loc_info} and Proposition~\ref{thm:n/2 bound}, for $n=3$, we infer that $$\supp(P_{X^\star}) \subseteq  \left\{0, \frac{1}{2},1 \right\}.$$

Now let $p= P_{X^\star}(\frac{1}{2})$. Corollary~\ref{cor:rel_cap_probY}  and direction computations imply that   
\begin{align}
P_{Y^\star}(0) &=  P_{Y^\star}(3) =\rme^{-C(3)} ,\\
P_{Y^\star}(1) &=  P_{Y^\star}(2) = \frac{3}{8} p. 
\end{align}
Now using above and the fact that $\sum_{y=0}^3 P_{Y^*}(y)=1$, we have that 
\begin{equation}
    p = \frac{4}{3} \left(1 -2 \rme^{-C(3)} \right).  \label{eq:expression_for_p}
\end{equation}

Next, it can be shown that 
\begin{align}
    i \left(\frac{1}{2};P_{Y^\star} \right)  = \frac{1}{4} \log \left(  \frac{\rme^{C(3)}}{8p^3} \right)  .
\end{align}

From the KKT equality condition in  \eqref{eq:KKT_equality}, we have that 
\begin{equation}
  C(3)=  i \left(\frac{1}{2};P_{Y^\star} \right)  = \frac{1}{4} \log \left(  \frac{\rme^{C(3)}}{8p^3} \right)  
\end{equation}
using the expression for $p$ in \eqref{eq:expression_for_p} and simplifying, we arrive at
\begin{equation}
    C(3) = \log \left(  \frac{1}{ \frac{8}{3} \left( 1- 2\rme^{-C(3)} \right)}\right). 
\end{equation}
Solving for $C(3)$ we arrive at
\begin{equation}
    C(3) =\log \left( \frac{19}{8} \right) . 
\end{equation}
We also have that 
\begin{align}
P_{Y^\star}(0) &=  P_{Y^\star}(3) =\frac{8}{19} ,\\
P_{Y^\star}(1) &=  P_{Y^\star}(2) = \frac{3}{38}, \\
P_{X^*}(0) &= P_{X^*}(1) =\frac{15}{38},\\
P_{X^*} \left( \frac{1}{2} \right) &= \frac{4}{19}.
\end{align}

\section{A Uniform Bound on $g_n(x)$} \label{app_bound_on_g}
\begin{lemma}
    For $n\ge 1$, we have
    \begin{align} \label{eq:uniform_bound_on_g}
        \max_{x \in [0,1]} g_n(x) &\le \frac{1}{2}\log(2\pi) +\frac{1}{2}+\frac{1}{2^{n+1}}\log\left(n\right) \notag\\
        &\quad +\frac{1}{2}\log\left(\frac{3}{2} \left( 1+\frac{1}{n}\right) \right).
    \end{align}
\end{lemma}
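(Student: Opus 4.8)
The plan is to exploit the symmetry $g_n(x)=g_n(1-x)$ (every term of $g_n$ is invariant under $x\mapsto 1-x$ by the channel symmetry \eqref{eq:channel_symmetry}) to restrict to $x\in[0,\tfrac12]$, and then to rewrite $g_n$ in a form where the apparently dangerous $\tfrac12\log n$ contributions cancel exactly. Concretely, I would first establish the algebraic identity
\begin{equation}
  g_n(x) = R_n(nx) + R_n\big(n(1-x)\big) + \log\tfrac{n}{n+1},
\end{equation}
where
\begin{equation}
  R_n(s) = \frac{(1-s/n)^n}{2}\log(2\pi s) + \frac12\log\frac{2s+1}{2s},\qquad s\in(0,n).
\end{equation}
This comes from splitting $\tfrac12\log\!\big[\tfrac{nx+1/2}{n+1}\tfrac{n(1-x)+1/2}{n+1}\big]$ into its two factors and grouping one factor with $\tfrac{(1-x)^n}{2}\log(2\pi n)$ and $-\tfrac{1-(1-x)^n}{2}\log x$, the other with the mirror-image terms. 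Writing $-\tfrac{1-(1-x)^n}{2}\log x=-\tfrac12\log x+\tfrac{(1-x)^n}{2}\log x$ and collecting the $(1-x)^n$ pieces produces $\tfrac{(1-x)^n}{2}\log(2\pi nx)$, while the leftover logarithms combine, after the substitution $s=nx$, into $\tfrac12\log\tfrac{2s+1}{2s}+\tfrac12\log\tfrac{n}{n+1}$. \textbf{This regrouping is the key step}: by itself $\tfrac{(1-x)^n}{2}\log(2\pi n)$ grows like $\tfrac12\log n$, but inside $R_n$ it is paired with $\log x$ so that only the $O(1)$ quantity $R_n(nx)$ survives.

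With the identity in hand, and since $\log\tfrac{n}{n+1}<0$, it suffices to bound the two copies of $R_n$. For $x\le\tfrac12$ we have $nx\le n/2\le n(1-x)$, so I would treat the two arguments differently. On the \emph{large} branch $s=n(1-x)\in[n/2,n)$ one has $(1-s/n)^n=x^n\le 2^{-n}$, $\log(2\pi s)>0$, and $\tfrac12\log\tfrac{2s+1}{2s}\le\tfrac12\log(1+\tfrac1n)$, whence
\begin{equation}
  R_n\big(n(1-x)\big)\le \frac{1}{2^{n+1}}\log(2\pi n) + \frac12\log\Big(1+\frac1n\Big).
\end{equation}
On the \emph{small} branch $s=nx\in(0,n/2]$ I would prove a bound by an absolute constant $\kappa$, uniform in $n$, by splitting at $s=\tfrac1{2\pi}$: for $s\ge\tfrac1{2\pi}$ the factor $\log(2\pi s)$ is nonnegative and $(1-s/n)^n\le \rme^{-s}$ gives the $n$-free majorant $\psi(s)=\tfrac{\rme^{-s}}2\log(2\pi s)+\tfrac12\log\tfrac{2s+1}{2s}$; for $s<\tfrac1{2\pi}$ the factor $\log(2\pi s)$ is negative and Bernoulli's inequality $(1-s/n)^n\ge 1-s$ (valid since $s<1\le n$) gives the $n$-free majorant $\phi(s)=\tfrac{1-s}2\log(2\pi s)+\tfrac12\log\tfrac{2s+1}{2s}$. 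Both are elementary one-variable functions; in particular the singularity of $\tfrac12\log\tfrac{2s+1}{2s}$ as $s\to0^+$ is cancelled by $\tfrac{1-s}{2}\log(2\pi s)$, so $\lim_{s\to0^+}\phi(s)=\tfrac12\log\pi$ is finite, and a short examination of their critical points shows $\sup_s\psi(s)$ and $\sup_s\phi(s)$ are each below an explicit constant $\kappa$ (one may take $\kappa=1$).

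Finally I would assemble the pieces: for $x\in(0,\tfrac12]$,
\begin{equation}
  g_n(x)\le \kappa + \frac{1}{2^{n+1}}\log(2\pi n) + \frac12\log\Big(1+\frac1n\Big) + \log\frac{n}{n+1},
\end{equation}
and it remains to check that the right-hand side is at most \eqref{eq:uniform_bound_on_g}. Using $\tfrac{1}{2^{n+1}}\log(2\pi n)=\tfrac{1}{2^{n+1}}\log n+\tfrac{1}{2^{n+1}}\log(2\pi)$ and $\tfrac12\log\tfrac{3(n+1)}{2n}=\tfrac12\log\tfrac32+\tfrac12\log(1+\tfrac1n)$, this reduces to
\begin{equation}
  \kappa\le \frac12\log(2\pi)+\frac12 -\frac{1}{2^{n+1}}\log(2\pi) + \frac12\log\frac{3(n+1)^2}{2n^2},
\end{equation}
whose right-hand side decreases in $n$ to the infimum $\tfrac12\log(2\pi)+\tfrac12+\tfrac12\log\tfrac32\approx1.62$, comfortably above $\kappa\le1$; the generous slack reflects that \eqref{eq:uniform_bound_on_g} is not claimed to be tight. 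I expect the only genuine work to be the uniform small-branch estimate (controlling $\psi$ and $\phi$ across the logarithmic singularity at $s=0$), with the identity and the closing arithmetic being routine.
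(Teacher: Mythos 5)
Your proposal is correct, and it is organized quite differently from the paper's proof, even though the two share the same elementary ingredients. The paper substitutes $\alpha=nx$ and directly upper-bounds $2g_n(\alpha/n)$ by a chain of term-by-term estimates (absorbing $\log\frac{n}{n+1}\le 0$, bounding $(\alpha/n)^n\log(n-\alpha)\le 2^{-n}\log n$, etc.), and then splits on the sign of $\log\alpha$ at $\alpha=1$, using $(1-\alpha/n)^n\ge 1-\alpha$ to get $-\alpha\log\alpha\le \rme^{-1}$ on $[0,1]$ and $(1-\alpha/n)^n\le \rme^{-\alpha}$ to get $\rme^{-\alpha}\log\alpha\le 1$ on $[1,n/2]$; the worse of the two constants gives the $+\tfrac12$ in the statement. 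You instead first prove the exact decomposition $g_n(x)=R_n(nx)+R_n(n(1-x))+\log\frac{n}{n+1}$ (I checked the identity; it is correct) and then bound the two single-variable pieces separately, splitting the small branch on the sign of $\log(2\pi s)$ at $s=\frac{1}{2\pi}$ rather than on the sign of $\log\alpha$ at $\alpha=1$. What your route buys is modularity and transparency about where the $\tfrac12\log n$ cancellation happens; what it costs is that your small-branch constant $\kappa$ is not produced by a one-line inequality the way the paper's $\rme^{-1}$ and $1$ are. Your suprema $\sup_s\psi(s)$ and $\sup_s\phi(s)$ are indeed both about $0.73$, so $\kappa=1$ works and your closing arithmetic (infimum of the slack $\approx 1.16>1$ uniformly in $n$, tending to $\approx 1.62$) goes through; but note that crude term-by-term bounds on $\psi$ (e.g.\ $\log u\le u/\rme$ on the first summand plus the endpoint value of the second) only give about $1.1$--$1.2$ on $s\in[\frac{1}{2\pi},\frac12]$, so the claimed ``short examination of critical points'' genuinely has to exploit that the two summands of $\psi$ are not simultaneously large; this step should be written out (a derivative computation or a finer subdivision suffices) for the proof to be complete. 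Your large-branch bound $\frac{1}{2^{n+1}}\log(2\pi n)$ is slightly larger than the paper's $\frac{1}{2^{n+1}}\log n$, but the slack you identify absorbs the difference, so the stated constant is still recovered.
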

\begin{proof}
    Since $g_n(x) = g_n(1-x)$, we can limit our analysis in the interval $x\in \left[0,\frac{1}{2}\right]$. 
By applying the substitution $x = \frac{\alpha}{n}$ for $\alpha \in \left[0,\frac{n}{2}\right]$, we get
\begin{align}
    &2g_n\left(\frac{\alpha}{n}\right) \\
    &=\left(\left(1-\frac{\alpha}{n}\right)^n+\left(\frac{\alpha}{n}\right)^n\right)\log\left( 2\pi n\right) \notag\\
    &-\left(1-\left(1-\frac{\alpha}{n}\right)^n\right)\log\left(\frac{\alpha}{n}\right)-\left(1-\left(\frac{\alpha}{n}\right)^n\right)\log\left(1-\frac{\alpha}{n}\right) \nonumber\\
    &\quad+\log\left( \frac{\alpha+\frac{1}{2}}{n+1} \left( 1-\frac{\alpha+\frac{1}{2}}{n+1}\right)\right) \\
    &\le \log\left(\frac{2\pi n}{n+1}\right)-\left(1-\left(1-\frac{\alpha}{n}\right)^n\right)\log\left(\alpha\right) \notag\\
    & \quad -\left(1-\left(\frac{\alpha}{n}\right)^n\right)\log\left(n-\alpha\right)  \notag\\
    &\quad +\log\left( \alpha+\frac{1}{2}\right)+\log \left( n-\alpha+\frac{1}{2}\right) \\
    &\le \log(2\pi) -\left(1-\left(1-\frac{\alpha}{n}\right)^n\right)\log\left(\alpha\right) \notag\\
    &\quad +\left(\frac{\alpha}{n}\right)^n\log\left(n-\alpha\right) +\log\left( \alpha+\frac{1}{2}\right) \notag\\
    &\quad +\log \left( 1+\frac{1}{2(n-\alpha)}\right) \\
    &\le \log(2\pi) -\left(1-\left(1-\frac{\alpha}{n}\right)^n\right)\log\left(\alpha\right)+\frac{1}{2^n}\log\left(n\right) 
 \notag\\
 &\quad +\log\left( \alpha+\frac{1}{2}\right)+\log \left( 1+\frac{1}{n}\right).
\end{align}
When $\alpha \in [0,1]$, the term $\log(\alpha)$ is negative, and by using $\left(1-\frac{\alpha}{n}\right)^n\ge 1-\alpha$ we can further upper-bound as follows:
\begin{align}
    2g_n\left(\frac{\alpha}{n}\right) &\le \log(2\pi) -\alpha \log(\alpha)+\frac{1}{2^n}\log\left(n\right) +\log\left( \alpha+\frac{1}{2}\right) \notag\\
    &\quad +\log \left( 1+\frac{1}{n}\right) \\
    &\le \log(2\pi)+ \rme^{-1}+\frac{1}{2^n}\log\left(n\right) +\log\left( 1+\frac{1}{2}\right) \notag\\
    \quad &+\log \left( 1+\frac{1}{n}\right), \label{eq:upper_on_g_n_looser}
\end{align}
which is bounded in $n$.  When $\alpha \in \left[1,\frac{n}{2}\right]$, then $\log(\alpha)$ is positive and, by using $e^{-x} \ge \left(1-\frac{x}{n} \right)^n$ for all $n \ge 1$ and all $x \ge 0$, we have that 
\begin{align}
 2g_n\left(\frac{\alpha}{n}\right)&\le  \log(2\pi) +\left(\left(1-\frac{\alpha}{n}\right)^n-1\right)\log\left(\alpha\right)+\frac{1}{2^n}\log\left(n\right) \notag\\
 & \quad +\log\left( \alpha+\frac{1}{2}\right)+\log \left( 1+\frac{1}{n}\right)\\
 &\le  \log(2\pi) +\left(\rme^{- \alpha}-1\right)\log\left(\alpha\right)+\frac{1}{2^n}\log\left(n\right) \notag\\
 & \quad +\log\left( \alpha+\frac{1}{2}\right)+\log \left( 1+\frac{1}{n}\right)\\
&= \log(2\pi) +\rme^{- \alpha}\log\left(\alpha\right)+\frac{1}{2^n}\log\left(n\right) \notag\\
&\quad +\log\left( 1+\frac{1}{2 \alpha}\right)+\log \left( 1+\frac{1}{n}\right)\\
&\le \log(2\pi) +\rme^{- \alpha}\log\left(\alpha\right)+\frac{1}{2^n}\log\left(n\right) \notag\\
& \quad +\log\left( 1+\frac{1}{2 }\right)+\log \left( 1+\frac{1}{n}\right)\\
&\le \log(2\pi) +1+\frac{1}{2^n}\log\left(n\right) +\log\left( 1+\frac{1}{2 }\right) \notag\\
& \quad +\log \left( 1+\frac{1}{n}\right), \label{eq:upper_on_g_n}
\end{align}
which is bounded in $n$.  Since \eqref{eq:upper_on_g_n} is strictly larger than \eqref{eq:upper_on_g_n_looser}, we can conclude the result in \eqref{eq:uniform_bound_on_g}.
\end{proof}

\section{Bounds on the Entropy of a Binomial Random Variable}\label{app:bound_binom_entropy}
First of all we need the following result.
\begin{lemma}\label{lem:expect_log_binomial}
    Let $P_{Y|X}(\cdot|x)$ be a Binomial pmf with $n$ trials and success probability $x$ per trial. Then,
    \begin{align}\label{eq:log_expect_binomial}
    &\expcnd{\mathbbm{1}(0<Y\le n)\log\left(\frac{Y}{n}\right)}{X=x} \notag\\
    & \quad \quad\ge   (1-(1-x)^n)\log(x)-1.
    \end{align}
\end{lemma}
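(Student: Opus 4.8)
The plan is to reduce the claim to a single scalar inequality about the binomial law and then dispatch it with the elementary bound $\log t \ge 1-1/t$ together with two standard facts, $1/k\le 1$ and $1/k\le 2/(k+1)$ for $k\ge 1$, plus one exact moment identity.

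First I would observe that, since $Y\in\{0,\ldots,n\}$, the indicator $\mathbbm{1}(0<Y\le n)$ equals $\mathbbm{1}(Y\ge 1)$, and that $P_{Y|X}(0|x)=(1-x)^n$, so $\expcnd{\mathbbm{1}(Y\ge1)}{X=x}=1-(1-x)^n$. Splitting $\log(Y/n)=\log\!\big(Y/(nx)\big)+\log x$ and using this probability to absorb the $\log x$ term, the target \eqref{eq:log_expect_binomial} becomes equivalent to
\begin{equation*}
\expcnd{\mathbbm{1}(Y\ge1)\log\tfrac{Y}{nx}}{X=x} \ge -1 .
\end{equation*}
This is the core estimate, and it is where essentially all the work goes.

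To prove it I would apply $\log t\ge 1-1/t$ with $t=Y/(nx)>0$ on the event $\{Y\ge1\}$, giving $\log\frac{Y}{nx}\ge 1-\frac{nx}{Y}$. Taking conditional expectations then reduces matters to showing
\begin{equation*}
S:=nx\,\expcnd{\tfrac{\mathbbm{1}(Y\ge1)}{Y}}{X=x}\le 2-(1-x)^n ,
\end{equation*}
since the left-hand side above is at least $\big(1-(1-x)^n\big)-S\ge -1$. For the bound on $S$ I would split on the size of $x$. When $x\le \frac{1}{2n}$, the crude bound $1/Y\le 1$ on $\{Y\ge1\}$ gives $S\le nx\big(1-(1-x)^n\big)\le nx\le \tfrac12$, which is below $2-(1-x)^n\ge 1$. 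When $x> \frac{1}{2n}$, I would instead use $1/k\le 2/(k+1)$ for $k\ge1$ together with the exact identity $\expcnd{1/(Y+1)}{X=x}=\frac{1-(1-x)^{n+1}}{(n+1)x}$ (subtracting the $Y=0$ contribution to pass to $\mathbbm{1}(Y\ge1)/(Y+1)$) to obtain $S\le \frac{2n}{n+1}\big(1-(1-x)^{n+1}\big)-2nx(1-x)^n$. The desired bound $S\le 2-(1-x)^n$ then reduces, after using $1-(1-x)^{n+1}\le 1$, to $(1-x)^n(1-2nx)\le \frac{2}{n+1}$, which holds because $1-2nx\le 0$ in this regime. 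Combining the two regimes completes the estimate, and hence the lemma.

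The routine pieces are the algebraic reduction and the telescoping evaluation of $\expcnd{1/(Y+1)}{X=x}$. The main obstacle I anticipate is precisely the uniform bound $S\le 2-(1-x)^n$: neither $1/Y\le 1$ nor $1/Y\le 2/(Y+1)$ suffices by itself, since the first is too weak when $nx$ is large and the second when $nx$ is small, so the argument genuinely needs the case split at $x=\frac{1}{2n}$ to treat the ``few expected successes'' and ``many expected successes'' regimes separately. Should a single unified estimate be preferred, the representation $\tfrac1k=\int_0^1 u^{k-1}\,du$ yields the closed form $S=nx\int_{1-x}^{1}\frac{v^n-(1-x)^n}{v-(1-x)}\,dv$; but controlling this integral uniformly in $x$ and $n$ still hinges on the concentration of $(1-x+xu)^{n-1}$ near $u=1$, so the case split appears to be the cleaner route.
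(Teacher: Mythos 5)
Your proof is correct, and it takes a genuinely different route from the paper's. The first step is common to both: split $\log\frac{Y}{n}=\log x+\log\frac{Y}{nx}$ and reduce the claim to $\expcnd{\mathbbm{1}(Y\ge 1)\log\frac{Y}{nx}}{X=x}\ge -1$ (with the trivial convention handling $x=0$). From there the paper goes continuous: it lower-bounds the discrete sum by an integral, introduces an auxiliary variable $Z$ with density $f_Z(z)=nP_{Y|X}(\lfloor nz\rfloor|x)$, integrates by parts on $(0,x)$, and invokes a separate unimodality lemma for the binomial pmf to control the resulting terms, following the Poisson-channel template of the cited Lapidoth--Moser appendix. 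You instead stay entirely discrete and elementary: $\log t\ge 1-1/t$ reduces everything to the single estimate $S:=nx\,\expcnd{\mathbbm{1}(Y\ge1)/Y}{X=x}\le 2-(1-x)^n$, which you verify by the split at $x=\tfrac{1}{2n}$, using $1/Y\le 1$ in the small-$x$ regime and $1/Y\le 2/(Y+1)$ together with the exact identity $\expcnd{1/(Y+1)}{X=x}=\frac{1-(1-x)^{n+1}}{(n+1)x}$ in the large-$x$ regime; I checked the final reduction to $(1-x)^n(1-2nx)\le\frac{2}{n+1}$, which indeed holds there since $1-2nx\le 0$, and the crude regime gives $S\le nx\le\frac12\le 2-(1-x)^n$. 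What each approach buys: yours is self-contained, every step is a one-line verification, and it entirely avoids the discrete-to-continuous comparison — which is arguably the most delicate point of the paper's argument, since the direction of the sum-to-integral inequality $\sum_{y=1}^{n-1}\ge\int_0^n$ is asserted with only a brief monotonicity remark and deserves more care; the paper's method, in exchange, requires no case analysis and reuses machinery (the auxiliary density $Z$ and the pmf-monotonicity lemma) that is standard in this literature and transfers to related channels. Both arguments land on the same constant $-1$, so nothing is lost quantitatively by your route.
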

\begin{proof}
    Inspired by the approach of~\cite[Appendix~B]{lapidoth2008capacity}, we bound the expectation as follows:
\begin{align}
    &\expcnd{\mathbbm{1}(0<Y\le n)\log\left(\frac{Y}{n}\right)}{X=x} \notag\\
    &=\expcnd{\mathbbm{1}(0<Y\le n)\log\left(x\right)}{X=x} \notag\\
    &\quad +\expcnd{\mathbbm{1}(0<Y\le n)\log\left(\frac{Y}{nx}\right)}{X=x} \\
    &=(1-(1-x)^n)\log\left(x\right) \notag\\
    & \quad +\expcnd{\mathbbm{1}(0<Y\le n)\log\left(\frac{Y}{nx}\right)}{X=x} \\
    &=(1-(1-x)^n)\log\left(x\right)+\sum_{y=1}^{n-1} P_{Y|X}(y|x) \log\left(\frac{y}{nx} \right) \\
    &\ge (1-(1-x)^n)\log\left(x\right)+\int_0^n P_{Y|X}(\lfloor y \rfloor|x) \log\left(\frac{y}{nx} \right) \rmd y \\
    &= (1-(1-x)^n)\log\left(x\right)+n\int_0^1 P_{Y|X}(\lfloor nt \rfloor|x) \log\left(\frac{t}{x} \right) \rmd t  \label{eq:exp_log_two_integrals}
\end{align}
where the inequality holds because $x\mapsto \log(x)$ is an increasing function and negative for $x \in (0,1)$.

Now introduce the continuous rv $Z$ with pdf $f_{Z}(z) = n P_{Y|X}(\lfloor nz \rfloor|x)$ for $z \in [0,1]$. Then, the integral of~\eqref{eq:exp_log_two_integrals} becomes:
\begin{align}
     &n\int_0^{1} P_{Y|X}(\lfloor nt \rfloor|x) \log\left(\frac{t}{x} \right) \rmd t \notag\\
     &= \int_0^{1} f_{Z}(t) \log\left(\frac{t}{x} \right) \rmd t \\
     &=\int_0^{x} f_{Z}(t) \log\left(\frac{t}{x} \right) \rmd t+\int_x^{1} f_{Z}(t) \log\left(\frac{t}{x} \right) \rmd t.
\end{align}
Let us now bound the two integrals separately. For the first integral, by integrating by parts we have
\begin{align}
     &\int_0^{x} f_{Z}(t) \log\left(\frac{t}{x} \right) \rmd t \notag\\
     &= \left[\Pr(Z \le t) \log\left(\frac{t}{x}\right) \right]_0^{x}-\int_0^{x} \Pr(Z \le t) \frac{1}{t} \rmd t \\
     &\ge-\int_0^{x} \int_0^{t}n P_{Y|X}(\lfloor nz \rfloor|x) \rmd z \frac{1}{t} \rmd t \label{eq:bound_on_cdf} \\
     &\ge -\int_0^{x} n P_{Y|X}(\lfloor nt \rfloor|x)   \rmd t \label{eq:use_increasing_part_of_binomial} \\
     &= -\int_0^{x} f_Z(t)   \rmd t \\
     &\ge -1
\end{align}
where 
in~\eqref{eq:use_increasing_part_of_binomial} we used that 
$ \int_0^{t}n P_{Y|X}(\lfloor nz \rfloor|x) \rmd z\le tn P_{Y|X}(\lfloor nt \rfloor|x)$  thanks to the following lemma and to $t \le x$:
\begin{lemma}
    Let $P_{Y|X}$ be a Binomial pmf. Then, $y\mapsto P_{Y|X}(y|x)$ is increasing for $y \le \lfloor (n+1)x \rfloor$, and decreasing for $y \ge \lceil (n+1)x \rceil$.
\end{lemma}
\begin{proof}
    From the ratio 
    \begin{align}
        \frac{P_{Y|X}(y|x)}{P_{Y|X}(y-1|x)} = \frac{n-y+1}{y} \frac{x}{1-x}
    \end{align}
    we see that the condition $P_{Y|X}(y|x) \ge P_{Y|X}(y-1|x)$ is satisfied for $y \le \lfloor (n+1)x \rfloor$.
\end{proof}
For the second integral, write
\begin{align}
    \int_{x}^1 f_Z(t) \log\left(\frac{t}{x} \right) \rmd t \ge 0.
\end{align}
Putting together the two results, we get the result in~\eqref{eq:log_expect_binomial}.
\end{proof}

We are now ready to give the main result of this appendix.
\begin{lemma}
    For $x \in [0,1]$, the entropy of a Binomial rv is bounded as follows
    \begin{equation}
        H(Y|X=x) \le \frac{1}{2}\log\left(2\pi \rme \left(n x(1-x)+\frac{1}{12}\right)\right),
    \end{equation}
    \begin{align}
        H(Y|X=x)&\ge (1-(1-x)^n-x^n)\frac{1}{2}\log\left( 2\pi n\right) \notag\\
        &+\frac{1}{2}(1-(1-x)^n)\log(x) \notag\\
        &+\frac{1}{2}(1-x^n)\log(1-x)-1.
    \end{align}
\end{lemma}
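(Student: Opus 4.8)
The plan is to prove the two inequalities by separate arguments, since the upper bound is an entropy-maximization statement while the lower bound requires a sharp Stirling estimate.

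For the \emph{upper bound} I would use a dithering argument together with the maximum-entropy principle. Let $U\sim\mathcal{U}(0,1)$ be independent of $Y$. Because $Y$ is supported on the integers $\{0,\dots,n\}$, which are spaced one apart, the shifted densities on the intervals $[y,y+1)$ do not overlap, so $h(Y+U\mid X=x)=H(Y\mid X=x)+h(U)=H(Y\mid X=x)$. The Gaussian maximum-entropy bound then gives $h(Y+U\mid X=x)\le \tfrac{1}{2}\log\!\big(2\pi \rme\,\Var(Y+U\mid X=x)\big)$, and by independence $\Var(Y+U\mid X=x)=nx(1-x)+\tfrac{1}{12}$. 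Chaining these yields the claimed upper bound; this direction is routine.

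For the \emph{lower bound}, the starting point is the exact expansion $-\log P_{Y|X}(y|x) = -\log\binom{n}{y} - y\log x - (n-y)\log(1-x)$, so that $H(Y\mid X=x)=\expcnd{-\log P_{Y|X}(Y|x)}{X=x}$. First I would discard the boundary atoms $y\in\{0,n\}$, whose contributions $-P_{Y|X}(y|x)\log P_{Y|X}(y|x)$ are nonnegative. For the interior terms $1\le y\le n-1$ I would invoke the correction-free Stirling estimate $\binom{n}{y}\le \sqrt{n/(2\pi y(n-y))}\,n^n y^{-y}(n-y)^{-(n-y)}$, which holds because $\tfrac{1}{12n}<\tfrac{1}{12y+1}+\tfrac{1}{12(n-y)+1}$ for $1\le y\le n-1$ (this is precisely what avoids an extra $O(1/n)$ slack and keeps the final additive constant at $-1$). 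Substituting and regrouping, the ``large-deviation'' part collects into $n$ times the Bernoulli relative entropy $\tfrac{y}{n}\log\tfrac{y/n}{x}+\tfrac{n-y}{n}\log\tfrac{(n-y)/n}{1-x}\ge 0$, which I drop, leaving the clean per-term inequality
\begin{equation}
-\log P_{Y|X}(y|x)\ \ge\ \tfrac{1}{2}\log(2\pi n)+\tfrac{1}{2}\log\tfrac{y}{n}+\tfrac{1}{2}\log\tfrac{n-y}{n},\quad 1\le y\le n-1.
\end{equation}

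Taking the conditional expectation, the $\tfrac{1}{2}\log(2\pi n)$ term contributes $\tfrac{1}{2}\log(2\pi n)\,\Pr(0<Y<n)=\tfrac{1}{2}\log(2\pi n)\,(1-(1-x)^n-x^n)$, matching the first term of the claim. For the two logarithmic terms I would note that $\sum_{y=1}^{n-1}P_{Y|X}(y|x)\log(y/n)=\expcnd{\mathbbm{1}(0<Y\le n)\log(Y/n)}{X=x}$ (the $y=n$ summand vanishes), so Lemma~\ref{lem:expect_log_binomial} lower-bounds it by $(1-(1-x)^n)\log x-1$; applying the same lemma to the reflected variable $n-Y\sim\mathrm{Bin}(n,1-x)$ bounds $\sum_{y=1}^{n-1}P_{Y|X}(y|x)\log((n-y)/n)$ below by $(1-x^n)\log(1-x)-1$. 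Halving and adding produces exactly the two logarithmic terms and the constant $-\tfrac12-\tfrac12=-1$. The main obstacle is the bookkeeping in this lower bound: the probability weights $1-(1-x)^n-x^n$, $1-(1-x)^n$, $1-x^n$ line up only after carefully tracking which endpoints $y\in\{0,n\}$ are absorbed into which sum, and the sharp constant $-1$ survives only because I use the correction-free Stirling bound together with the previously established logarithmic-moment Lemma~\ref{lem:expect_log_binomial} rather than cruder estimates. The edge cases $x\in\{0,1\}$ should finally be checked under the convention $0\log 0=0$.
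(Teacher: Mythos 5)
Your proposal is correct and follows essentially the same route as the paper: the upper bound via dithering with $U\sim\mathcal{U}(0,1)$ and the Gaussian maximum-entropy principle, and the lower bound via the Stirling-type estimate $\binom{n}{y}\le \sqrt{n/(2\pi y(n-y))}\,\rme^{nH_2(y/n)}$ on the interior atoms, Lemma~\ref{lem:expect_log_binomial} applied to both $Y$ and the reflected variable, and the symmetry $n-Y\sim\mathrm{Bin}(n,1-x)$. The only cosmetic difference is that you absorb the binary-entropy terms into a pointwise nonnegative Bernoulli relative entropy, whereas the paper cancels them in expectation via Jensen's inequality applied to $H_2$; these are equivalent.
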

\begin{proof}
    For the upper bound, write
    \begin{align}
        H(Y|X=x) &= h(Y+U|X=x) \label{eq:apply_add_uniform} \\
        &\le \frac{1}{2}\log\left(2\pi \rme \left(n x(1-x)+\frac{1}{12}\right)\right),
    \end{align}
    where \eqref{eq:apply_add_uniform} follows from \cite[Lemma 17]{lapidoth2008capacity} with $U\sim {\cal U}[0,1]$ is independent of $Y$; and the last step follows from the Gaussian maximizes entropy principle. 

Next we prove the lower bound. First of all, compute
\begin{align}
    -H(Y|X=x) &= \expcnd{\log\left({n \choose Y}x^{Y}(1-x)^{n-Y}\right)}{X=x} \\
    &= \expcnd{\log{n \choose Y}}{X=x} + nx\log(x) \notag\\
    & \quad +n(1-x)\log(1-x) \\
    &\le \expcnd{\log{n \choose Y}}{X=x}-nH_2(x)
\end{align}
By using the bound $\binom{n}{Y}\le \sqrt{\frac{n}{2\pi Y(n-Y)}}\rme^{n H_2(\frac{Y}{n})}$ for $0<Y<n$ (see, e.g., \cite[Problem~5.8]{gallager1968information}), we can write:
\begin{align}
    &\expcnd{\log{n \choose Y}}{X=x} \notag\\
    &=\expcnd{\mathbbm{1}(0<Y<n)\log{n \choose Y}}{X=x} \\
    &\le (1-(1-x)^n-x^n)\frac{1}{2}\log\left( \frac{n}{2\pi}\right) \notag\\
    & \quad -\frac{1}{2}\expcnd{\mathbbm{1}(0<Y<n)\log(Y(n-Y))}{X=x} \notag\\
    & \quad +n\expcnd{H_2(\frac{Y}{n})}{X=x} \label{eq:bound_bin_coeff} \\
    &= -(1-(1-x)^n-x^n)\frac{1}{2}\log\left(2\pi n\right) \notag\\
    & \quad -\frac{1}{2}\expcnd{\mathbbm{1}(0<Y<n)\log\left(\frac{Y}{n}\frac{n-Y}{n}\right)}{X=x} \notag\\
    & \quad+n\expcnd{H_2 \left(\frac{Y}{n} \right)}{X=x} \\
     &= -(1-(1-x)^n-x^n)\frac{1}{2}\log\left( 2\pi n\right) \notag\\
     & \quad -\frac{1}{2}\expcnd{\mathbbm{1}(0<Y\le n)\log\left(\frac{Y}{n}\right)}{X=x} \nonumber\\
     &\quad-\frac{1}{2}\expcnd{\mathbbm{1}(0< Y\le n)\log\left(\frac{Y}{n}\right)}{X=1-x} \notag\\
     & \quad +n\expcnd{H_2(\frac{Y}{n})}{X=x}\label{eq:use_symmetry_bino_pmf} \\
     &\le -(1-(1-x)^n-x^n)\frac{1}{2}\log\left( 2\pi n\right) \notag\\
     & \quad -\frac{1}{2}\expcnd{\mathbbm{1}(0<Y\le n)\log\left(\frac{Y}{n}\right)}{X=x}\nonumber\\
     &\quad-\frac{1}{2}\expcnd{\mathbbm{1}(0< Y\le n)\log\left(\frac{Y}{n}\right)}{X=1-x}+nH_2(x), \label{eq:exp_log_bin_coeff}
\end{align}
where in~\eqref{eq:use_symmetry_bino_pmf} we used the channel symmetry $P_{Y|X}(y|x) = P_{Y|X}(n-y|1-x)$; and in the last step we used Jensen's inequality and $\expcnd{Y}{X=x}=nx$.

By using Lemma~\ref{lem:expect_log_binomial}, we have
\begin{align}
    &\expcnd{\mathbbm{1}(0<Y\le n)\log\left(\frac{Y}{n}\right)}{X=x} \notag\\
    & \quad \ge (1-(1-x)^n)\log(x)-1
\end{align}
and
\begin{align}
   & \expcnd{\mathbbm{1}(0<Y\le n)\log\left(\frac{Y}{n}\right)}{X=1-x} \notag\\
    & \quad \ge (1-x^n)\log(1-x)-1.
\end{align}
Therefore, we have
\begin{align}
    &\expcnd{\log{n \choose Y}}{X=x} \notag\\
    &\le -(1-(1-x)^n-x^n)\frac{1}{2}\log\left( 2\pi n\right) \notag\\
    & \quad -\frac{1}{2}(1-(1-x)^n)\log(x)-\frac{1}{2}(1-x^n)\log(1-x) \notag\\
    & \quad+1+nH_2(x)
\end{align}
and 
\begin{align}
    &-H(Y|X=x) \notag \\
    &\le -(1-(1-x)^n-x^n)\frac{1}{2}\log\left( 2\pi n\right) \notag\\
    & \quad -\frac{1}{2}(1-(1-x)^n)\log(x)-\frac{1}{2}(1-x^n)\log(1-x)+1.
\end{align}

\end{proof}

\end{appendices}

\end{document}